\let\llncssubparagraph\subparagraph
\let\subparagraph\paragraph
\let\subparagraph\llncssubparagraph
\tikzset{every picture/.style=thick,shorten >=0.5pt,>=stealth,auto,node distance=2.5cm,initial text=}
\let\oldexample\example
\renewcommand{\example}{\oldexample\normalfont}
\let\olddefinition\definition
\renewcommand{\definition}[1][]{%
\ifthenelse{\equal{#1}{}}{\olddefinition\normalfont}
{\olddefinition[\hbox{#1}]\normalfont}}
\newcommand{\var}{\mathit{Var}}
\newcommand{\VAR}{\mathit{VAR}}
\newcommand{\CONF}[1][\hat{\mathcal{S}}]{#1}
\newcommand\boldemph[1]{\emph{\textbf{#1}}}
\newcommand\Lfr[1]{#1^\bullet}
\newcommand\Gfr[1]{#1^\circledast}
\newcommand\modal[1]{{\bigcirc\mspace{-13mu}#1}\ }
\newcommand\new[0]{\reflectbox{\ensuremath{\mathsf{N}}}}
\newcommand\fresh[1]{\new #1.}
\newcommand\nombisim{\approx_{\sf nom}}
\newcommand\bisim{\approx}
\newcommand\nt[1]{{\em\color{blue}#1}}
\newcommand\ntnote[1]{\sidenote{\nt{#1}}}
\newcommand\hbnote[1]{\sidenote{\hb{#1}}}
\newcommand\fhml[0]{$\mu$FHML}
\newcommand\fv[0]{\mathsf{fv}}
\newcommand\nomeq[0]{\sim_{\sf nom}}
\newcommand\pos[0]{\mathrm{Pos}}
\newcommand\choice[2]{\left(\begin{matrix}{#1}\\{#2}\end{matrix}\right)}
\newcommand\regindex[0]{r_i}
\newcommand\hb[1]{{\em\color{red}#1}}
\DeclareMathOperator{\dom}{dom}
\DeclareMathOperator{\rng}{rng}
\DeclareMathOperator{\supp}{supp}
\newcommand\diam[1]{\langle#1\rangle}
\newcommand\sq[1]{[#1]}
\newcommand\sem[2][\xi]{\llbracket#2\rrbracket_{#1}}
\newcommand\semp[2][\pi\cdot\xi]{\llbracket\pi\cdot #2\rrbracket_{#1}}
\newcommand\Acal{\mathcal{A}}
\newcommand\Gcal{\mathcal{G}}
\newcommand\Lcal{\mathcal{L}}
\newcommand\Ncal{\mathcal{N}}
\newcommand\Pcal{\mathcal{P}}
\newcommand\Rcal{\mathcal{R}}
\newcommand\Scal{\mathcal{S}}
\newcommand\Ucal{\mathcal{U}}
\newcommand\Xcal{\mathcal{X}}
\newcommand\Ycal{\mathcal{Y}}
\newcommand\Zcal{\mathcal{Z}}
\newcommand\PPfn{\Pcal_{\rm fin}}
\newcommand\Atoms{\mathbb{A}}
\newcommand\gap{&\,\,}
\newcommand\orb[1][]{\mathcal{O}_{#1}}
\newcommand\orbs[1][]{\mathbb{O}_{#1}}
\newcommand\cls[1]{\mathcal{C}\!\ell_{#1}}
\renewcommand\S{N}
\newcommand\fix{\mathsf{fix}}
\newcommand\ar{\mathsf{ar}}
\newcommand\lfp{\mathsf{lfp}}
\newcommand\gfp{\mathsf{gfp}}
\newcommand{\swp}[2]{(#1\ #2)}
\newcommand{\move}[2]{(#1\mapsto#2)}
\newcommand{\F}[4][\xi]{F_{#2,#1,#3,#4}}
\newcommand{\clos}[0]{\mathsf{clos}}
\newcommand\fgame[0]{\Gcal_\circledast}
\newcommand\fstates[0]{\hat{\Scal}}
\newcommand\gorb[0]{{\mathsf{Orbs}}}
\newcommand\Perm[0]{{\mathsf{Perm}}}
\newcommand\lemmapoint[2]{Lemma~\ref{#1}(\ref{#1:#2})}
\newcommand\NO[0]{\mathsf{NO}}
\newcommand\update[0]{\mathsf{UPDATE}}
\newcommand\ndbb[0]{\mathsf{ndb}}
\renewcommand\ntnote[1]{}
\renewcommand\hbnote[1]{}
\author{Mohamed H. Bandukara}{Queen Mary University of London, United Kingdom}{m.h.bandukara@qmul.ac.uk}{}{supported by EPSRC DTP EP/R513106/1.}
\author{Nikos Tzevelekos}{Queen Mary University of London, United Kingdom}{nikos.tzevelekos@qmul.ac.uk}{[orcid]}{}
\title{A Logic for Fresh Labelled Transition Systems}
\authorrunning{M. H. Bandukara \and N. Tzevelekos}
\keywords{Nominal Transition Systems with Fresh Data,
Hennessy-Milner Logic,
Modal Mu-Calculus,
Nominal Sets,
Parity Games}
\begin{document}
\maketitle
%
%
%
%
%
%
%
\begin{abstract}
We introduce a Hennessy-Milner logic with recursion for Fresh Labelled Transition Systems (FLTSs).
These are nominal labelled transition systems which keep track of the history, i.e. of data values seen so far, and can capture fresh data generation.
In particular, FLTSs generalise the computations of Fresh-Register Automata, which 
in turn are one of the simplest classes of history-dependent automata operating on infinite input alphabets.
Each automaton comes equipped with a finite set of registers where it can store data values and compare them with others from the input. In addition, the automaton can accept an input just if it be fresh: not seen in the computation before.
The logic we introduce can express a variety of properties, such as the existence of an infinite path of distinct data values or the existence of a finite path where some taint property is violated.
We study the model checking problem and its complexity via reduction to parity games and, using nominal sets techniques, provide an exponential upper bound for it.
\end{abstract}

\section{Introduction}

Nominal Labelled Transition Systems (LTSs) can capture computational scenarios which
require unbounded sets of data values~\cite{automatanominalsets}.
In the most elementary case, data values are just \emph{names}, that is, atomic identifiers that can only be compared for equality. Nominal LTSs can be used,
for example, for modelling mobile processes~\cite{HDA1}
and computation with variables ~\cite{KAMINSKI1994329,GrumbergKS10}, and for verifying XML query languages~\cite{SCHWENTICK2007289}.
A particularly popular class of nominal LTSs are the ones generated by computations of ``regular'' automata over infinite alphabets, such as register automata~\cite{KAMINSKI1994329} and the (largely) equivalent nominal automata~\cite{automatanominalsets}. A register automaton is 
equipped with a finite set of states, and hence the nod to regularity, but also a finite set of registers where it can store data values and compare them with, and possibly update them with, data values from the input. This design allows register automata to capture languages over infinite alphabets even though their specification is finite.


Fresh-Register Automata (FRAs)~\cite{fra} expand on this design by having the option of allowing the automaton to accept a name just if it is globally fresh, that is, it has not appeared so far in the input.
FRAs
 can be used to capture computational models that use names and
name generation, e.g.\ (finitary) $\pi$-calculus processes~\cite{fra,DBLP:journals/jsa/BandukaraT23} or ML-like programs with mutable references~\cite{nominalgamesemantics}.
 They have been applied to program verification and system modelling, leading to verification tools ~\cite{imj,nominalgamesemantics,runtimeverira} and model learning techniques~\cite{RCDL,LearningRAFreshval}.
Verification of FRAs has mainly focussed on bisimulation equivalence~\cite{fra,BFRA,deqtheory}. In this work we are interested in furthering formal
verification for nominal computation, with an interest on FRAs and more generally history-dependent computation, in the direction of modal logics.



Hennessy-Milner Logic (HML) is a modal behavioural logic that captures branching-time properties of labelled transition systems. It was initially introduced in 1980 by Hennessy and Milner~\cite{DBLP:conf/icalp/HennessyM80} as an alternative exposition of observational equivalence~\cite{BradfieldS07}. Observational equivalence has a focus on testing whether two systems can simulate each other in a step-by-step manner,
whereas HML logic focuses on the expressiveness of a single system. Under certain assumptions, two
processes are equivalent just if they satisfy the same HML formulas~\cite{JOB}. The extension of HML with
recursion, called modal $\mu$-calculus, was introduced by Kozen~\cite{Kozen83}.
Here, we study a nominal extension thereof, where modalities
are allowed to include names and where quantifications over some/all names and some/all fresh names are catered for.
More specifically, our contributions include:
\begin{itemize}
\item We introduce the notion of \emph{fresh LTS (FLTS)}, which extends nominal LTS with history dependence.
\item We introduce a \emph{fresh HML with recursion}, called \fhml, which allows us to express recursive branching-time properties involving names and name-freshness.
\item We study the model checking problem for \fhml\ formulas on FLTSs. This leads us to developing parity games for \fhml\ and, through a series of reductions and using nominal techniques, devising a decision procedure for them.
\item We provide (exponential) upper bound calculations for \fhml\ parity games and \fhml\ model checking, in the latter case both on  general FLTSs and on FRAs.
\end{itemize}

\paragraph*{Related work}
HML for the $\pi$-calculus was first examined in~\cite{MilnerPW93}. For properties involving paths
of unbounded length, it is natural to examine recursive extensions thereof such as the modal $\mu$-calculus
and variants thereof~\cite{BradfieldS07}. The logic we introduced is based on the $\pi$-$\mu$-calculus of Dam~\cite{Dam03} and is related to
the recent work on nominal $\mu$-calculi of Klin and collaborators~\cite{KlinL17,KlinL19,KlinHD}.
The $\pi$-$\mu$-calculus~\cite{Dam03} specifically targets $\pi$-calculus processes, but the syntax used for recursion is the same as ours.
The study of a history-dependent nominal $\mu$-calculus in~\cite{KlinHD} is the work closest to ours. In a certain sense, our work continues and answers open problems from \emph{loc cit.}
 in particular by operating on history-dependent LTSs (and FRAs) and accommodating a vectorial version of recursion.
We also provide a more algorithmic view on model checking via parity games and produce complexity bounds. 

In the area of nominal techniques, the modal logics for nominal LTSs by Parrow et al.~\cite{NomLogics} can subsume our logic but do not allow for decidable model checking.
Logics for variants of register automata have been widely studied, e.g.~\cite{DBLP:conf/csl/Segoufin06,DBLP:journals/tocl/DemriL09,DBLP:conf/mfcs/FigueiraS09}.
The main focus of those works is on decidability, e.g.\ by restricting the number of registers allowed. No such restrictions are imposed on our logic, which is undecidable for satisfiability but our focus is instead on model checking.
There have also been a series of works on logics related to the $\pi$-calculus, e.g.~\cite{DBLP:journals/mscs/NicolaL08,DBLP:journals/cl/KoutavasH12,DBLP:conf/icalp/BergerHY08}.
Those works mainly study expressiveness and characterisation of variants of behavioural equivalence for $\pi$-calculus processes. 
Other related works, such as~\cite{DBLP:journals/iandc/Dam96,BradfieldStevens99}, expand on the modal $\mu$-calculus and study model checking, and show it to be decidable for finitary $\pi$-calculus processes. 



\section{Nominal sets and fresh labelled transition systems (FLTSs)}

We start by briefly introducing the nominal sets framework, which will be the basis of much of this paper. Intuitively, nominal sets are sets where names are baked in from the outset.

Let us fix $\Atoms$ to be a countably infinite set of \boldemph{names} (or \boldemph{data values}) upon which nominal sets are built. 
    A permutation $\pi$ on $\Atoms$ (i.e.\ a bijection $\pi:\Atoms\to\Atoms$) is considered \emph{finite} if $\supp(\pi) = \{a \in \Atoms\mid \pi(a) \neq a\}$ is finite.
 For example, the identity permutation $id$ is finite ($id(a)=a$ for all $a\in\Atoms$) and, for any $a,b\in\Atoms$, so is the permutation $\swp{a}{b}$ given by:\ 
$\swp{a}{b}(a) = b$, $\swp{a}{b}(b) = a$, and $\swp{a}{b}(x) = x$ for all $x\neq a,b$.  
 We denote the set of all finite permutations in $\Atoms$ as $\Perm$.
                       
\begin{definition}[\cite{PittsGabbay,Pitts}]\label{def:nomset}
    A \boldemph{nominal set} $\Xcal$ is a set $X$ equipped with an action from $\Perm$, i.e.\ a function $\_ \cdot \_: \Perm \times X \rightarrow X$ such that for all $x \in X$ and  $\pi, \pi' \in \Perm$:
    \[
    \pi \cdot (\pi' \cdot x) = (\pi \circ \pi') \cdot x,\ id \cdot x = x
    \]
    A set of names $\S\subseteq \Atoms$ \emph{supports} an element $x \in X$ if, for all $\pi \in \Perm$, if $\pi$ fixes all elements of $\S$ then $\pi\cdot x=x$. We stipulate that all elements of $X$ be \emph{finitely supported}, i.e.\ for all $x\in X$ there is some finite $\S\subseteq\Atoms$ that supports $x$. We write $\supp(x)$ for \emph{the support of $x$}, i.e.\ the least set supporting $x$. We say that $a$ \emph{is fresh for $x$}, and write $a\# x$, if $a\notin\supp(x)$.
\end{definition}

By abuse of notation, we may identify $\Xcal$ with its carrier set $X$. 
The action on a nominal set extends to products and subsets by $\pi \cdot(x,y) = (\pi \cdot x, \pi \cdot y)$ and, for any $S \subseteq \Xcal$, by $\pi \cdot S = \{\pi \cdot x \mid x \in S\}$. 
Additionally, if $\Xcal, \Ycal$ are nominal sets then so is $\Xcal \times \Ycal$, and the set of subsets of $\Xcal$ with finite support.
Accordingly, given a relation $R\subseteq\Xcal\times\Ycal$ and $\pi\in\Perm$, we have $\pi\cdot R=\{(\pi\cdot x,\pi\cdot y)\mid (x,y)\in R\}$. We call $R$ \emph{equivariant} if $\supp(R)=\emptyset$, i.e.\ for all $\pi$ and $(x,y)\in\Xcal\times\Ycal$, $(x,y)\in R\iff (\pi\cdot x,\pi\cdot y)\in R$. 

A notion in nominal sets we will be making extensive use of is that of an \emph{orbit}.

\begin{definition}
    Let $\Xcal$ be a nominal set. 
    Two elements $x, y\in \Xcal$ are \emph{nominally equivalent} (denoted $x \nomeq y$) if $\pi \cdot x = y$ for some $\pi\in\Perm$. Note $\nomeq$ is an equivalence relation.
    \\
For each $x \in \Xcal$, we define its \boldemph{orbit} $\orbs(x)$ to be its $\nomeq$-equivalence class; whereas for each finite $\S\subseteq\Atoms$ we let $\orbs[\S](x)$ be its \boldemph{$\S$-orbit}, i.e.\ its closure under permutations fixing $\S$:
    \begin{align*}
        \orbs(x) &= \{\pi \cdot x \mid \pi \in \Perm\} &
    \orbs[\S](x) &= \{\pi \cdot x \mid \forall a\in\S.\,\pi(a)=a\}
    \end{align*}
    Observe that $\orbs(x)=\orbs[\emptyset](x)$.
    Moreover, for any $S\subseteq \Xcal$ with finite support we let:
    \begin{itemize}
    \item
the {\em closure} of $S$ be\
        $\cls{}(S) = \{\pi\cdot x\mid x\in S\land\pi\in\Perm\}$;
      \item the \emph{set of orbits} of $S$ be\
   $
        \orb(S) = \{ \orbs[\supp(S)](x) \mid x \in S\}
    $.
      \end{itemize}
We say that a nominal set $\Xcal$ is \boldemph{orbit-finite} if $\orb(\Xcal)$ is finite.
If $\Xcal$ is orbit-finite, we let its \emph{register index} be\
    $
        \regindex(\Xcal) = \max\{|\supp(x)| \mid x \in \Xcal\}
        $.\ntnote{can we find a standard term for this, e.g. from Pitts}
\end{definition}

Note in particular that
$\orbs(x)=\cls{}(\{x\})$, for any $x\in\Xcal$, and
$\orb(\Xcal) = \{ \orbs(x) \mid x \in \Xcal\}$.
Moreover, for any finitely supported $S\subseteq\Xcal$ we can see by inspection that:
\[\bigcup\orb(S)=S\subseteq\cls{}(S)\subseteq\Xcal
  \]
and $\supp(\cls{}(S))=\emptyset$. Some useful albeit more technical properties follow.

\begin{restatable}{lemma}{nomlemma}\label{lem:nomlemma}
	Let $\Xcal,\Ycal$ be nominal sets and $X,Y \subseteq \Xcal$. 
	Then:
	\begin{enumerate}
		\item\label{nomlemma:orbs} $\{\orbs(x) \mid x \in X\} = \orb(\cls{}(X))$ and $|\orb(\cls{}(X))| \le |\orb(X)|$;
		\item\label{nomlemma:cls} if $X$ has empty support then $\cls{}(X \times Y) = X \times \cls{}(Y)$;
		\item\label{nomlemma:prod} if $\Xcal,\Ycal$ are orbit-finite then
		$|\orb(\Xcal \times \Ycal)| \le |\orb(\Xcal)|\cdot|\orb(\Ycal)|\cdot {n_1}^{n_2}\cdot(1 + \epsilon)$ where $\{n_1,n_2\}=\{\regindex(\Xcal), \regindex(\Ycal)\}\neq\{0\}$ and $n_1\geq n_2$,
		and $\epsilon\leq 1$ is a constant that vanishes for $n_2\geq 4$. 
	\end{enumerate}
\end{restatable}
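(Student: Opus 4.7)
For parts (\ref{nomlemma:orbs}) and (\ref{nomlemma:cls}) I would unfold definitions directly. For (\ref{nomlemma:orbs}), since $\supp(\cls{}(X))=\emptyset$ we have $\orb(\cls{}(X))=\{\orbs(y)\mid y\in\cls{}(X)\}$; as $\orbs(\pi\cdot x)=\orbs(x)$ for every $\pi,x$, this set equals $\{\orbs(x)\mid x\in X\}$. The inequality then follows from the surjection $\orb(X)\twoheadrightarrow\orb(\cls{}(X))$, $\orbs[\supp(X)](x)\mapsto\orbs(x)$, well-defined because each $\supp(X)$-orbit sits inside a unique $\emptyset$-orbit. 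For (\ref{nomlemma:cls}), the empty support of $X$ gives $\pi\cdot X=X$ for every $\pi$, so any $(\pi\cdot x,\pi\cdot y)\in\cls{}(X\times Y)$ lies in $X\times\cls{}(Y)$; conversely $(x,\pi\cdot y)=\pi\cdot(\pi^{-1}\cdot x,y)\in\cls{}(X\times Y)$ since $\pi^{-1}\cdot x\in X$.

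For (\ref{nomlemma:prod}), by the symmetry $|\orb(\Xcal\times\Ycal)|=|\orb(\Ycal\times\Xcal)|$ I assume without loss of generality $\regindex(\Xcal)=n_1\ge n_2=\regindex(\Ycal)$. Every orbit of $\Xcal\times\Ycal$ projects to some pair $(O_1,O_2)\in\orb(\Xcal)\times\orb(\Ycal)$; fixing representatives $x_0\in O_1$ and $y_0\in O_2$, the orbits above $(O_1,O_2)$ are in bijection with the orbits of $O_2$ under the stabiliser of $x_0$, and hence bounded by $|O_2/G|$ where $G=\{\rho\in\Perm\mid\rho|_{\supp(x_0)}=id\}$ (a subgroup of the stabiliser).

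The key counting step is to show $|O_2/G|\le P(n_1,n_2)$, where $P(n_1,n_2)=\sum_{i=0}^{n_2}\binom{n_2}{i}\binom{n_1}{i}\,i!$ counts partial injections from a size-$n_2$ set to a size-$n_1$ set. To each $y=\sigma\cdot y_0\in O_2$ I associate the partial injection $\phi_{y,\sigma}:\supp(y_0)\to\supp(x_0)$ with $\phi_{y,\sigma}(a)=\sigma(a)$ when $\sigma(a)\in\supp(x_0)$ (undefined otherwise). Any $\rho\in G$ fixes $\supp(x_0)$ pointwise and stabilises its complement setwise, so $\phi$ is unchanged under left-multiplication of $\sigma$ by $\rho$; conversely, any two presentations yielding the same partial injection differ by some $\rho\in G$ transporting one set of ``free'' images onto the other. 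This gives a surjection from partial injections onto $O_2/G$, so summing over $(O_1,O_2)$ yields $|\orb(\Xcal\times\Ycal)|\le|\orb(\Xcal)|\cdot|\orb(\Ycal)|\cdot P(n_1,n_2)$.

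It remains to prove the combinatorial bound $P(n_1,n_2)\le n_1^{n_2}(1+\epsilon)$ with $\epsilon\le 1$ always and $\epsilon=0$ for $n_2\ge 4$. For $n_2\le 3$ one expands $P$ as a polynomial in $n_1$ (e.g.\ $P(n_1,2)=n_1^2+n_1+1$ and $P(n_1,3)=n_1^3+2n_1+1$) and verifies each case against the constraint $n_1\ge n_2$. For $n_2\ge 4$ the recurrence $P(n_1,n_2+1)=n_1\,P(n_1-1,n_2)+P(n_1,n_2)$ supports an induction on $n_2$ with base case $P(n_1,4)=n_1^4-2n_1^3+5n_1^2+1\le n_1^4$ for $n_1\ge 4$; the inductive step reduces to the elementary inequality $(n_1-1)^{n_2-1}\le n_1^{n_2-1}$. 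I expect this combinatorial tail, rather than the orbit-theoretic reduction, to be the principal obstacle, as pinning down the exact constant $\epsilon$ across the boundary cases (in particular $n_1=n_2$) demands case-by-case care.
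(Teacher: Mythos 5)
Your proposal is correct, and its core orbit-theoretic reduction coincides with the paper's: both arguments fix representatives $x_0,y_0$ of a pair of orbits and reduce the count of product orbits lying over that pair to the number of partial injections from $\supp(y_0)$ into $\supp(x_0)$, yielding the same quantity $P(n_1,n_2)=\sum_{i=0}^{n_2}\choice{n_2}{i}\,n_1(n_1-1)\cdots(n_1-i+1)$ (the paper reaches it by decomposing the union $\bigcup_{\pi'}\orbs((x_0,\pi'\cdot y_0))$ over maps $\move{\vec b}{\vec a}$, you via $G$-orbits for the pointwise stabiliser $G$ of $\supp(x_0)$; your surjection argument, including the extension of $\sigma'\sigma^{-1}$ to an element of $G$, is sound). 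For parts (1) and (2) your arguments are the paper's up to replacing its injection $\orb(\cls{}(X))\to\orb(X)$ by the inverse surjection. Where you genuinely diverge is the final combinatorial estimate. The paper compares $P(n_1,n_2)$ term by term with $\sum_i\choice{n_2}{i}(n_1-1)^i=n_1^{n_2}$, first extracting the uniform factor $n_1/(n_1-1)$ (hence $\epsilon=1/(n_1-1)$) and then, for $n_2\geq 4$, splitting the sums at $i=5$ and disposing of the low-degree terms by an explicit polynomial comparison and a derivative computation. Your route — explicit closed forms for $n_2\leq 4$ (all of which I have checked: $P(n_1,2)=n_1^2+n_1+1$, $P(n_1,3)=n_1^3+2n_1+1$, $P(n_1,4)=n_1^4-2n_1^3+5n_1^2+1$) together with the recurrence $P(n_1,n_2+1)=n_1P(n_1-1,n_2)+P(n_1,n_2)$ and an induction whose step indeed reduces to $(n_1-1)^{n_2-1}\leq n_1^{n_2-1}$ — is more elementary and arguably cleaner; the side conditions line up correctly, since the constraint $n_1\geq n_2+1$ in the step is exactly what legitimises invoking the hypothesis at $(n_1-1,n_2)$. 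One cosmetic difference: the paper's uniform $\epsilon=1/(n_1-1)$ gives slightly more information for $n_2\leq 3$ than your blanket $\epsilon\leq 1$, but nothing downstream uses this.
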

\vspace{1em}

Orbit-finiteness essentially captures the notion of a set being ``finite up to permutation'', and it is central in the development of automata over nominal sets~\cite{automatanominalsets}.

We are going to define transition systems which retain a possibly unbounded history component but are otherwise finite. For that purpose, orbit-finiteness needs to be relaxed to accommodate histories of unbounded size. One part of our solution is to use configurations comprising a state and a history, where only the set of states is orbit-finite. The other part is to stipulate that the names appearing only in the history component of a configuration are in between them indistinguishable. This can be captured by the following notion.

\begin{definition}
  Given a nominal set $\Xcal$ and $x\in\Xcal$, for each $\S'\subseteq\S\subseteq\Atoms$, we say that the pair \emph{$(\S',\S)$ freshly supports $x$} if, for all $\pi\in\Perm$:
  \[ (\forall a\in\S'.\,\pi(a)=a\land\pi\cdot (\S\setminus\S')=\S\setminus\S')
    \implies \pi\cdot x=x.
  \]
\end{definition}

This notion of support refines ordinary support and, to the best of our knowledge, is novel. In particular, if $x$ is freshly supported by $(\S',\S)$ then it is supported by $\S$. Moreover, $x$ is freshly supported by $(\S,\S)$ iff it is supported by $\S$.



We can now move to our labelled transition systems. These are going to be nominal LTSs~\cite{automatanominalsets} (i.e.\ with states and labels being nominal sets) where each state is paired with a \emph{history} that includes the names (in the support) of the state. The history is updated in each transition to include any names of the label of said transition. 

\begin{definition}
A \emph{Fresh Nominal Labelled-Transition System (FLTS)}
is a tuple $\mathcal{L}=\langle \mathcal{S},L,\rightarrow\rangle$, where:
\begin{itemize}
\item $\mathcal{S}$ is a nominal set of \emph{states}
and $L$ is a nominal set of \emph{actions},
\item    ${\rightarrow}\subseteq \CONF\times L\times\CONF$ is an equivariant \emph{transition relation},
  \item 
  where
 $\CONF=\{(s,H)\in\Scal\times\PPfn(\Atoms)\mid \supp(s)\subseteq H\}$ is the nominal set of \emph{configurations};
\end{itemize}
and such that, for all $(s,H) \in \CONF$:
 \begin{itemize}
     \item 
 $\{(\ell, (s', H'))\mid (s, H) \xrightarrow{\ell} (s', H')\}$ is freshly supported by $(\supp(s),H)$,
\item
for all $(s, H) \xrightarrow{\ell} (s', H')$,\
$\supp(s')\subseteq \supp(s)\cup \supp(\ell)$ and 
$H'=H\cup \supp(\ell)$.
 \end{itemize}
 $\mathcal{L}$ is called \emph{fresh-orbit-finite} if $\cal S$ and $L$ are orbit-finite.
\end{definition}

This scenario applies to a range of examples, such as name-carrying processes (e.g.\ in the $\pi$-calculus) and name-manipulating programs (e.g.\ Java programs with objects). At the automata level, a member formalism is that of \emph{fresh-register automata}~\cite{fra}, which are the baseline paradigm in our study. As we are interested in behavioural properties, rather than language acceptance, we disregard initial and final states. 
The automata we define next operate on inputs $(t,a)\in\Sigma\times\Atoms$ where $t$ comes from a finite set of \emph{tags} and $a$ is a name.  

\begin{definition}\label{def:FRAs}
  An $r$-\boldemph{Fresh-Register Automaton ($r$-FRA)} is a tuple $\Acal=\langle Q,\mu,\Sigma,\delta\rangle$, where $r\in\mathbb{N}$ is the number of \emph{registers} in $\Acal$ and:
  \begin{itemize}
  \item $Q$ is a finite set of \emph{states}, and $\mu:Q\to\Pcal([1,r])$ an \emph{availability function};
  \item $\Sigma$ is a finite set of {tags}, and
 $\delta:Q\times\Sigma\times\{i,\Lfr{i},\Gfr{i}\mid 1\leq i\leq r\}\times Q$ is the transition relation;
    \end{itemize}
    subject to the following conditions:
    \begin{itemize}
    \item if $(q,t,i,q')\in\delta$ then $i\in\mu(q)$ and $\mu(q')\subseteq\mu(q)$;
    \item
      if $(q,t,\Lfr i,q')\in\delta$ or $(q,t,\Gfr i,q')\in\delta$ then $\mu(q')\subseteq\mu(q)\cup\{i\}$.
    \end{itemize}
    We shall write $q\xrightarrow{t,x}q'$ for $(q,t,x,q')\in\delta$.
  \end{definition}

  We can see that the labels appearing in FRA transitions are of the form $(t,x)$, where $t$ comes from a finite alphabet while $x$ is a variant of a register index $i$.
Assuming the automaton is at state $q$ and given $q\xrightarrow{t,x}q'$, 
depending on $x$ the automaton will perform one of the following actions before moving to state $q'$:
  \begin{itemize}
  \item $x=i$: read the input $(t,a)$, where $a$ is the name currently in register $i$\,---\,note here that $i$ needs to be available at $q$ (i.e.\ $i\in\mu(q)$);
  \item $x=\Lfr{i}$: read any input $(t,a)$ so long $a$ is not currently stored in any register (i.e.\ it is \emph{locally fresh}), and store $a$ in register $i$;
  \item $x=\Gfr{i}$: read any input $(t,a)$ so long $a$ is has not be seen before (i.e.\ it is \emph{globally fresh}), and store $a$ in register $i$.
  \end{itemize}
Formally, the semantics of an FRA can be given in terms of its derived FLTS.
The latter features states of the form $(q,\rho)$ where $q$ is a state and $\rho$ an assignment of values to available registers, i.e.\ the ones in $\mu(q)$.
It is good to bear in mind that $q$ has empty support.

\begin{definition}
  Given an $r$-FRA $\Acal=\langle Q,\mu,\Sigma,\delta\rangle$ we define its FLTS $\Gcal_\Acal=\langle \Scal_\Acal,\Sigma\times\Atoms,\to\rangle$ by setting (and writing configurations simply as $(q,\rho,H)$ instead of $((q,\rho),H)$):
  \begin{itemize}
    \item 
$ \Scal_\Acal = \{(q,\rho)\mid q\in Q\land \rho:\mu(q)\to\Atoms\text{ injective}\}
$;
\item
  $(q,\rho,H)\xrightarrow{t,a}(q',\rho',H')$ whenever there is $q\xrightarrow{t,x}q'$ such that, for some register $i$:
  \begin{itemize}
    \item 
$x=i$ and $a=\rho(i)$, while $\rho'=\rho$; or
\item
$x=\Lfr i$ and $a\notin\rng(\rho)$, while $\rho'=\rho[i\mapsto a]\upharpoonright\mu(q')$; or
\item
$x=\Gfr i$ and $a\notin H$, while $\rho'=\rho[i\mapsto a]\upharpoonright\mu(q')$;
\end{itemize}
where $\rho[i\mapsto a]$ is the update of $\rho$ mapping $i$ to $a$, and the operator $\upharpoonright$ performs domain restriction. In all cases, $H'=H\cup\{a\}$.
\end{itemize}
\end{definition}

It is not difficult to verify that $\Gcal_\Acal$ is indeed an FLTS. First, the sets $\Scal_\Acal$ and $\Sigma\times\Atoms$ are closed under permutation, hence they are nominal sets. The definition of the transition relation does not single out any specific names and is therefore equivariant. What is more important, the names appearing in the history but not in the registers of a configuration are indistinguishable to the automaton, and that makes the fresh-support condition true. Finally, in every transition, names in the final register assignment are sourced from those of the initial assignment and the label, and the history is correctly updated.

We conclude this section with some motivating examples. These include example of properties used in the literature and simple properties one may want to check in a verification scenario.


\begin{example}\label{example:freshPaths}
  Let us assume a unique tag, e.g.\ $\Sigma=\{\bullet\}$, which we suppress for brevity.  
  Consider the following properties (cf.~\cite{KlinHD}):
  \begin{align*}
    &\text{\em All paths, finite and infinite, contain pairwise distinct names.}&\tag{\textsc{\#All}}\\
    &\text{\em There is an infinite path where all names are distinct.}&\tag{\textsc{\#Path}}
\end{align*}
Thus, both properties specify paths of the forms\ $a_1 a_2 \dots a_n$\ or\ $a_1 a_2a_3 \dots$\
where, for each $a_i$, there is no $j < i$ such that $a_i = a_j$. Let us now examine the following basic FRAs.

\noindent
\begin{centering}
\begin{tikzpicture}
  \node[state,initial,minimum size=5mm]   (q0) {$q_0$};
  \node[minimum size=5mm]   (r0) [below=1mm of q0] {$\emptyset$};
  \path[->] (q0) edge [loop right] node [right] {$\Gfr{1}$} (q0);
\end{tikzpicture}
\begin{tikzpicture}
  \node[state,initial,minimum size=5mm]   (q0) {$q_0$};
  \node[minimum size=5mm]   (r0) [below=1mm of q0] {$\emptyset$};
  \node[state,minimum size=5mm]   (q1) [right of=q0] {$q_1$};
  \node[minimum size=5mm]   (r1) [below=1mm of q1] {$\{1\}$};
  \path[->] (q1) edge [loop right] node [right] {$\Lfr{1}$} (q1);
  \path[->] (q0) edge node [above] {$\Lfr{1}$} (q1);
\end{tikzpicture}
\begin{tikzpicture}
  \node[state,initial,minimum size=5mm]   (q0) {$q_0$};
  \node[minimum size=5mm]   (r0) [below=1mm of q0] {$\emptyset$};
  \node[state,minimum size=5mm]   (q1) [right of=q0] {$q_1$};
  \node[minimum size=5mm]   (r1) [below=1mm of q1] {$\{1\}$};
  \path[->] (q1) edge [loop right] node [right] {$\Gfr{1}$} (q1);
  \path[->] (q0) edge node [above] {$\Lfr{1}$} (q1);
\end{tikzpicture}

\end{centering}
\noindent
The sets beneath each state indicate available registers. The first FRA will engage in paths which always have distinct names, so it satisfies both \textsc{\#All} and \textsc{\#Path}. In fact, all three FRAs, from all possible configurations, satisfy \textsc{\#Path} as the locally fresh transitions can be realised by names that are in fact globally fresh. Moreover:
\begin{itemize}
\item the second FRA fails \textsc{\#All} from both states: for any $H,\rho,a,b$ with $a\#\rho, b$ and $i=0,1$, $(q_i,\rho,H)\xrightarrow{a}(q_1,[1\mapsto a],H\cup\{a\})\xrightarrow{b}(q_1,[1\mapsto b],H\cup\{a,b\})\xrightarrow{a}(q_1,[1\mapsto a],H\cup\{a,b\})$;
\item the third FRA satisfies \textsc{\#All} everywhere: even if going from $q_0$ to $q_1$ may involve an old name (already in history), the remaining names are all going to be globally fresh.
\end{itemize}
\end{example}

\ntnote{\scriptsize maybe present properties as above, giving them a name. It would be good to also include a taint example with freshness. Not sure about having FRAs for them as they take some space. If we do include FRAs then better not do Taint}
\begin{example}\label{example:sessions}
Let us assume a set of tags $\Sigma = \{\mathsf{S}, \mathsf{U}, \mathsf{T}\}$ which stand for $\mathsf{start}$, $\mathsf{use}$ and $\mathsf{terminate}$ respectively.
Consider an application in which a new session is created each time the application is launched.
This session can be used an arbitrary number of times, and then the session is terminated once the application is closed. 
We declare the following property for this application:
\[\begin{aligned}
	&\text{\em There is an infinite path that repeatedly starts a fresh session, arbitrarily uses it}\\
	&\text{\em and then terminates it.}
\end{aligned}\tag{\textsc{\#SUT}}\]
Once the session is terminated, it cannot be resumed or reactivated, a new session must be created.
A path for this will look as follows:
\[
	\mathsf{S}(s_1)\ \mathsf{U}(s_1)^*\ \mathsf{T}(s_1)\ \mathsf{S}(s_2)\ \mathsf{U}(s_2)^*\ \mathsf{T}(s_2)\ \mathsf{S}(s_3)\ \dots 
\]
where, for each $s_i$, there is no $j < i$ such that $s_i = s_j$.
Let us now consider this basic FRA:

\begin{centering}
 \begin{tikzpicture}
 	\node[state,initial,minimum size=5mm]   (q0) {$q_0$};
 	\node[minimum size=5mm]   (r0) [below=1mm of q0] {$\emptyset$};
 	\node[state,minimum size=5mm]   (q1) [right of=q0] {$q_1$};
 	\node[minimum size=5mm]   (r1) [below=1mm of q1] {$\{1\}$};
 	\path[->] (q1) edge [loop right] node [right] {$\mathsf{U}, {1}$} (q1);
 	\path[->] (q0) edge node [above] {$\mathsf{S}, \Gfr{1}$} (q1);
 	\path[->] (q1) edge [bend left] node [below] {$\mathsf{T}, 1$} (q0);
 \end{tikzpicture}
\end{centering}

\noindent
The FRA satisfies the property $\textsc{\#SUT}$ from $q_0$: for any $H, s_1,s_2$ with $s_1,s_2\notin H$, $H'=H\cup\{s_1\}$ and $H'' = H' \cup \{s_2\}$:

\begin{centering}
	 \begin{tikzpicture}[node distance=3cm]
		\node[minimum size=5mm]   (q0) {$(q_0, \emptyset, H)$};
		\node[minimum size=5mm]   (q1) [right of=q0] {$(q_1, [1\mapsto s_1], H')$};
		\node[minimum size=5mm]   (q2) [right of=q1] {$(q_0, \emptyset, H')$};
		\node[minimum size=5mm]   (q3) [right of=q2] {$(q_1, [1\mapsto s_2], H'')$};
		\node[minimum size=5mm]   (q4) [right of=q3] {$\dots$};
		\path[->] (q0) edge node [above] {$\mathsf{S}, s_1$} (q1);
		\path[->] (q1) edge [loop above] node [above] {$\mathsf{U}, s_1$} (q1);
		\path[->] (q1) edge node [above] {$\mathsf{T}, s_1$} (q2);
		\path[->] (q2) edge node [above] {$\mathsf{S}, s_2$} (q3);
		\path[->] (q3) edge [loop above] node [above] {$\mathsf{U}, s_2$} (q3);
		\path[->] (q3) edge node [above] {$\mathsf{T}, s_2$} (q4);
	\end{tikzpicture}
\end{centering}
\end{example}
%
%



\section{Fresh HML with recursion (\fhml)}\label{sec:logic}
In this section we present our logic \fhml.
We introduce the syntax and semantics of formulas, and show that the semantics is well defined. 
Recall the sets $\Atoms,\Sigma$ of {names} and tags, where the former is ranged over by $a, b,$ etc.
It will be useful to consider tags as having arbitrary finite {arities}, and not just unary as in FRAs, determined by a map $\ar:\Sigma\to\mathbb{N}$.
Let us also assume a set $\var$ of \emph{value variables} and a set $\VAR$ of \emph{recursion variables}. These are ranged over respectively by
$x,y$, etc.\ and
$X,Y$, etc.
Recursion variables have also {arities}, given by a map $\ar:\VAR\to\mathbb{N}$ (note function overload). Here and in the sequel, when writing variable sequences $\vec x$ we shall tacitly assume they contain distinct elements.

\begin{definition}\label{def:mhmlr}
The formulas of  fresh HML with recursion (\fhml) are:
\begin{align*} 
  \text{Formulas} \ni\ \phi &::=\ u=u\mid \phi\lor\phi\mid \lnot\phi\mid \bigvee\nolimits_{x}\phi
  \mid \fresh{x}\phi\mid \diam{\ell}\phi\mid (\mu X(\vec x).\phi)(\vec u)\mid X(\vec u)\\
  \text{Values} \ni\ u &::=\ x \mid a\\
  \text{Labels} \ni\ \ell &::=\ (t,\vec u)\qquad (\text{with }\ar(t)=|\vec u|)
\end{align*}
where
in $X(\vec u)$ and $(\mu X(\vec x).\phi)(\vec u)$ we have $|\vec u|=|\vec x|=\ar(X)$.
\end{definition}

A variable $X$ can be free or bound, with binding performed with $\mu X$.
 As usually, we impose that each recursion variable $X$ appears within an even number of negation operations from its binder, implying that there must be at least one appearance of every such $X$.  
Value variables $x$ are bound by $\bigvee_{x}$, $\fresh{x}$ and $\mu X(\vec x)$. 
We say that a formula is \boldemph{firm} if it has no free value variables, and \boldemph{closed} if it has no free recursion variables.

The size of a formula is calculated in such a way that it be no smaller than the size of its support, and moreover, be linearly related to a concise machine representation of the formula (e.g.\ using de Bruijn indices~\cite{DEBRUIJN1972381}).

\begin{definition}\label{def:formSize}
    The size for each formula is inductively defined as:
  \begin{align*}
    |u=v| &= 2 &
    |\bigvee\nolimits_x\phi| = |\fresh{x}\phi| = |\lnot\phi| &= 1+|\phi|\\
    |\diam{t, \vec u}\phi| &= 1 + |\vec u| + |\phi| &
    | X(\vec u)| &= 1+|\vec u|\\
    |\phi\lor\psi| &= 1+|\phi| + |\psi| &
    | (\mu X(\vec x).\phi)(\vec u)|&= 1+|\phi|+2|\vec u|
    \end{align*}
\end{definition}

The semantics of formulas is given with respect to a given FLTS: a formula is mapped to a subset of its configurations. We shall use the following grammars for value and recursion variable substitutions respectively:
\begin{align*}
\gamma\ &::=\ \varepsilon\ \mid\ \gamma,\{\vec a/\vec x\} 
&
\theta\ &::=\ \varepsilon\  \mid\ \theta,\{\sigma X(\vec x).\phi/X\}
\end{align*}
Thus, substitutions are sequences of mappings which can be on sequences of distinct value variables or single recursion variables.
We may write $\dom(\gamma)$ for the union of the domains of all elements in $\gamma$, while in $\gamma,\{\vec a/\vec x\}$ we assume that $|\vec a|=|\vec x|$ and, for each $i$, $x_i\notin\dom(\gamma)$. A formula $\phi$ acted on by a substitution $\gamma$ is written $\phi\{\gamma\}$ and defined recursively by:
\begin{align*}
  \phi\{\varepsilon\} &= \phi &
  \phi\{\gamma,\{\vec a/\vec x\}\} &=  (\phi\{\gamma\})\{\vec a/\vec x\}
\end{align*}
Similar conventions and definitions relate to recursion variable substitutions.
In addition, $\phi\{\theta\}$ is only ever going to be considered in cases where no variable capture be possible.

\begin{definition}\label{def:hdsemantics}
Given an FLTS $\mathcal{L}=\langle \mathcal{S},L,\rightarrow\rangle$ with $L\subseteq\{(t,\vec a)\in\Sigma\times\Atoms^*\mid \ar(t)=|\vec a|\}$, let us set\
$\mathcal{U}=\Pcal(\fstates)$\  and, for each $n \in \mathbb{N}$, \ 
    $\Ucal_n = \Atoms^n \to \Ucal$.\\
  A \emph{$\mathcal{U}$-variable assignment} is a finite partial map  $\xi:\VAR\rightharpoonup\bigcup_{n\in\mathbb{N}}\Ucal_n$ such that, for each $X\in\dom(\xi)$, $\xi(X)\in\mathcal{U}_{\ar(X)}$ ($=\Atoms^{\ar(X)} \to \Ucal$). Given a $\mathcal{U}$-variable assignment $\xi$, the semantics
$\sem{\phi}$
  of a firm formula $\phi$ with respect to $\xi$ is an element of $\Ucal$ given inductively by:
\begin{align*}
  \sem{a=b} &= \emptyset &
    \sem{\bigvee\nolimits_{x}\phi} &= \bigcup\nolimits_{a\in\Atoms}\sem{\phi\{a/x\}}\\
  \sem{a=a} &= \fstates &
    \sem{\fresh{x}\phi} &= \bigcup\nolimits_{a\#\phi,\xi} \{(s, H) \in \sem{\phi\{a/x\}} \mid a \notin H\}\\
  \sem{\phi_1\lor\phi_2} &= \sem{\phi_1}\cup\sem{\phi_2} &
  \sem{X(\vec a)} &= \xi(X)(\vec a)\\
  \sem{\lnot\phi} &= \fstates\setminus\sem{\phi} &
    \sem{(\mu X(\vec x).\phi)(\vec a)} &= (\lfp(\lambda f.\lambda\vec b. \sem[{\xi[X\mapsto f]}]{\phi\{\vec b/\vec x\}}))(\vec a)\\
      \sem{\diam{\ell}\phi} &\multicolumn{3}{l}{${}=\{(s, H)\in\fstates\mid \exists (s, H)\xrightarrow{\ell}(s', H').\,(s', H')\in\sem{\phi}\}$}
\end{align*}
When $\xi$ is empty, we may simply write $\sem[]{\phi}$.
\end{definition}

Note that the notation $\lambda\vec b. \sem{\phi\{\vec b/\vec x\}}$ is shorthand for the function $\{(\vec b,\sem{\phi\{\vec b/\vec x\}})\mid \vec b\in\Atoms^n\}$. 
More generally, the semantics is only defined on firm formulas.

For the semantics to be well-defined, we need to ensure that the required least fixpoints exist. From the Knaster-Tarski theorem,
this can be done by showing that $\Ucal_n$ is a complete lattice and that $\lambda f.\lambda \vec b.\sem[{\xi[X \mapsto f]}]{\phi\{\vec b / \vec x\}}$ is a monotone function. In particular, setting $F=\lambda f.\lambda \vec b.\sem[{\xi[X \mapsto f]}]{\phi\{\vec b / \vec x\}}$, we have\
$
\lfp(F) = \bigsqcap \{g\in \Ucal_n\mid 
F(g)\sqsubseteq g\}
$,
where order and least-upper-bounds in $\Ucal_n$ are given as expected:
\begin{align*}
f\sqsubseteq g &\iff \forall\vec a\in\Atoms^n.\,f(\vec a)\subseteq g(\vec a)
&
    \bigsqcap S &= \lambda\vec a.\,\bigcap\{f(\vec a)\mid f\in S \}
\end{align*}
for any $f,g\in\Ucal_n$ and $S\subseteq\Ucal_n$.

\begin{restatable}{lemma}{completelattice}\label{lem:completelattice}
$\Ucal_n$ is a complete lattice and,
for any $\phi, \xi$,
 the function $\lambda f.\lambda \vec b.\sem[{\xi[X\mapsto f]}]{\phi\{\vec b/\vec x\}}$ is monotone.
\end{restatable}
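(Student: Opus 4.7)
For the complete lattice claim, I would rely on the fact that $\Ucal=\Pcal(\fstates)$ is a complete lattice under $\subseteq$, with intersection as meet and union as join, as every powerset is. Since $\Ucal_n=\Atoms^n\to\Ucal$ is a function space into a complete lattice with pointwise ordering, it inherits a complete lattice structure: the displayed formula for $\bigsqcap S$ gives the meet, and joins are given dually by $\bigsqcup S=\lambda\vec a.\bigcup\{f(\vec a)\mid f\in S\}$. This is the standard pointwise lifting, and the complete lattice laws are verified componentwise by appeal to the powerset case.

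The monotonicity claim is the substantive part. A direct induction stating only ``monotone in $X$'' fails at the $\lnot\psi$ clause, so I would strengthen the statement and prove by structural induction on $\phi$ the polarity-tracking claim: for any recursion variable $X$, if $X$ occurs only positively (resp.\ only negatively) in $\phi$, then for all $\xi,\vec b$ the map $f\mapsto\sem[{\xi[X\mapsto f]}]{\phi\{\vec b/\vec x\}}$ is monotone (resp.\ antimonotone) in $f$ pointwise over $\vec b$. The cases $u=v$, $\phi_1\lor\phi_2$, $\bigvee_y\psi$, and $\diam{\ell}\psi$ follow from the IH composed with elementary monotonicity of union, existential image and the diamond operator. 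The case $\lnot\psi$ flips polarity while complementing in $\fstates$, matching the dual IH exactly. The base case $X(\vec a)$ yields $f(\vec a)\subseteq g(\vec a)$ directly, while $Y(\vec a)$ with $Y\neq X$ does not depend on $f$ or $g$. The original monotonicity claim is then recovered by observing that in the fixpoint body $X$ occurs only positively by the well-formedness restriction.

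The principal obstacle lies in the inner fixpoint case $(\mu Y(\vec y).\psi)(\vec a)$, which requires the semantics to be well-defined before monotonicity in $X$ can even be discussed. I would proceed by a mutual induction: by well-formedness, $Y$ occurs only positively in $\psi$, so the IH applied to $\psi$ with respect to $Y$ yields monotonicity of $h\mapsto\lambda\vec c.\sem[{\xi[X\mapsto f][Y\mapsto h]}]{\psi\{\vec c/\vec y\}}$, ensuring the least fixpoint exists by Knaster--Tarski. For monotonicity in $X$, applying the IH to $\psi$ (whose polarity in $X$ matches that in $\phi$) shows that for each fixed $h$ the inner operator is monotone (resp.\ antimonotone) in $f$. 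A standard fixpoint-comparison lemma then concludes the argument: if $F,G:L\to L$ are monotone on a complete lattice with $F(x)\sqsubseteq G(x)$ for all $x$, then $\lfp(F)\sqsubseteq\lfp(G)$; the antimonotone case is dual.

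A secondary subtlety arises in the $\fresh{x}\psi$ clause, because the set of admissible fresh names $\{a\mid a\#\psi,\xi[X\mapsto f]\}$ depends on $\supp(f)$ and may differ for $f$ and $g$. I would handle this by the standard equivariance argument for nominal semantics: given $(s,H)$ on the left-hand side, one can choose a single name $a$ that is simultaneously fresh for $\psi$, $\xi[X\mapsto f]$, $\xi[X\mapsto g]$, $s$ and $H$ (possible since all the relevant supports are finite), apply the IH to $\psi\{a/x\}$, and conclude that $(s,H)$ lies on the right-hand side as well; swapping fresh-name witnesses by a permutation establishes independence from the particular choice of $a$.
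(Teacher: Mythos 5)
Your proposal is correct and follows essentially the same route as the paper's proof: the paper also lifts the powerset lattice pointwise to $\Ucal_n$ (verifying lub/glb explicitly), and proves monotonicity by a polarity-tracking induction (phrased there as "even/odd number of negations" enclosing $X$), with the negation case flipping polarity and the inner fixpoint case handled by comparing the sets of prefixpoints $\{g \mid F(g)\sqsubseteq g\}$ — which is exactly the content of your fixpoint-comparison lemma. Your extra care in the $\fresh{x}\psi$ clause about the admissible fresh names depending on $\supp(f)$ is a point the paper glosses over, but it does not change the structure of the argument.
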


Finally, we show that for any \fhml\ formula $\phi$ and $\Ucal$-variable assignment $\xi$,  $\supp(\sem{\phi}) \subseteq \supp(\phi) \cup \supp(\xi)$. 
To do this, we show that the function $\lambda \phi, \xi.\sem{\phi}$ is equivariant, and then use the equivariance principle.

\begin{restatable}{lemma}{semanticsequivariant}\label{lem:semanticsequivariant}
    The function $\lambda \phi, \xi.\sem{\phi}$ is equivariant.
    Therefore, 
for any formula $\phi$ and $\Ucal$-variable assignment $\xi$, $\supp(\sem{\phi}) \subseteq \supp(\phi) \cup \supp(\xi)$.
\end{restatable}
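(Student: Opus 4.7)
The plan is to establish equivariance by structural induction on $\phi$, and then derive the support bound as a direct corollary via the standard equivariance principle.

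For equivariance, I would show that for any $\pi\in\Perm$, $\pi\cdot\sem{\phi}=\sem[\pi\cdot\xi]{\pi\cdot\phi}$. The base cases are straightforward: for $u=v$, the semantics is either $\fstates$ or $\emptyset$ depending on syntactic equality, and both are preserved by $\pi$ (the former since $\fstates$ has empty support as the carrier of the nominal set, the latter trivially). For $X(\vec a)$, we have $\pi\cdot\sem{X(\vec a)}=\pi\cdot\xi(X)(\vec a)=(\pi\cdot\xi)(X)(\pi\cdot\vec a)$ by the definition of the permutation action on function spaces. The Boolean and modal cases are handled by the induction hypothesis together with the fact that $\pi$ distributes over unions and complements, and that the transition relation is equivariant, so $(s,H)\xrightarrow{\ell}(s',H')$ iff $\pi\cdot(s,H)\xrightarrow{\pi\cdot\ell}\pi\cdot(s',H')$.

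For $\bigvee_x\phi$ and $\fresh{x}\phi$, the key auxiliary fact is that permutation commutes with value substitution: $\pi\cdot(\phi\{a/x\})=(\pi\cdot\phi)\{\pi(a)/x\}$. Using this and re-indexing the union $\bigcup_{a\in\Atoms}$ via the bijection $a\mapsto\pi(a)$ gives the equivariance. The freshness side-condition $a\notin H$ in $\fresh{x}\phi$ is preserved because $a\notin H \iff \pi(a)\notin\pi\cdot H$; the constraint $a\#\phi,\xi$ likewise translates to $\pi(a)\#\pi\cdot\phi,\pi\cdot\xi$.

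The main obstacle will be the fixpoint case $(\mu X(\vec x).\phi)(\vec a)$. Here I need that permutations commute with $\lfp$. Setting $F=\lambda f.\lambda\vec b.\sem[\xi[X\mapsto f]]{\phi\{\vec b/\vec x\}}$ and its $\pi$-translate $F'=\lambda f.\lambda\vec b.\sem[(\pi\cdot\xi)[X\mapsto f]]{(\pi\cdot\phi)\{\vec b/\vec x\}}$, the inductive hypothesis applied to $\phi$ yields $\pi\cdot F(f)=F'(\pi\cdot f)$, i.e.\ $F$ and $F'$ are conjugate via the $\pi$-action on $\Ucal_{\ar(X)}$. Since this action is an order-isomorphism on the complete lattice (trivially, as $f\sqsubseteq g\iff \pi\cdot f\sqsubseteq\pi\cdot g$ by Lemma~\ref{lem:completelattice} and the definition of $\sqsubseteq$), it carries $\lfp(F)$ to $\lfp(F')$. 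Applying both sides at $\vec a$ and using $(\pi\cdot\lfp(F))(\pi\cdot\vec a)=\pi\cdot(\lfp(F)(\vec a))$ closes the case.

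The support bound is then immediate from equivariance via the usual nominal principle: the pair $(\phi,\xi)$ has support contained in $\supp(\phi)\cup\supp(\xi)$ (support behaves correctly on products), so any $\pi$ fixing $\supp(\phi)\cup\supp(\xi)$ pointwise also fixes $(\phi,\xi)$, and by equivariance fixes $\sem{\phi}$. Hence $\supp(\phi)\cup\supp(\xi)$ supports $\sem{\phi}$, and the least support is contained in it.
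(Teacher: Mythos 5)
Your proposal is correct and follows essentially the same route as the paper's own proof: a structural induction establishing $\pi\cdot\sem{\phi}=\sem[\pi\cdot\xi]{\pi\cdot\phi}$, with the fixpoint case handled by conjugating the functional with the $\pi$-action and the support bound then read off from the standard equivariance principle. Your framing of the $\mu$-case via the $\pi$-action being an order-isomorphism that carries $\lfp(F)$ to $\lfp(F')$ is a slightly more explicit justification of the step the paper performs by an inline chain of rewrites, but the underlying argument is identical.
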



A direct consequence of the latter result is that $\sem{\phi}$ is finitely supported, and so is the function $\lambda f.\lambda \vec b.\sem[{\xi[X \mapsto f]}]{\phi\{\vec b/\vec x\}}$.

\begin{restatable}{theorem}{semanticswelldefined}\label{thm:semanticswelldefined}
The semantics of \cref{def:hdsemantics} is well defined. Its image is within the subset of $\Ucal$ of finitely-supported sets of configurations.
\end{restatable}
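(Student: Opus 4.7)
The plan is to proceed by structural induction on $\phi$, simultaneously establishing that each clause of \cref{def:hdsemantics} yields a well-defined subset of $\fstates$ and that this subset is finitely supported. For the induction to go through on the case $(\mu X(\vec x).\phi)(\vec a)$, we observe that value-variable substitution $\phi\{\vec b/\vec x\}$ does not alter the structure of $\phi$ as measured by \cref{def:formSize}, so applying the inductive hypothesis to $\phi\{\vec b/\vec x\}$ is legitimate. The same remark covers the uses of $\phi\{a/x\}$ in the clauses for $\bigvee_x\phi$ and $\fresh{x}\phi$.

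The base cases ($u=v$ and $X(\vec u)$) and the direct set-theoretic cases (disjunction, negation, diamond, $\bigvee_x\phi$) are straightforward: each clause applies union, complement, or set comprehension to objects that, by the IH, are subsets of $\fstates$, and so produces a subset of $\fstates$. The clause for $\fresh{x}\phi$ is similarly well defined: for each name $a$ with $a\#\phi,\xi$ the IH gives $\sem{\phi\{a/x\}}\subseteq\fstates$, and the prescribed union over such $a$ of the subsets $\{(s,H)\in\sem{\phi\{a/x\}}\mid a\notin H\}$ is therefore again a subset of $\fstates$.

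The one clause requiring more than bookkeeping is $(\mu X(\vec x).\phi)(\vec a)$. Here the IH tells us that $\sem[\xi[X\mapsto f]]{\phi\{\vec b/\vec x\}}$ is a well-defined element of $\Ucal$ for every $f\in\Ucal_n$ and $\vec b\in\Atoms^n$; hence
\[
F \;=\; \lambda f.\,\lambda\vec b.\,\sem[\xi[X\mapsto f]]{\phi\{\vec b/\vec x\}}
\]
is a well-defined endofunction on $\Ucal_n$. By \cref{lem:completelattice}, $\Ucal_n$ is a complete lattice and $F$ is monotone, so Knaster--Tarski yields $\lfp(F)\in\Ucal_n$; evaluating at $\vec a$ delivers the required subset of $\fstates$. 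This is the step I expect to be the main obstacle, and it is precisely the content that \cref{lem:completelattice} has been proved to discharge.

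Finite support of the image follows uniformly from \cref{lem:semanticsequivariant}: since the map $\lambda\phi,\xi.\sem{\phi}$ is equivariant, we have $\supp(\sem{\phi})\subseteq\supp(\phi)\cup\supp(\xi)$, and both $\phi$ and $\xi$ are finitely supported (the latter because $\xi$ is a finite partial map into finitely-supported elements of $\bigcup_n\Ucal_n$, supplied inductively). Therefore every $\sem{\phi}$ lies in the sub-lattice of finitely-supported subsets of $\fstates$, completing both claims.
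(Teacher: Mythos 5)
Your proof is correct and follows essentially the same route as the paper's: the only substantive obstacle is the existence of the least fixpoints, which both you and the paper discharge via \cref{lem:completelattice} and Knaster--Tarski, and finite support of the image is obtained in both cases from the equivariance result \cref{lem:semanticsequivariant}. The extra scaffolding you add (the explicit structural induction and the remark that value substitution preserves the size measure) is a harmless elaboration of the same argument.
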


\begin{remark}[Derived connectives]\label{rem:derived}
  We can define standard dual connectives for \fhml\ as:
  \begin{align*}
    \phi\land\psi &\equiv \lnot(\lnot\phi\lor\lnot\psi) &
    \sq{\ell}\phi &\equiv \lnot\diam{\ell}\lnot\phi &                        (\nu X(\vec x).\phi)(\vec u) &\equiv \lnot(\mu X(\vec x).\lnot\phi\{\lnot X/X\})(\vec u)
  \end{align*}
  and write $u\neq v$ for $\lnot(u=v)$. The semantics of the above can be derived from \cref{def:hdsemantics} as expected. For instance:\
$
  \sem{(\nu X(\vec x).\phi)(\vec a)} = (\gfp(\lambda f.\lambda\vec b. \sem[{\xi[X\mapsto f]}]{\phi\{\vec b / \vec x\}}))(\vec a)
  $.\\
  In the same vein,
  we show below that fresh selection is its self-dual:
  $\sem{\lnot\fresh{x}\phi}=\sem{\fresh{x}\lnot\phi}$.
\end{remark}

As in~\cite{PittsGabbay,Pitts}, we can show that $\fresh{x}\phi$ can be taken as meaning ``for any fresh $x$, $\phi$ holds'' or ``there is some fresh $x$ such that $\phi$ holds''. 

\begin{lemma}
  For any $\phi,\xi$ and configuration $(s,H)$,
  \begin{align}
    (s,H)\in\sem{\fresh{x}\phi} 
    \iff \exists a\#\phi,\xi,H.\,(s,H)\in\sem{\phi\{a/x\}}\label{NEWone}\\
\iff \forall a\#\phi,\xi,H.\,(s,H)\in\sem{\phi\{a/x\}}\label{NEWtwo}
  \end{align}
Therefore, we have $\sem{\lnot\fresh{x}\phi}=\sem{\fresh{x}\lnot\phi}$.
\end{lemma}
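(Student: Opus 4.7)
The plan is to work directly from the definition of $\sem{\fresh{x}\phi}$ and combine it with the equivariance of the semantics (\cref{lem:semanticsequivariant}) to obtain both the existential and universal characterisations, then derive self-duality.

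First, I would unfold the definition: $(s,H)\in\sem{\fresh{x}\phi}$ means there is some $a\#\phi,\xi$ with $(s,H)\in\sem{\phi\{a/x\}}$ and $a\notin H$. Since $H$ is a finite set of atoms, $\supp(H)=H$, so $a\notin H \iff a\# H$. Thus condition (\ref{NEWone}) is immediate: it is just the definition rewritten in terms of freshness.

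For (\ref{NEWtwo}), I would prove the nontrivial implication: if there is \emph{some} $a\#\phi,\xi,H$ with $(s,H)\in\sem{\phi\{a/x\}}$, then \emph{every} such $b\#\phi,\xi,H$ satisfies $(s,H)\in\sem{\phi\{b/x\}}$. Fix such $a$ and $b$ and consider the swap $\pi=\swp{a}{b}$. By \cref{lem:semanticsequivariant}, $\pi\cdot\sem[\xi]{\phi\{a/x\}}=\sem[\pi\cdot\xi]{\pi\cdot(\phi\{a/x\})}$. From $a,b\#\phi$ we get $\pi\cdot\phi=\phi$, and hence $\pi\cdot(\phi\{a/x\})=\phi\{b/x\}$; from $a,b\#\xi$ we get $\pi\cdot\xi=\xi$. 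So $\pi\cdot\sem{\phi\{a/x\}}=\sem{\phi\{b/x\}}$, and consequently $\pi\cdot(s,H)\in\sem{\phi\{b/x\}}$. It remains to observe $\pi\cdot(s,H)=(s,H)$: the FLTS condition $\supp(s)\subseteq H$ together with $a,b\notin H$ gives $a,b\#s$, while $a,b\notin H$ gives $\pi\cdot H=H$ directly. Thus $(s,H)\in\sem{\phi\{b/x\}}$, establishing (\ref{NEWtwo}). The main delicate point is ensuring we have enough freshness to collapse all permutation actions, and here the FLTS support condition on configurations is what makes $(s,H)$ invariant.

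Finally, for the self-duality $\sem{\lnot\fresh{x}\phi}=\sem{\fresh{x}\lnot\phi}$, I would chain the two characterisations: $(s,H)\notin\sem{\fresh{x}\phi}$ iff, by (\ref{NEWone}), no $a\#\phi,\xi,H$ satisfies $(s,H)\in\sem{\phi\{a/x\}}$, i.e. every such $a$ satisfies $(s,H)\in\sem{\lnot\phi\{a/x\}}=\sem{(\lnot\phi)\{a/x\}}$; since $\supp(\lnot\phi)=\supp(\phi)$, this is the $\forall$-condition $(\ref{NEWtwo})$ applied to $\lnot\phi$, which is equivalent to $(s,H)\in\sem{\fresh{x}\lnot\phi}$.
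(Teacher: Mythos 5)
Your proposal is correct and follows essentially the same route as the paper: (\ref{NEWone}) is read off from the definition, (\ref{NEWtwo}) is obtained via the swap $\swp{a}{b}$ combined with equivariance of the semantics and the configuration condition $\supp(s)\subseteq H$ to get $\swp{a}{b}\cdot(s,H)=(s,H)$, and self-duality follows by chaining the two characterisations through negation. Your write-up just makes the equivariance step (\cref{lem:semanticsequivariant}) and the role of the FLTS support condition more explicit than the paper does.
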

\begin{proof}
  We note that \eqref{NEWone} is a restatement of the definition of the semantics for $\fresh{x}\phi$. For \eqref{NEWtwo}
it suffices to show the left-to-right inclusion, as the other one follows from there always being some $a\#\phi,\xi,H$.
  For any $(s,H)$:
  \begin{align*}
(s,H)\in\sem{\lnot\fresh{x}\phi} &
\iff\exists a\#\phi,\xi,H.\,(s,H)\in\sem{\phi\{a/x\}}\\
&\implies\exists a\#\phi,\xi,H.\forall b\#\phi,\xi,H.\,(s,H)=\swp{a}{b}\cdot(s,H)\in\sem{\phi\{b/x\}}\\
 &\implies \forall b\#\phi,\xi,H.\,(s,H)\in\sem{\phi\{b/x\}}
  \end{align*}
  For the final claim of this Lemma, for each $(s,H)$ we have:
    \begin{align*}
    (s,H)\in\sem{\lnot\fresh{x}\phi} &
    \iff (s,H)\notin\sem{\fresh{x}\phi} 
\iff\forall a\#\phi,\xi,H.\,(s,H)\notin\sem{\phi\{a/x\}}\\
&\iff\forall a\#\phi,\xi,H.\,(s,H)\in\sem{\lnot\phi\{a/x\}}
\iff (s,H)\in\sem{\fresh{x}\lnot\phi}
  \end{align*}
\end{proof}

\begin{remark}[Comparison with logics of Dam~\cite{Dam03} and Klin et al.~\cite{KlinL19,KlinHD}]
  The $\pi$-$\mu$-calculus is a recursive modal logic for $\pi$-calculus, accompanied with a proof system that is sound and complete for finitary processes~\cite{Dam03}.
  Its connectives are very similar to ours and, in fact, our syntax for the recursive connectives is taken from \emph{loc cit}. Being a logic specifically for the $\pi$-calculus, it employs connectives such as free/bound input and output, in addition to modal connectives that specify the communication channel. Bound I/O in particular makes fresh-name quantification redundant. 
  \\
  The $\mu$-calculi with atoms of~\cite{KlinL19,KlinHD} are the probably closest to our logic. While those logics only feature recursion variables of nullary arities, the vectorial version studied in~\cite{KlinL19} can capture scenarios where non-zero arities are needed. The history-dependent logic of~\cite{KlinHD} captures freshness with this connective and corresponding semantics (in our terminology):
  \begin{align*}
\phi &::= \cdots \mid \#v\quad \text{(for value $v$)}
    &
\sem{\#a} &= \{(s,H)\mid a\notin H\}
  \end{align*}      
The analogy is not fully exact as our semantics is over history-dependent LTSs, while in \emph{loc cit.} on nominal LTSs with the history being attached to the semantics function\,---\,but that should be sufficiently close for our discussion here. Using $\#u$, we can define fresh quantification:
  \[
\fresh{x}\phi \equiv \bigvee\nolimits_x\#x\land(x\neq v_1)\land\cdots\land(x\neq v_n)\land\phi
\]
where $v_1,\dots,v_n$ is an enumeration of all (free) values in $\phi$.
But one can be even more expressive. E.g.\ the following formula stipulates that there are exactly two names in the current history:
\[
  \bigvee\nolimits_{x,y}x\neq y\land\lnot(\#x\lor\#y)\land
  \bigwedge\nolimits_{z}z\neq x,y\implies\#z
\]
We find this added expressivity not necessary for specifying FLTSs built e.g.\ from FRAs, as the exact size of the history is in general not accessible. The weaker history-dependence coming from the fresh-quantifier allows for a somewhat simpler notion of bounded parity game in the next section, compared to~\cite{KlinHD}. 
\end{remark}

By reduction from satisfiability for the $\mu$-calculus with atoms~\cite{KlinL19}, we know that the satisfiability problem is undecidable for \fhml. In the next section we shall study instead the model checking problem, show it is decidable and calculate an upper bound for its complexity.
The following definition will be of use then.

\begin{definition}
	Given a \fhml\ formula $\phi$, its bounding depth $\|\phi\|$ is given recursively by:	
	\begin{align*}
		\|u = v\| &= \|X(\vec u)\| = 0  &\|\lnot \phi\| &= \|\diam{\ell}\phi\| = \|\phi\| &\|(\mu X(\vec x).\phi)(\vec u)\| = \|\phi\| + |\vec x|\\
		\|\phi \lor \psi\| &= \max(\|\phi\|, \|\psi\|) & \|\bigvee\nolimits_x \phi\| &= \|\fresh{x} \phi\| = 1 + \|\phi\|   
	\end{align*}
\end{definition}

We conclude this section by revisiting \cref{example:freshPaths,example:sessions} and building formulas for them.

\begin{example} The property \textsc{\#All}
 can be represented by the formula $\lnot \phi$, where:
\[
	\phi = \mu X.\bigvee\nolimits_x\diam{x}(X\lor\mu Y.\bigvee\nolimits_y \diam{y}(Y\lor x=y))
\]
Note we omit tags as they are not used here.
The subformula $\mu Y.\bigvee\nolimits_y \diam{y}(Y\lor x=y)$ represents all finite paths that eventually accept an $x$, that is, for each call of $Y$ we accept any name $y$ and either  we repeat $Y$ or, in case $x = y$, we stop. 
%
Similarly, $\phi$ represents all finite paths where eventually, some $a$ is repeated, that is, for each call of $X$ we accept any name $x$ and either we 
repeat $X$, or $x$ is eventually accepted again (using $\mu Y.\bigvee\nolimits_y \diam{y}(Y\lor x=y)$).

Thus, $\lnot \phi$ represents all paths such that no name is ever repeated.
Note that, by negation elimination, $\lnot \phi$ can be written as:
$
\nu X.\bigwedge\nolimits_x\sq{x}( X\land \nu Y.\bigwedge\nolimits_y \sq{y}(Y\land x\neq y))
$.
\end{example}

\begin{example}
Let us consider \cref{example:sessions}, where we start, repeatedly use, and then terminate a session that cannot be resumed or reactivated. 
This can be represented by:
\[
	\psi = \nu X. \fresh{s}\diam{\mathsf{S}, s}.\mu Y.(\diam{\mathsf{U}, s}Y \lor \diam{\mathsf{T}, s}X)
\]
where $\mathsf{S,U,T}$ stand for start, use and terminate respectively.
We begin by entering the recursive $\nu$-variable $X$, where each time we enter $X$, we select a fresh session $s$ and perform the action $\mathsf{start}(s)$. 
From here, we enter the $\mu$-variable $Y$ where each time we enter, we repeatedly perform the action $\mathsf{use}(s)$ and eventually perform the action $\mathsf{terminate}(s)$. 
Once $s$ has been terminated, we repeat the process by re-entering the $\nu$-variable $X$, starting some new session $s'$. 
Note that the use of the $\mu$-variable $Y$ ensures that the session will eventually terminate, and the use of the $\nu$-variable $X$ ensures that the property is never violated.
\end{example}

\section{Parity games and model checking}\label{sec:modelchecking}
\newcommand\ndb{qweqwe}
\renewcommand\fgame{\Gcal}
\newcommand\wellbound[1]{\triangleleft_{#1}}

Given an FLTS $L$, a configuration $(s_0,H_0)$ in $L$ and a firm closed formula $\phi_0$,
the \emph{model checking} problem asks whether $(s_0,H_0)\in\sem[]{\phi_0}$. To solve this problem, in this section we introduce parity games that capture satisfiability of formulas in given configurations. To establish this connection it will be useful to consider formulas $\phi_0$ which may not be closed.
We start off with a general definition of parity games in nominal sets.

\begin{definition}\label{def:paritygames}
  A \boldemph{(nominal) parity game} is a tuple $\Gcal=\langle V,V_A,V_D,E,\Omega\rangle$ where $\langle V,E\rangle$ is a directed graph, with vertices called \emph{positions} and edges called \emph{moves}, $V=V_A\uplus V_D$ is a partitioning of positions into \emph{Attacker} and \emph{Defender} ones respectively, and $\Omega:V\to\{0,\dots,d\}$ is a \emph{ranking function} for some maximum rank $d\in\mathbb{N}$.\\
  In addition, $V$ is a finitely supported subset of some nominal set $\hat V$ and, for all $P_1,P_2\in V$ and $P_1',P_2'\in\hat V$, if $(P_1,P_1')\nomeq (P_2,P_2')$ then:
  \begin{itemize}
  \item if $P_1\in V_\chi$ then $P_2\in V_\chi$ (for $\chi\in\{A,D\}$), and $\Omega(P_1)=\Omega(P_2)$;
  \item if $(P_1,P_1')\in E$ then $(P_2,P_2')\in E$.
  \end{itemize}
\end{definition}

Given a game $\Gcal$, we may write $\pos(\Gcal)$ for its set of positions $V$.
Intuitively, the nominal conditions in \cref{def:paritygames} say that the game seen from two nominally equivalent positions is the same up to permutation.
Note that $\Gcal$ is not equivariant in general, and its support is the support of $V$.

Starting from a root position $P_0\in V$, a \emph{play} is a path within $\Gcal$, i.e.\ a finite or infinite sequence $\Pi=P_0,P_1,\dots$ such that $(P_i,P_{i+1})\in E$ for each $i$. A play is \emph{complete} if it ends in a leaf. Defender (Attacker) wins a complete play if the last position in it belongs to Attacker (resp.~Defender). Defender (Attacker) wins an infinite play if the highest position rank in it is even (resp.~odd). We say that  $P_0$ is a \emph{winning position} for Defender (or Defender \emph{wins from} $P_0$) if there is a partial function $\Theta: V_D\rightharpoonup E$ (called a \emph{strategy}) such that Defender wins  every complete or infinite play $P_0,P_1,\dots$ such that for each $P_i\in V_D$ we have $\Theta(P_i)=(P_i,P_{i+1})$.

The development of parity games for \fhml\ follows
parity games for the $\mu$-calculus, cf.~\cite{EmersonJutla91}. We start by
considering the negation-free fragment of our logic. In the sequel, we shall consider the derived connectives of \cref{rem:derived} as primitives (and exclude negation).

\begin{definition}
  A formula is called \emph{negation-free}
  if it contains no negations. For each formula $\phi$, we
write $!\phi$ for its negation-free variant. This can be constructed by induction using standard duality rules for pushing negation inside boolean, modal and recursion constructors, along with the rules: \
$ !(\lnot\lnot\phi)={!\phi},\
        !(\lnot \fresh{x} \phi) = \fresh{x} !(\lnot \phi) 
$.
\end{definition}

\begin{restatable}{lemma}{negfreesize}\label{lem:negfreesize}
    For any formula $\phi$, $|!(\phi)|$ is bounded by $|\phi| - n$, where $n$ is the number of negations in $\phi$. 
\end{restatable}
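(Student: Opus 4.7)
The bound $|!\phi|\le |\phi|-n(\phi)$, where $n(\phi)$ denotes the number of negations in $\phi$, is proved by induction on $\phi$ ordered lexicographically by the pair $(m(\phi),\,|\phi|)$, in which $m(\phi)$ counts the $\mu$-binders occurring in $\phi$. Plain structural induction on $\phi$ fails in the case $\phi=\lnot(\mu X(\vec x).\psi)(\vec u)$: the rewrite
\[
 !\bigl(\lnot(\mu X(\vec x).\psi)(\vec u)\bigr) \;=\; \bigl(\nu X(\vec x).\,!(\lnot\psi\{\lnot X/X\})\bigr)(\vec u)
\]
recurses on $\lnot\psi\{\lnot X/X\}$, whose size may exceed $|\phi|$ when $X$ occurs more than once in $\psi$, since the substitution turns every $X(\vec u)$ into $\lnot X(\vec u)$ and thereby adds one $\lnot$ per occurrence. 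The lex measure resolves this: in the problematic subcase the substitution preserves the number of $\mu$-binders inside $\psi$ while the outer $\mu$ is stripped, so $m$ strictly drops; in every other recursive call $|\phi|$ strictly decreases with $m$ non-increasing.

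With the measure fixed, the argument proceeds by case analysis on the top operator. The base cases ($u=v$ and $X(\vec u)$) are immediate, since $!\phi=\phi$ and $n(\phi)=0$. The positive constructor cases ($\phi_1\lor\phi_2$, $\bigvee_x\phi$, $\fresh{x}\phi$, $\diam{\ell}\phi$, $(\mu X(\vec x).\phi)(\vec u)$) are routine: the negation count splits additively over subformulas, so summing the IH bounds and accounting for the size of the top symbol yields the required inequality. For $\phi=\lnot\psi$, the subcases where $\psi$ is an equality, a double negation, a disjunction, a value quantification, a fresh quantification, or a modality, each reduce to an IH call on a strictly smaller formula (same $m$, smaller $|\cdot|$) followed by a direct arithmetic check against the definitions of Definition~\ref{def:formSize}.

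The main obstacle, and the reason for the lex measure, is the remaining subcase $\phi=\lnot(\mu X(\vec x).\psi')(\vec u)$. Writing $k$ for the number of occurrences of $X$ in $\psi'$, the substitution $\{\lnot X/X\}$ produces the identities $|\lnot\psi'\{\lnot X/X\}|=1+|\psi'|+k$ and $n(\lnot\psi'\{\lnot X/X\})=1+n(\psi')+k$: the $k$ extra units of size contributed by substitution are precisely cancelled by the $k$ extra negations. Invoking the IH, valid because $m$ drops by one, gives
\[
 |!(\lnot\psi'\{\lnot X/X\})| \;\le\; (1+|\psi'|+k) - (1+n(\psi')+k) \;=\; |\psi'|-n(\psi'),
\]
so that $|!\phi|=1+|!(\lnot\psi'\{\lnot X/X\})|+2|\vec u|\le 1+|\psi'|-n(\psi')+2|\vec u|$, which equals $|\phi|-n(\phi)=(2+|\psi'|+2|\vec u|)-(1+n(\psi'))$. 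Well-definedness of $!$ throughout the recursion is guaranteed by the parity constraint on recursion variables, which ensures that any $\lnot$ pushed toward a recursion-variable occurrence always meets a matching $\lnot$ and is absorbed via $!(\lnot\lnot\phi)=!\phi$, so that no terminal call $!(\lnot X(\vec u))$ with an unmatched single $\lnot$ ever arises.
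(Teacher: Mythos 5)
Your overall strategy is the same as the paper's: a lexicographic induction whose first component drops when a fixpoint binder is consumed, combined with the observation that the $k$ extra size units introduced by $\{\lnot X/X\}$ are exactly cancelled by the $k$ extra negations. Your arithmetic in the $\mu$-case is correct (and in fact cleaner than the paper's, which slips between $|\vec u|$ and $2|\vec u|$), and the closing remark on why $!$ never strands a single $\lnot$ on a recursion variable is a genuine point the paper leaves implicit.

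There is, however, a gap in the choice of measure. In this section the derived connectives of Remark~\ref{rem:derived} are treated as primitives, so the lemma must also cover $\phi=\lnot(\nu X(\vec x).\psi')(\vec u)$ (the paper's proof lists this case explicitly). There the rewrite is
$!\bigl(\lnot(\nu X(\vec x).\psi')(\vec u)\bigr)=\bigl(\mu X(\vec x).\,!(\lnot\psi'\{\lnot X/X\})\bigr)(\vec u)$,
and your measure does not decrease: $m$ counts only $\mu$-binders, so stripping the outer $\nu$ leaves $m$ unchanged, while $|\lnot\psi'\{\lnot X/X\}|=1+|\psi'|+k$ can equal or exceed $|\phi|=2+|\psi'|+2|\vec u|$ (e.g.\ already for $\ar(X)=0$ and $k\ge 1$). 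So the inductive hypothesis cannot be invoked in exactly the case that is symmetric to the one your measure was designed for. The repair is immediate — let $m$ count all fixpoint binders ($\mu$ and $\nu$), or, as the paper does, all non-negation constructs, so that both fixpoint-under-negation cases strictly decrease the first component — but as written the induction is not well-founded on the full grammar. (Your case list also omits $\lnot(\phi_1\land\phi_2)$, $\lnot\sq{\ell}\phi'$, $\lnot\bigwedge_x\phi'$ and $\lnot(u\neq v)$, but those are harmless: each strictly decreases $|\cdot|$ with $m$ non-increasing, so they are genuinely "similar" to the cases you do treat.)
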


For the remainder of this section, and unless specified otherwise, any \fhml\ formulas used are going to be assumed to be firm and negation-free.

We shall make use of the (recursion) \emph{alternation depth} of formulas. At this point it is important that the formulas we examine have a unique specification for each of their bound variables. More specifically, for any formula $\phi$ we stipulate that:
\begin{itemize}
\item the bound and free recursion variables of $\phi$ be disjoint;
\item if $\sigma_1 X(\vec x_1).\phi_1,\sigma_2 X(\vec x_2).\phi_2$ are subformulas of $\phi$ then $(\sigma_1,\vec x_1,\phi_1)=(\sigma_2,\vec x_2,\phi_2)$.
\end{itemize}
Note that these conditions are not unusual (cf.~\cite{KlinL17}). They are not restricting our logic either as any formula has an $\alpha$-equivalent one in the form above. At the same time, they allow us to match bound recursion variables with their binding subformulas.

For value variables, on the other hand, we simply disallow a binder on a variable $x$ to be within the scope of another binder on $x$.

\begin{definition}[\cite{BradfieldW18, EmersonJS01}]\label{def:altDepthNom}
    Let $\phi$ be a \fhml\ formula. 
    We define the \emph{dependency order} on bound variables of $\phi$ as the smallest partial order $\leq_\phi$ such that\
    $X \leq_\phi Y$\
    if $X(\vec u)$ occurs free in some $\phi'$ such that $\sigma Y(\vec x) .\phi'$ appears in $\phi$ (for some $\vec u$). 
    \\
    The \boldemph{alternation depth} of a 
    bound variable $X$ in $\phi$ is defined as the maximal length of a chain $X=X_1 \leq_\phi \dots \leq_\phi X_n$ such that:
    \begin{itemize}
        \item if $X$ is a $\mu$-variable, each $X_i$ is a $\mu$-variable if $i$ is odd, and a $\nu$-variable otherwise; 
    \item if $X$ is a $\nu$-variable,  each $X_i$ is a $\nu$-variables if $i$ is odd, and a $\mu$-variable otherwise. 
    \end{itemize}
    The {alternation depth} of a formula $\phi$ (denoted \emph{adepth($\phi$)}) is the maximum alternation depth of variables bound in $\phi$, or zero if there are no fixpoints.
\end{definition}

Parity games for \fhml\ have positions of the form $(s, H, \xi, \psi)$, with:
\begin{itemize}
    \item $(s, H) \in \fstates$ being the current configuration that is examined;
    \item $\xi$ being a $\Ucal$-variable assignment;
    \item $\psi$ being the current formula. 
\end{itemize}
We define games with respect to a \emph{root position} corresponding to the model-checking problem we aim to decide. The positions and moves are given inductively using \emph{game rules}.

\begin{definition}\label{def:parityG}
    Let $\phi_0$ be a firm negation-free \fhml\
    formula,
    $\xi$ a $\Ucal$-variable assignment containing the free recursion variables of $\phi_0$, $\Lcal$ an FLTS and $(s_0,H_0)$ a configuration in $\Lcal$.
    The \boldemph{game} $\fgame(\Lcal, \phi_0, \xi, s_0, H_0)$ has root position $(s_0, H_0, \xi, \phi_0)$ and game rules:
    \begin{itemize}
        \item $(s,H,\xi, a = a)$ belongs to Attacker, whereas
 $(s,H,\xi, a = b)$ to Defender;
        \item $(s,H, a \neq a)$ belongs to Defender, whereas 
 $(s,H, a \neq b)$  to Attacker;
        \item $(s,H,\xi, X(\vec a))$ belongs to Attacker if $(s, H)\in \xi(X)(\vec a)$, and to Defender otherwise;
        \item $(s,H,\xi, \modal{\ell}\phi) \rightarrow (s',H',\xi, \phi)$, if $(s, H) \xrightarrow{\ell} (s', H')$, belongs to Defender if $\modal{\ell}=\diam{\ell}$, and to Attacker if $\modal{\ell}=[\ell]$;
        \item $(s,H,\xi, \phi_1 \odot \phi_2) \rightarrow (s,H,\xi,\phi_i)$, $i \in \{1,2\}$, belongs to Defender if $\odot=\lor$, and to Attacker if $\odot=\land$;
        \item $(s,H,\xi, \bigodot\nolimits_x \phi) \rightarrow (s,H,\xi,\phi\{a/x\})$, $a \in \Atoms$, belongs to Defender if $\bigodot=\bigvee$, and to Attacker if $\bigodot=\bigwedge$;
        \item $(s,H,\xi,\fresh{x}\phi) \rightarrow (s, H, \xi, \phi\{a/x\}),\ a \# \phi,\xi,H$, belongs to either;
        \item $(s,H,\xi, (\sigma X(\vec x).\phi)(\vec a)) \rightarrow (s, H, \xi,(\phi\{\sigma X(\vec x).\phi / X\})\{\vec a / \vec x\})$, $\sigma\in\{\mu,\nu\}$, belongs to either.
    \end{itemize}
      The rank of a position with formula $\phi$ is given by $\Omega(\phi)$:
    \[        \Omega(\phi) =
      \begin{cases} 2 \cdot \lfloor adepth(X)/2 \rfloor & \text{if $\phi$ is of the form }(\nu X(\vec x).\phi')(\vec a)\\
        2 \cdot \lfloor adepth(X)/2 \rfloor + 1 & \text{if $\phi$ is of the form }(\mu X(\vec x).\phi')(\vec a)\\
         0 &\text{otherwise}
      \end{cases}
    \]
In the case when $\phi_0$ is closed, we denote the game simply by $\fgame(\Lcal, \phi_0, s_0, H_0)$.    
\end{definition}

Note that the formulas appearing inside positions are firm and negation-free (as is $\phi_0$ by assumption). 

%

Later in this section it will be important to count the orbits of a history-bounded version of the parity game of \cref{def:parityG}. To that effect, it is useful to the formulas appearing in the game, in terms of subformulas of $\phi_0$ and substitutions. This is achieved by the following notion of formula closure. 
  
\begin{definition}\label{def:closFresh}
    Given a \fhml\ formula $\phi_0$, we define its closure $\clos(\phi_0)$ to be the smallest set $\chi$ of triples $(\phi,\gamma,\theta)$ such that $(\phi_0,\varepsilon,\varepsilon) \in \chi$ and:
    \begin{itemize}
        \item if $(\phi_1 \odot \phi_2, \gamma,\theta) \in \chi$, with $\odot\in\{\lor,\land\}$, then $(\phi_1, \gamma,\theta), (\phi_2, \gamma,\theta) \in \chi$;
        \item if $(\modal{\ell}\phi, \gamma,\theta) \in \chi$, with $\modal{l}\in\{\diam{\ell},\sq{\ell}\}$, then $(\phi, \gamma,\theta) \in \chi$;
        \item if $(\bigodot_x\phi, \gamma,\theta) \in \chi$, with $\bigodot\in\{\bigvee,\bigwedge\}$, 
          then $(\phi,(\gamma,\{a/x\}),\theta) \in \chi$, for all $a\in\Atoms$;
        \item if $(\fresh{x}\phi,\gamma, \theta) \in \chi$
          then $(\phi,(\gamma,\{a/x\}), \theta) \in \chi$, for all $a\#\phi\{\theta\}\{\gamma\}$;
        \item if $((\sigma X(\vec x).\phi)(\vec u),\gamma, \theta) \in \chi$, with $\sigma\in\{\mu,\nu\}$, then $(\phi,(\gamma,\{\vec c/\vec x\}), (\theta,\{\sigma X(\vec x).\phi / X\}))\in\chi$, for all $\vec c \in \Atoms^{\ar(X)}$.
    \end{itemize}
    We write $\clos^*(\phi_0) = \{\phi\{\theta\}\{\gamma\}\mid (\phi,\gamma, \theta) \in \clos(\phi_0)\}$.
\end{definition}

The following lemma shows that the formulas appearing in the game $\fgame(\Lcal,\phi_0, s_0,H_0)$ can be traced back to the closure. It moreover provides some bounds on their orbits and supports. 

\begin{restatable}{lemma}{freshClosProps}\label{lem:freshClosProps}
Given a closed formula $\phi_0$, an FLTS $\Lcal$ and configuration $(s_0,H_0)$:
    \begin{enumerate}
        \item\label{lem:freshClosProps:2} for all positions $(s,H,\phi)$ in $\fgame(\Lcal,\phi_0, s_0,H_0)$ we have that $\phi\in\clos^*(\phi_0)$; 
        \item\label{lem:freshClosProps:3} $|\orb(\clos^*(\phi_0))|$ is bounded by $|\phi_0|\cdot \frac{(|\supp(\phi_0)| + \|\phi_0\|)!}{|\supp(\phi_0)|!}$;
\item\label{lem:freshClosProps:4} for all $(\phi,\gamma,\theta)\in\clos(\phi)$, $|\supp(\phi,\gamma,\theta)| \le |\supp(\phi_0)| + \|\phi_0\|$. 
    \end{enumerate}
\end{restatable}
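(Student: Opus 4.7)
The plan is to prove the three parts in sequence, tackling part~(1) first, then part~(3), and using both for part~(2). For part~(1), I proceed by induction on the length of plays in $\fgame(\Lcal,\phi_0,s_0,H_0)$, showing that for every reachable position $(s,H,\phi)$ there is a triple $(\phi',\gamma,\theta)\in\clos(\phi_0)$ with $\phi=\phi'\{\theta\}\{\gamma\}$. The base case is witnessed by $(\phi_0,\varepsilon,\varepsilon)\in\clos(\phi_0)$. For the inductive step, capture-avoiding substitution preserves the top-level connective, so the shape of the current formula matches that of the associated $\phi'$, and each game move is mirrored by the corresponding closure step. The only non-syntactic side condition to check is the $\fresh{x}$ rule: the game's freshness premise $a\#\psi$ coincides with the closure's $a\#\psi'\{\theta\}\{\gamma\}$ because $\psi=\psi'\{\theta\}\{\gamma\}$. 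In the recursion case, the expansion $\psi'\{\sigma X(\vec x).\psi'/X\}\{\vec a/\vec x\}$ matches extending $\theta$ by $\{\sigma X(\vec x).\psi'/X\}$ and $\gamma$ by $\{\vec a/\vec x\}$.

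For part~(3), I induct on the closure derivation of $(\phi,\gamma,\theta)$, preserving three invariants. The first component $\phi$ is always a syntactic subformula of $\phi_0$, so $\supp(\phi)\subseteq\supp(\phi_0)$. The recursion substitution $\theta$ stores only syntactic subformulas $\sigma X(\vec x).\phi''$ of $\phi_0$, so $\supp(\theta)\subseteq\supp(\phi_0)$. Finally, $|\dom(\gamma)|\leq\|\phi_0\|$: each closure rule that extends $\gamma$ adds to $|\dom(\gamma)|$ exactly the weight that $\|\cdot\|$ assigns to its constructor, namely $1$ for $\bigvee_x$ and $\fresh{x}$, and $|\vec x|$ for $(\sigma X(\vec x).\phi'')(\vec u)$; a routine induction on $\phi_0$ shows that the cumulative such weight along the path from the root of $\phi_0$ to any occurrence of $\phi$ is bounded by $\|\phi_0\|$, with the $\lor$ case taking the maximum of its branches. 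Combining, $\supp(\phi,\gamma,\theta)\subseteq\supp(\phi_0)\cup\rng(\gamma)$ with $|\rng(\gamma)|\leq\|\phi_0\|$, yielding the bound.

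For part~(2), I count orbits of $\clos(\phi_0)$, which upper-bounds the orbits of $\clos^*(\phi_0)$ since $(\phi,\gamma,\theta)\mapsto\phi\{\theta\}\{\gamma\}$ is a surjection onto the latter. Each orbit of $\clos(\phi_0)$ is determined by: (i)~an occurrence of a subformula $\phi$ in $\phi_0$ (at most $|\phi_0|$ options, one per node of $\phi_0$'s syntax tree); (ii)~the substitution $\theta$, which is uniquely fixed by the chosen occurrence via the unique-specification stipulation on bound recursion variables; and (iii)~the orbit of $\gamma$ under permutations fixing $\supp(\phi_0)$, which majorises the $\supp(\clos^*(\phi_0))$-orbits used to compute $\orb(\clos^*(\phi_0))$. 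A canonical representative for the $\gamma$-orbit is obtained by enumerating $\dom(\gamma)$ as $x_1,\dots,x_k$ in the order introduced by the closure (with $k\leq\|\phi_0\|$ by part~(3)) and setting $\gamma(x_i)$ to either one of the $m=|\supp(\phi_0)|$ names of $\supp(\phi_0)$, one of the at most $i-1$ previously-used names outside $\supp(\phi_0)$, or a canonical new name. This gives at most $m+i$ options at step $i$, hence $\prod_{i=1}^{\|\phi_0\|}(m+i)=\frac{(m+\|\phi_0\|)!}{m!}$ orbits of $\gamma$ per occurrence; multiplying by $|\phi_0|$ gives the stated bound.

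The main obstacle is part~(2): obtaining the tight product $\prod_i(m+i)$ as opposed to the crude $(m+\|\phi_0\|)^{\|\phi_0\|}$ requires the canonical-form argument for $\gamma$ above, combined with the observation that $\theta$ is completely pinned down by the chosen occurrence. The book-keeping needed to argue that no additional orbits arise from $\theta$'s potentially growing domain is the delicate point.
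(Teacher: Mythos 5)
Your parts (2) and (3) are sound, and part (2) in particular takes a genuinely different and arguably cleaner route than the paper. The paper proves the orbit bound by a delicate induction on the closure derivation, tracking the quantity $|\psi|\cdot\frac{(\|\psi\|+|\Ncal|)!}{|\Ncal|!}$ (where $\Ncal$ is the support accumulated along the path) and relying on two auxiliary lemmas about orbits of unions and of swapped copies; it even has to replace the fixpoint rule by a chain of $\bigvee_{x_i}$'s to bind one variable at a time. Your direct count --- occurrence of a subformula in $\phi_0$ (at most $|\phi_0|$ choices, with $\theta$ pinned down by the occurrence and supported by $\supp(\phi_0)$), times a canonical-form enumeration of $\gamma$ giving $m+i$ choices at step $i$ and hence $\prod_{i=1}^{\|\phi_0\|}(m+i)=\frac{(m+\|\phi_0\|)!}{m!}$ orbits --- reaches the same bound without the orbit-union machinery, and your observation that $|\dom(\gamma)|\le\|\phi_0\|$ along any root-to-node path is exactly the invariant the paper's $\|\cdot\|$ was designed to supply. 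The surjectivity and equivariance of $(\phi,\gamma,\theta)\mapsto\phi\{\theta\}\{\gamma\}$ does transfer the count to $\clos^*(\phi_0)$, since $\supp(\clos^*(\phi_0))\subseteq\supp(\phi_0)$ and coarser quotients only have fewer classes.

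Part (1), however, has a genuine gap. Your inductive step rests on the claim that ``capture-avoiding substitution preserves the top-level connective, so the shape of the current formula matches that of the associated $\phi'$,'' and that each game move is mirrored by a closure step from the witnessing triple. This fails precisely when the witnessing triple is $(X(\vec u),\gamma,\theta)$ with $X\in\dom(\theta)$: then $X(\vec u)\{\theta\}\{\gamma\}$ is a \emph{fixpoint} formula $(\sigma X(\vec x).\psi')(\vec u\{\gamma\})$, not a variable, so the game can unfold it --- yet $\clos$ has no rule whose source is a triple with first component $X(\vec u)$, so there is no ``corresponding closure step'' to mirror. This is exactly the situation created by every unfolding after the first, and it is why the closure's fixpoint rule quantifies over \emph{all} $\vec c\in\Atoms^{\ar(X)}$ rather than just $\vec u$: the repair is to go back to the ancestor triple whose first component is the binder $(\sigma X(\vec x).\psi')(\vec u')$ (it exists because $X$ entered $\dom(\theta)$ there) and instantiate its rule at the argument tuple produced by the current unfolding, checking that the extra entries of $\gamma$ accumulated since then act trivially on $\psi'$ because they substitute variables not free in it. The paper handles this explicitly as the ``$\psi$ is a variable'' subcase of the fixpoint case, supported by its auxiliary lemma that free recursion variables of first components lie in $\dom(\theta)$ and that $\rng(\theta)$ consists of fixpoint formulas; your proof needs the analogous case analysis to go through.
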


We proceed to showing the correspondence between the semantics of an \fhml\ formula and its corresponding parity game.
\begin{restatable}{theorem}{FnomSatPar}\label{thm:FnomSatPar} 
    Given a \fhml\ formula $\phi_0$,
    $\Ucal$-variable assignment $\xi$,
    FLTS $\Lcal$ and $(s_0, H_0)$ from $\Lcal$, we have $(s_0, H_0) \in \sem{\phi_0}$ iff Defender wins from $(s_0, H_0,\xi, \phi_0)$ in $\fgame(\Lcal, \phi_0, \xi, s_0,H_0)$.
\end{restatable}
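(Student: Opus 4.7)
The plan is to establish the correspondence via the standard modal $\mu$-calculus/parity-game connection (in the style of Emerson--Jutla), adapted to the nominal, history-dependent setting. Both directions proceed by converting semantic witnesses into winning strategies and vice versa, with fixpoint cases handled by Knaster--Tarski approximants and signatures.

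For the $(\Rightarrow)$ direction, assuming $(s_0,H_0)\in\sem{\phi_0}$, I would build a winning strategy $\Theta$ for Defender. At each non-fixpoint Defender position, $\Theta$ is read off from the semantics in the obvious way: for $\phi_1\lor\phi_2$ pick the satisfied disjunct; for $\diam{\ell}\phi$ pick a successor lying in $\sem{\phi}$; for $\bigvee_x\phi$ pick a witnessing atom. For $\fresh{x}\phi$, either player chooses a fresh atom, and the self-duality lemma established earlier ($\sem{\fresh{x}\phi}=\sem{\phi\{a/x\}}$ for every $a\#\phi,\xi,H$) guarantees truth is preserved regardless of who chooses. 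Attacker positions are dually trapped by the assumption: if $(s,H)\in\sem{\phi_1\land\phi_2}$, every Attacker move stays inside a satisfied formula. Terminal positions $X(\vec a)$ for free $X$ are assigned to the winning side by construction, matching $(s,H)\in\xi(X)(\vec a)$. Fixpoint positions unfold deterministically and preserve the semantics by the fixpoint equation $(\sigma X(\vec x).\phi)(\vec a)\equiv(\phi\{\sigma X(\vec x).\phi/X\})\{\vec a/\vec x\}$.

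The core of the argument rules out infinite plays following $\Theta$ that would be losing for Defender. By \cref{lem:freshClosProps} the formulas appearing along any play lie in the orbit-finite set $\clos^*(\phi_0)$, so any infinite play visits positions of some fixed highest priority infinitely often, induced by an outermost bound recursion variable $X$ that unfolds infinitely often. One must show this $X$ is a $\nu$-variable. This is the standard \emph{signature} refinement: at each $\mu Y$-position encountered, assign the least ordinal $\alpha$ with $(s,H)\in \sem{(\mu Y(\vec y).\phi)^\alpha}(\vec a)$, where $(\mu Y)^\alpha$ denotes the $\alpha$-th Knaster--Tarski approximant of the fixpoint. Defender's strategy is chosen so that the lexicographic tuple of $\mu$-signatures strictly decreases whenever the play returns to the highest-priority $\mu$-variable, producing a contradictory descending chain of ordinals.

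The $(\Leftarrow)$ direction is the contrapositive: from $(s_0,H_0)\notin\sem{\phi_0}$, construct a winning strategy for Attacker by dualising, using $\nu$-signatures to forbid infinite plays of even maximum parity. The main obstacle, in both directions, is carrying out the signature argument in the nominal setting: ordinal assignments must be compatible with the equivariance of the semantics (\cref{lem:semanticsequivariant}), so that strategies can be taken equivariant modulo $\supp(\phi_0,\xi,s_0,H_0)$. The orbit-finiteness of $\clos^*(\phi_0)$ and the support bound of \cref{lem:freshClosProps} ensure that only finitely many orbits of approximant-signatures arise along any play, so the classical descending-ordinal argument applies unchanged. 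The vectorial recursion is handled by indexing signatures with atom tuples $\vec a$ and requires no further ideas.
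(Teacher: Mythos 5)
Your proposal is correct in outline but takes a genuinely different route from the paper's proof. You build a single global Defender strategy directly from the semantics and rule out losing infinite plays with the classical Emerson--Jutla \emph{signature} argument (lexicographically decreasing tuples of ordinal approximant indices at $\mu$-positions), and you obtain the converse by contraposition, dualising to an Attacker strategy when $(s_0,H_0)\notin\sem{\phi_0}$. The paper instead proceeds by induction on the structure of $\phi_0$ and proves both implications case by case: for $\mu X$ it introduces approximant assignments $\xi^k=\xi[X\mapsto F^k(\emptyset)]$ and a family of games $\Gcal_k$, transferring winning strategies between $\Gcal_k$ and the unfolded game by an inner induction on $k$; for the direction from winning strategies back to the semantics it invokes \cref{lem:parityNom} (a winning Defender strategy permits only finitely many unfoldings of each $\mu$-variable) and inducts on the ``height'' of critical positions, with \cref{lem:parityNom2} playing the role that the dependency order plays in your identification of the dominant infinitely-recurring variable. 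Your transfinite approximants are arguably more robust on one point: the paper computes $\lfp(F)$ as $\bigcup_{k\in\omega}F^k(\emptyset)$, which presumes an $\omega$-continuity that the infinitary quantifier $\bigvee_x$ does not obviously supply, whereas ordinal-indexed approximants close unconditionally at some ordinal. What your sketch leaves implicit, and a full write-up would need to spell out, is that signatures are equivariant --- so that at a $\fresh{x}\phi$ position (which may be owned by Attacker) the arbitrary choice of fresh atom cannot increase the signature --- and the exact non-increase/strict-decrease bookkeeping for vectorial unfoldings at varying atom tuples $\vec a$; both follow routinely from \cref{lem:semanticsequivariant} and \cref{lem:freshClosProps}, so I regard these as presentational rather than substantive gaps.
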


We have thus reduced model checking of formulas to winning in parity games. 
At this stage, we can restrict our attention to closed (firm negation-free) formulas $\phi_0$.
The next step is deciding
winning regions of our parity games. The latter requires the games to be finite, whereas ours are not so. We therefore reduce our parity games to equivalent finite ones.
For this to be possible, we restrict our attention to FLTSs that are fresh-orbit-finite.

For brevity, in the remainder of this section we shall say that $(\Lcal,\phi_0,s_0,H_0)$ is a \boldemph{setup of grade $N$} if:
\begin{itemize}
\item $\Lcal$ is a fresh-orbit-finite FLTS with set of states $\Scal$, and $(s_0,H_0)\in\hat \Scal$;
\item $\phi_0$ is a closed (firm negation-free) \fhml\ formula;
\item $N = |\supp(\phi_0)| + \|\phi_0\| + \regindex(\Scal)$. 
\end{itemize}
Given a setup $(\Lcal,\phi_0,s_0,H_0)$ of grade $N$, we first present a version of $\fgame(\Lcal,\phi_0,s_0,H_0)$ with bounded histories.\ntnote{$N$ is used for sets of names earlier, can be confusing.} 
The idea is to restrict the size of the history to the maximum number of names that are available to the formula and the states of the FLTS, and one extra to represent a fresh name. This amounts to a bound of $N+1$ names. 
When a new name is added and the size of the history is greater than $N$, then names that are in the history but no longer present in the formula are forgotten. 
Each position in this game is a tuple $(s, H, \phi)$ with $|H|\leq N+1$.
\begin{definition}
Let $(\Lcal,\phi_0,s_0,H_0)$ be a setup of grade $N$. 
The \boldemph{history-bounded game} $\Gcal_\S(\Lcal,\phi_0, s_0, H_0)$ has root
 position $(s_0, H_0', \phi_0)$, for some $H_0'\wellbound{N}(s_0,\phi_0,H_0)$, and game rules as in \cref{def:parityG}, apart from the rule for modal connectives:
    \begin{itemize}
        \item $(s,H, \modal{\ell}\phi) \rightarrow (s',H', \phi)$, if $\ell=(t, \vec a)$
        and $(s,H) \xrightarrow{t, \vec a} (s',H\cup\{\vec a\})$ with $H'\wellbound{N}(s',\phi,H\cup\{\vec a\})$,
        played by Defender if $\modal{\ell}=\diam{\ell}$, and by Attacker if $\modal{\ell}=\sq{\ell}$;
        \end{itemize}
while the ranking function is the same as in \cref{def:parityG}.
The \emph{well-bounding} relation $\triangleleft$ is given by $H'\wellbound{N}(s,\phi,H)$ if $H'\subseteq H$ is such that:
\begin{itemize}
            \item if $|H| \le N$ then $H' = H$,
            \item else $H \cap \supp(s,\phi) \subseteq H' \subseteq H$ and $|H'| = N + 1$.
        \end{itemize}
\end{definition}

We note that, though the positions in the unbounded game have their histories trimmed at modal operators, they remain FLTS configurations, i.e.\ for each $(s,H,\phi)$ we have $\supp(s)\subseteq H$. 
We next show that we can decide winning a parity game $\Gcal$ by examining its equivalent history-bounded game $\Gcal_N$.
To do this, we are going to build a bisimulation-like relation $\Rcal$ between $\pos(\Gcal)$ and $\pos(\Gcal_N)$. Since the positions in the two games can have name-mismatches, but essentially do ``the same'' transitions, it is useful to introduce a notion of {bisimilation} up to renaming. Note use of relation composition: $R_1;R_2=\{(x,z)\mid \exists y.\,(x,y)\in R_1\land (y,z)\in R_2\}$.

\begin{definition}
Given parity games $\Gcal_1, \Gcal_2$, we call a relation $R \subseteq \pos(\Gcal_1) \times \pos(\Gcal_2)$ a \emph{bisimulation} if for all $(P_1, P_2) \in R$:
    \begin{itemize}
        \item $P_1$ has the same rank and belongs to the same player as $P_2$;
        \item for each $P_1 \rightarrow P_1'$, there exist some $P_2 \rightarrow P_2'$  such that $(P_1',P_2')\in R$;
        \item for each $P_2 \rightarrow P_2'$, there exist some $P_1 \rightarrow P_1'$ such that $(P_1', P_2') \in R$.
    \end{itemize}
    We say that $P_1$ and $P_2$ are \emph{bisimilar} (denoted ${P_1}\bisim{P_2}$)
    if there is a bisimulation $R$ such that $(P_1, P_2) \in R$. \\
    A \emph{bisimulation up to nominal equivalence (nom-bisimulation)} is defined to be a relation $R$ with the same properties as above, apart from the requirement in the last two bullet points being instead that $(P_1',P_2')\in {\nomeq};R;{\nomeq}$.
    We say that $P_1$ and $P_2$ are \emph{nom-bisimilar} (denoted ${P_1}\nombisim{P_2}$)
    if there is a nom-bisimulation $R$ such that $(P_1, P_2) \in R$. 
\end{definition}

In the following two lemmas, we let $P_1,P_2$ be positions in games $\Gcal_1,\Gcal_2$ respectively.

\begin{lemma}\label{lem:nombisimbisim}
If $P_1\nombisim P_2$ then $P_1\bisim P_2$. 
\end{lemma}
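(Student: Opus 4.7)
The plan is to promote the given nom-bisimulation to a genuine bisimulation by closing it under nominal equivalence on both sides. More precisely, assume $P_1 \nombisim P_2$ via a nom-bisimulation $R$ containing $(P_1,P_2)$, and define $R' = {\nomeq};R;{\nomeq}$. Since $\nomeq$ is reflexive, $R \subseteq R'$, so $(P_1,P_2)\in R'$. The goal is to show $R'$ is a (plain) bisimulation, which immediately gives $P_1 \bisim P_2$.

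The core tool is the ``nominal invariance'' clause of \cref{def:paritygames}: whenever $(Q,Q')\nomeq(Q^*,Q^{*\prime})$ with both pairs in the respective state set, they agree on player, rank, and edge membership. This lets us transport moves along $\nomeq$. Suppose $(Q_1,Q_2)\in R'$, witnessed by $Q_1\nomeq P_1'$, $(P_1',P_2')\in R$, $P_2'\nomeq Q_2$. For the rank/player clause, $Q_1$ and $P_1'$ agree by nominal invariance, $P_1'$ and $P_2'$ agree because $R$ is a nom-bisimulation, and $P_2'$ and $Q_2$ agree by nominal invariance again; so $Q_1,Q_2$ agree. For the forward move clause, given $Q_1\to Q_1''$, pick $\pi$ with $\pi\cdot Q_1=P_1'$; then $(P_1',\pi\cdot Q_1'')\nomeq(Q_1,Q_1'')$, so by nominal invariance $P_1'\to \pi\cdot Q_1''$. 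Since $R$ is a nom-bisimulation, there is $P_2'\to P_2''$ with $(\pi\cdot Q_1'',P_2'')\in{\nomeq};R;{\nomeq}$. Finally, pick $\sigma$ with $\sigma\cdot P_2'=Q_2$; then nominal invariance gives $Q_2\to\sigma\cdot P_2''$, and
\[
(Q_1'',\sigma\cdot P_2'')\in{\nomeq};({\nomeq};R;{\nomeq});{\nomeq}={\nomeq};R;{\nomeq}=R',
\]
using transitivity of $\nomeq$. The backward clause is symmetric.

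The only subtlety I anticipate is the bookkeeping of permutations, together with the implicit check that whenever we transport an edge along $\nomeq$, the target position still lies in the game's position set; but this is exactly what the hypothesis $V\subseteq\hat V$ together with edge-invariance in \cref{def:paritygames} is designed to deliver (both endpoints of an edge are positions). Once this is in place the argument is essentially a standard ``closure of bisimulation up to''-style manipulation, and the proof is just a few lines.
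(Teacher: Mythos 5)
Your proposal is correct and follows essentially the same route as the paper's own proof: both close the given nom-bisimulation under $\nomeq$ on each side to form $R'={\nomeq};R;{\nomeq}$, transport moves along $\nomeq$ via the nominal-invariance clauses of \cref{def:paritygames}, and collapse ${\nomeq};R';{\nomeq}$ back to $R'$ by transitivity of $\nomeq$. No substantive difference to report.
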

\begin{proof}
  Let $R\subseteq\pos(\Gcal_1)\times\pos(\Gcal_2)$ be 
  a nom-bisimulation. We show that $R'={\nomeq;R;\nomeq}\subseteq\pos(\Gcal_1)\times\pos(\Gcal_2)$ is a bisimulation.
Suppose $(P_1,P_2)\in R'$, i.e.\ $\pi_i\cdot P_i=Q_i$ ($i=1,2$) and $(Q_1,Q_2)\in R$, and let $P_1\to P_1'$. We have $P_1,Q_1\in\pos(\Gcal_1)$ and, taking $Q_1'=\pi_1\cdot P_1'$, $(P_1,P_1')\nomeq(Q_1,Q_1')$, hence $Q_1\to Q_1'$ (and $Q_1'\in\pos(\Gcal_1)$). Then, by nom-bisimulation, $Q_2\to Q_2'$ and $(Q_1',Q_2')\in R'$. As before, setting $P_2'=\pi_2^{-1}\cdot Q_2'$, we obtain $P_2\to P_2'$ (and $P_2'\in\pos(\Gcal_2)$). We observe that $(P_1',P_2')\in{\nomeq;R';\nomeq}=R'$.
\end{proof}

\begin{restatable}{lemma}{nombisimequiv}\label{lem:nombisimequiv}
If $P_1\bisim P_2$ then Defender wins from $P_1$ iff they win from $P_2$.
\end{restatable}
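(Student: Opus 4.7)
The plan is a standard strategy-transfer argument: fix a bisimulation $R$ with $(P_1,P_2)\in R$, and show that any winning strategy for Defender from $P_1$ in $\Gcal_1$ can be mirrored along $R$ into a winning strategy from $P_2$ in $\Gcal_2$. The statement is symmetric in $\Gcal_1$ and $\Gcal_2$, so it suffices to prove one direction.

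First I would set up the strategy transfer. Suppose $\Theta_1$ is a winning strategy for Defender from $P_1$. I will inductively construct, along each play of $\Gcal_2$ starting from $P_2$, a companion play in $\Gcal_1$ starting from $P_1$ such that the corresponding positions remain in $R$. Concretely, suppose at some stage we have reached $Q_2\in\pos(\Gcal_2)$ with a companion $Q_1\in\pos(\Gcal_1)$ such that $(Q_1,Q_2)\in R$. By the first clause of bisimulation, $Q_1$ and $Q_2$ belong to the same player and have the same rank. If $Q_2\in V_D^{(2)}$, then $Q_1\in V_D^{(1)}$, so $\Theta_1(Q_1)=(Q_1,Q_1')$ is defined; using the third clause of bisimulation (for the move $Q_1\to Q_1'$), pick some $Q_2\to Q_2'$ with $(Q_1',Q_2')\in R$ and define $\Theta_2(Q_2)=(Q_2,Q_2')$. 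If $Q_2\in V_A^{(2)}$ and Attacker plays $Q_2\to Q_2'$, then by the second clause of bisimulation there is a matching $Q_1\to Q_1'$ in $\Gcal_1$ with $(Q_1',Q_2')\in R$, consistent with $\Theta_1$. In either case, the invariant $(Q_1',Q_2')\in R$ is preserved.

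Next I would verify that $\Theta_2$ is winning. For any play of $\Gcal_2$ conforming to $\Theta_2$ we obtain a companion play of $\Gcal_1$ conforming to $\Theta_1$, and the two plays have pointwise equal ranks and belong pointwise to the same player. For infinite plays, this means the highest rank seen infinitely often has the same parity, so Defender wins the $\Gcal_2$-play iff they win the $\Gcal_1$-play. For finite plays, I would argue that leaves correspond: if $Q_2$ has no successor in $\Gcal_2$ but $Q_1\to Q_1'$ in $\Gcal_1$, then by bisimulation there must be some $Q_2\to Q_2'$, contradiction; hence the last position in $\Gcal_2$ is a leaf iff its companion in $\Gcal_1$ is a leaf, and they belong to the same player, so Defender wins on one side iff on the other. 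Since $\Theta_1$ was winning, so is $\Theta_2$.

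The only delicate point, which I expect to be the main bookkeeping hurdle, is making the companion play in $\Gcal_1$ genuinely determined by Defender's choices in $\Gcal_2$ (and vice versa for Attacker moves), so that the invariant really is maintained along every partial play. This is a routine inductive definition along the play tree: at each step, Defender's choice in $\Gcal_2$ is derived from $\Theta_1$ applied to the current companion position in $\Gcal_1$, and each Attacker move in $\Gcal_2$ is matched by a chosen companion Attacker move in $\Gcal_1$ (using the axiom of dependent choice if needed). No further subtlety arises, since the bisimulation conditions are exactly tailored to preserving player, rank, and move availability across $R$.
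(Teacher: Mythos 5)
Your proposal is correct and follows essentially the same route as the paper: fix a bisimulation $R$, use a choice function to build a companion play in $\Gcal_1$ for each play in $\Gcal_2$, and transfer winning via the pointwise agreement of player, rank, and leaf-status. The one bookkeeping point the paper makes explicit that you only gesture at is that the transferred strategy is naturally history-dependent (the same position $Q_2$ may have different companions in different plays), which the paper resolves by defining $\Theta_2$ on whole plays and appealing to the standard fact that a history-dependent winning strategy for Defender from a root implies a positional one.
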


We can relate games with their history-bounded counterparts by nom-bisimulations.

\begin{definition}\label{def:theFreshRelation}
Given a setup $(\Lcal,\phi_0,s_0,H_0)$ of grade $N$, consider $\fgame(\Lcal, \phi_0,s_0, H_0)$  and $\Gcal_\S(\Lcal, \phi_0, s_0, H_0)$.
We define the \emph{history-bounding relation} $\Rcal_{(\Lcal, \phi_0, s_0, H_0)} \subseteq \pos(\fgame) \times \pos(\Gcal_\S)$, by setting $((s_1, H_1, \phi_1), (s_2, H_2, \phi_2)) \in \Rcal$ if $s_1 = s_2$, $\phi_1 = \phi_2$ and $H_2\wellbound{N}(s_1,\phi_1,H_1)$.
\end{definition}

Thus, a position $(s,H,\phi)$ in the unbounded game is related to itself if its history $H$ is small, i.e.\ it has no more than $N$ names. If $|H|$ is larger, then we relate $(s,H,\phi)$ to each $(s,H',\phi)$ where $H'\subseteq H$ has $N+1$ names and in particular contains $\supp(s)$ (as $(s,H')$ is a configuration) and all names from $\supp(\phi)$ that appear in $H$.

\begin{restatable}{lemma}{boundingisnombisim}\label{lem:boundingisnombisim}
$\Rcal_{(\Lcal, \phi_0, s_0, H_0)}$ is a nom-bisimulation.
\end{restatable}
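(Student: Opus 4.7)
The strategy is to verify the three defining conditions of a nom-bisimulation for $\Rcal$: related positions must agree on rank and owner, and every move in one game must be matchable in the other modulo $\nomeq;\Rcal;\nomeq$. For the first condition, any related pair $((s,H_1,\phi),(s,H_2,\phi))\in\Rcal$ shares state and formula, so by inspection of \cref{def:parityG} they share the rank $\Omega(\phi)$ and the owner; the state/history-dependent $X(\vec a)$ rule never arises, since $\phi_0$ is closed and every bound recursion variable is eliminated by substitution at the unfolding step before it could appear as a leaf. The move-matching conditions are proved by case analysis on the outermost connective of $\phi$.

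The structural cases $u=v$, $\phi_1\odot\phi_2$, and fixpoint unfolding $(\sigma X(\vec x).\psi)(\vec a)$ are immediate: the history is unchanged and the successor formula has support contained in $\supp(\phi)$, so playing the same move in both games preserves $\Rcal$ directly. For $\bigvee_x\psi$ and its dual $\bigwedge_x\psi$, if the chosen name $a$ lies outside $H_1\setminus H_2$ the same choice works in the partner game; if $a\in H_1\setminus H_2$, I pick a ``decoy'' $b\in H_2\setminus\supp(s,\phi)$, which exists because $|H_2|=N+1>|\supp(s,\phi)|$ by combining \lemmapoint{lem:freshClosProps}{4} with $|\supp(s)|\leq\regindex(\Scal)$. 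The transposition $\swp{a}{b}$ fixes $\supp(s,\phi)$ pointwise and hence witnesses the two successors as related via $\nomeq;\Rcal;\nomeq$. The case $\fresh{x}\psi$ is similar: forward, any $a\#\phi,H_1$ is automatically $\#\phi,H_2$ since $H_2\subseteq H_1$; backward, a chosen $a\#\phi,H_2$ lying in $H_1\setminus H_2$ is replaced in $\fgame$ by a globally fresh $a'$ and the swap $\swp{a}{a'}$ realigns the histories.

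The modal case $\phi=\diam{\ell}\psi$ (dually $\sq{\ell}\psi$) is the principal obstacle: we must relate an FLTS transition from $(s,H_1)$ to one from $(s,H_2)$, which are a priori distinct transition sets. The crucial tool is the FLTS fresh-support requirement, namely that $T(s,H_i)=\{(\ell,(s',H'))\mid (s,H_i)\xrightarrow{\ell}(s',H')\}$ is freshly supported by $(\supp(s),H_i)$. Writing $\ell=(t,\vec a)$ and partitioning the coordinates of $\vec a$ according to whether they lie in $\supp(s)$, in $H_2\setminus\supp(s)$, in $H_1\setminus H_2$, or in $\Atoms\setminus H_1$, I construct a single permutation $\pi$ fixing $\supp(s,\phi)$ pointwise that reroutes the $H_1\setminus H_2$-coordinates into $H_2\setminus\supp(s,\phi)$ or into $\Atoms\setminus H_1$ (availability of target names again follows from the size bound $|H_2|=N+1$ and \lemmapoint{lem:freshClosProps}{4}). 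The fresh-support properties of $T(s,H_1)$ and $T(s,H_2)$ together with equivariance of $\to$ then deliver the matching transition $(s,H_2)\xrightarrow{\pi\cdot\ell}(\pi\cdot s',H_2\cup\supp(\pi\cdot\ell))$ in $\Gcal_\S$, and the flexibility of the trimming relation $\wellbound{N}$ at the successor allows a choice of bounded destination history that, together with $\pi$, witnesses $\nomeq;\Rcal;\nomeq$ between the two successors. The backward direction is symmetric, with the roles of $T(s,H_1)$ and $T(s,H_2)$ swapped. The hardest aspect throughout is precisely this orchestration: producing one permutation that simultaneously performs the rerouting, stabilises $\supp(s,\phi)$, and respects the well-bounding at the trimmed destination.
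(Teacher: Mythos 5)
Your non-modal cases follow the paper's argument closely (the same decoy-name transpositions for $\bigvee_x$ and $\fresh{x}$, and the observation that the remaining connectives leave the history untouched), but the modal case---which you single out as the crux---contains a genuine confusion. At a position $(s,H_i,\diam{t,\vec a}\psi)$ the label is not chosen by the FLTS: $\vec a$ is the concrete tuple written in the (firm) formula, so $\{\vec a\}\subseteq\supp(\phi)$, and the game rule only admits transitions carrying exactly that label. The defining condition of $\Rcal$, namely $H_1\cap\supp(s,\phi)\subseteq H_2\subseteq H_1$, therefore already forces every coordinate of $\vec a$ to lie in both histories or in neither: the cell ``$H_1\setminus H_2$'' of your partition is empty and there is nothing to reroute. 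Moreover, a permutation that fixes $\supp(s,\phi)$ pointwise, as you require, necessarily satisfies $\pi\cdot\ell=\ell$ and $\pi\cdot s'=s'$ (since $\supp(s')\subseteq\supp(s)\cup\supp(\ell)$), so your construction is vacuous; and if $\pi$ really did move a coordinate of $\vec a$, the resulting move would carry label $\pi\cdot\ell\neq\ell$ and would not be legal at $(s,H_2,\diam{t,\vec a}\psi)$. The paper instead matches the move with the identical label and identical target state, and the substance of the case lies exactly where you do not look: exhibiting a trimmed target history $H_2'$ with $(H_1\cup\{\vec a\})\cap\supp(s',\psi)\subseteq H_2'\subseteq H_1\cup\{\vec a\}$ and $|H_2'|=N+1$ (possible because $|\supp(s',\psi)|\le N$), and, in the converse direction, verifying that the untrimmed target $H_1\cup\{\vec a\}$ is again related to the trimmed one.

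A second, related problem is your justification that the matching transition exists at all. Fresh support constrains each transition set $T(s,H_i)$ under permutations of its \emph{own} configuration, and equivariance of $\to$ relates $T(s,H_1)$ to $T(\pi\cdot s,\pi\cdot H_1)$; neither of these compares $T(s,H_1)$ with $T(s,H_2)$ when $H_2$ is a proper subset of $H_1$ of smaller cardinality, so the combination you invoke does not ``deliver the matching transition'' as claimed. Since the label and target state are identical in the matched move, the step you actually need is that a transition with a fixed label whose names all have the same membership status relative to $H_1$ and to $H_2$ is enabled from $(s,H_1)$ iff from $(s,H_2)$; this should be argued (or isolated as a lemma) directly rather than attributed to fresh support plus equivariance.
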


Combining \cref{thm:FnomSatPar} with \cref{lem:boundingisnombisim,lem:nombisimbisim,lem:nombisimequiv}, we can reduce \fhml\ model-checking to deciding winning a history-bounded parity game. However,
although the size of histories, and therefore of positions, is now bounded, the number of positions can still be infinite due to the names appearing in them. We can resolve this by grouping together positions that are ``equivalent'' up to permuting their names. Formally, this is done by considering games in which positions are orbits of ordinary positions.


\begin{definition} 
    Given a parity game $\Gcal=\langle V,V_A,V_D,E,\Omega\rangle$, 
    we construct the \boldemph{orbits game} $\gorb(\Gcal)=\langle V',V_A',V_D',E',\Omega'\rangle$ as follows:
    \begin{itemize}
    \item $V' = \{\orbs(P) \mid P \in V\}$ and $V_\chi'=\{\orbs(P) \mid P \in V_\chi\}$ (for $\chi\in\{A,D\}$);
      \item $E'=\{(\orbs(P),\orbs(P'))\mid (P,P')\in E\}$;
        \item  for each $P\in V$, $\Omega'(\orbs(P)) = \Omega(P)$. 
    \end{itemize}
\end{definition}

We can see that $\gorb(\Gcal)$ is well defined. In particular, $\Omega'$ is well defined due to the fact that $\Omega(P_1)=\Omega(P_2)$ whenever $P_1\nomeq P_2$.  
Moreover, we can show that the relation\ $R
=        \{(P, \orbs(P)) \mid P \in \pos(\Gcal)\}$\ is a bisimulation.

\begin{lemma}\label{lem:parityBisim}
  Given $\Gcal$ as above and $P\in\pos(\Gcal)$, $P\bisim\orbs(P)$.
\end{lemma}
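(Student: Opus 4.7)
The plan is to verify directly that the relation
\[
R = \{(P, \orbs(P)) \mid P \in \pos(\Gcal)\} \subseteq \pos(\Gcal) \times \pos(\gorb(\Gcal))
\]
is a bisimulation in the sense of the definition just before \cref{lem:nombisimbisim}, and then conclude $P \bisim \orbs(P)$ for every $P \in \pos(\Gcal)$ by taking the pair $(P, \orbs(P)) \in R$.

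First I would check the local compatibility conditions, i.e.\ that for $(P, \orbs(P)) \in R$, the positions $P$ and $\orbs(P)$ share player and rank. This is immediate from the construction of $\gorb(\Gcal)$: the partition $V_A' \uplus V_D'$ is defined precisely as $\{\orbs(P) \mid P \in V_\chi\}$ and the ranking $\Omega'(\orbs(P)) = \Omega(P)$ is well-defined because, by the nominal conditions of \cref{def:paritygames}, $\Omega$ and the player assignment are constant on $\nomeq$-equivalence classes.

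Next, the forth condition: if $P \to P'$ in $\Gcal$ then, by the definition of $E'$, we have $(\orbs(P), \orbs(P')) \in E'$, and $(P', \orbs(P')) \in R$ by construction. So the witness is simply $\orbs(P')$.

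The back condition is the only mildly delicate step, and will be the main obstacle. Suppose $\orbs(P) \to Q'$ in $\gorb(\Gcal)$. By the definition of $E'$ there exist $P_1, P_2 \in \pos(\Gcal)$ with $(P_1, P_2) \in E$, $\orbs(P_1) = \orbs(P)$ and $\orbs(P_2) = Q'$. Since $P \nomeq P_1$, there is some $\pi \in \Perm$ with $\pi \cdot P_1 = P$; set $P' = \pi \cdot P_2$. Then $(P_1, P_2) \nomeq (P, P')$, and by the nominal closure condition of \cref{def:paritygames} we obtain $(P, P') \in E$, i.e.\ $P \to P'$ in $\Gcal$. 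Moreover $\orbs(P') = \orbs(P_2) = Q'$, so $(P', Q') \in R$, completing the back condition.

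Having verified all three requirements, $R$ is a bisimulation, and hence $P \bisim \orbs(P)$ for all $P \in \pos(\Gcal)$.
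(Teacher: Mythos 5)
Your proposal is correct and takes exactly the approach the paper indicates: the paper itself only asserts (just before the lemma) that $R=\{(P,\orbs(P))\mid P\in\pos(\Gcal)\}$ is a bisimulation, and your sketch fills in the verification, with the back condition correctly handled via the nominal closure condition on $E$ from \cref{def:paritygames}.
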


Taking stock of the game transformations defined above and the bisimulation relations between them, we can decide winning in a setup of finite grade.

\begin{corollary}\label{cor:winningNom}
Given a setup $(\Lcal,\phi_0,s_0,H_0)$ of grade $N$, Defender wins in $\Gcal(\Lcal,\phi_0,s_0,H_0)$ (from $(s_0,H_0,\phi_0)$) iff they win in $\gorb(\Gcal_N(\Lcal,\phi_0,s_0,H_0))$ from $\orbs(s_0,H_0',\phi_0)$. Hence, it is decidable whether $(s_0,H_0)\in\sem[]{\phi_0}$.
\end{corollary}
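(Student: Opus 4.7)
The plan is to chain together the game reductions established by the preceding lemmas in order to reduce winning in the (possibly infinite) parity game $\fgame(\Lcal,\phi_0,s_0,H_0)$ to winning in the finite parity game $\gorb(\Gcal_\S(\Lcal,\phi_0,s_0,H_0))$, and then to invoke the standard decidability of parity games on finite graphs.

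First I would assemble the chain of bisimulations. By \cref{lem:boundingisnombisim}, the history-bounding relation $\Rcal_{(\Lcal,\phi_0,s_0,H_0)}$ is a nom-bisimulation pairing $(s_0,H_0,\phi_0)$ in $\fgame$ with $(s_0,H_0',\phi_0)$ in $\Gcal_\S$. Applying \cref{lem:nombisimbisim} followed by \cref{lem:nombisimequiv} yields that Defender wins from $(s_0,H_0,\phi_0)$ in $\fgame$ iff from $(s_0,H_0',\phi_0)$ in $\Gcal_\S$. Composing with \cref{lem:parityBisim}, which gives $(s_0,H_0',\phi_0)\bisim\orbs(s_0,H_0',\phi_0)$, and applying \cref{lem:nombisimequiv} once more, winning transfers to $\gorb(\Gcal_\S)$. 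Combined with \cref{thm:FnomSatPar}, this settles the ``iff'' in the statement.

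Next I would verify that $\gorb(\Gcal_\S(\Lcal,\phi_0,s_0,H_0))$ is finite. Positions of $\Gcal_\S$ are triples $(s,H,\phi)$ where $s\in\Scal$ lies in an orbit-finite nominal set, $|H|\leq N+1$, and by \cref{lem:freshClosProps}(\ref{lem:freshClosProps:2}) $\phi\in\clos^*(\phi_0)$, which has finitely many orbits by \cref{lem:freshClosProps}(\ref{lem:freshClosProps:3}); the collection of histories of bounded size is likewise orbit-finite. Combining these orbit-finiteness facts through \cref{lem:nomlemma}(\ref{nomlemma:prod}) shows that positions of $\Gcal_\S$ have finitely many orbits, hence $\gorb(\Gcal_\S)$ has finitely many vertices. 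Since edges are equivariant and determined by the finitely many game rules, finite out-degree follows, so the orbits game is a finite parity game; its winning regions are computable by standard algorithms, which together with the bisimulation chain yields decidability of whether $(s_0,H_0)\in\sem[]{\phi_0}$.

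The main obstacle I expect is verifying the finite out-degree of $\gorb(\Gcal_\S)$ in the presence of rules that range over all of $\Atoms$, notably the value- and fresh-quantifier rules and fixed-point unfolding. The key ingredient is the uniform support bound from \cref{lem:freshClosProps}(\ref{lem:freshClosProps:4}), which caps $|\supp(\phi,\gamma,\theta)|$ for every closure triple, and combines with $|H|\leq N+1$ to give a uniform bound on the support of every $\Gcal_\S$-position. Under this bound, equivariance of the rules ensures that the infinitely many moves out of any fixed position collapse, at the orbit level, into finitely many move-orbits (one per ``type'' of fresh or substituted name relative to the position's support), which is what the argument requires.
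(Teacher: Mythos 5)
Your proposal is correct and follows essentially the same route as the paper: the ``iff'' is obtained by chaining \cref{thm:FnomSatPar,lem:boundingisnombisim,lem:nombisimbisim,lem:nombisimequiv,lem:parityBisim} exactly as the paper does, and decidability then rests on the orbits game being finite. The only difference is presentational: you inline the finiteness argument (via \cref{lem:freshClosProps} and \cref{lem:nomlemma}), whereas the paper simply asserts finiteness here and establishes it quantitatively in \cref{lem:orbitGameSize}; note also that once the vertex set of $\gorb(\Gcal_N)$ is finite, finiteness of the edge set is automatic, so your ``finite out-degree'' concern matters only for effectively constructing the game, not for its finiteness.
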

\begin{proof}
The first part follows from \cref{thm:FnomSatPar,lem:boundingisnombisim,lem:nombisimbisim,lem:nombisimequiv,lem:parityBisim}. The second part follows from \cref{thm:FnomSatPar} and the fact that 
$\gorb(\Gcal_N(\Lcal,\phi_0,s_0,H_0))$ is a finite game.
\end{proof}

We next calculate upper bounds for the size of $\gorb(\Gcal_N(\Lcal,\phi_0,s_0,H_0))$ and the time needed to construct it. This allows us to give a complexity bound for model checking.
Recall that we set $N=|\supp(\phi_0)|+\|\phi_0\|+\regindex(\Scal)$. In the calculations below we prefer to distinguish between complexity due to $\phi_0$ and $\Lcal$, so instead of $N$ we shall use the measure: $M(\phi_0)=|\supp(\phi_0)|+\|\phi_0\|$. This can be seen as the \emph{nominal potential} of $\phi_0$.

In constructing the orbits game, we need to be able to check if two positions of $\Gcal_N$ are nominally equivalent. The way this is performed is described in \cref{app:orbitGameSize}: given $(s_1,H_1,\phi_1),(s_2,H_2,\phi_2)$, we try to build a permutation $\pi$ such that $\pi\cdot\phi_1=\phi_2$ and, if successful, we extend $\pi$ to some $\pi'$ so that, in addition, $\pi'\cdot s_1=s_2$. The latter part requires some oracle that is able to answer such questions for the given nominal set of states $\Scal$.
We let a \emph{permutation oracle} $\Psi_\Lcal$ to be a function that,
given  $s, s' \in \Scal$
	 and  $\vec a, \vec b \in \Atoms^k$ (for some $k$, and each containing distinct names), the oracle returns 
	a permutation $\pi'$ such that $\pi' \cdot s =s'$ and $\pi'\cdot \vec a = \vec b$,
	or $\NO$ if no such permutation exists. 
	We stipulate that the time complexity of $\Psi_\Lcal(\vec a, \vec b, s, s')$ is bounded by some $\Phi_\Lcal(k)$.


\begin{restatable}{lemma}{orbitGameSize}\label{lem:orbitGameSize}\label{lem:fnomparsize}
  Given a setup $(\Lcal,\phi_0,s_0,H_0)$ of grade $N$ and $\Gcal_N=\Gcal_N(\Lcal, \phi_0, s_0, H_0)$:
  \begin{enumerate}
    \item the size of $\pos(\gorb(\Gcal_\S))$ is bounded by
    $
        |\orb(\Scal)| \cdot|\phi_0|\cdot \frac{M({\phi_0})!}{|\supp(\phi_0)|!}\cdot (M(\phi_0)+\regindex(\Scal) + 1)^{M(\phi_0) + 1}\cdot(1+o(1))
    $.
    \item the time complexity of constructing $\gorb(\Gcal_N)$ is in
$ O(   (|\phi_0| + \Phi_\Lcal(M(\phi_0)) +\regindex(\Scal))
    \cdot
    \max(M(\phi_0)+\regindex(\Scal) + 2, |\Scal|
    )  \cdot 
    (|\orb(\Scal)| \cdot|\phi_0|\cdot \frac{M(\phi_0)!}{|\supp(\phi_0)|!}\cdot (M(\phi_0)+\regindex(\Scal) + 1)^{M(\phi_0) + 1})
    ^2)$.
\end{enumerate}
\end{restatable}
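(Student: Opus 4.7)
The plan is to treat each position $(s,H,\phi)$ of $\Gcal_N$ as an element of a product nominal set and exploit \cref{lem:nomlemma}(\ref{nomlemma:prod}), combined with the closure bounds from \cref{lem:freshClosProps}. Write $\Ccal=\clos^*(\phi_0)$ and $\Hcal$ for the nominal set of subsets of $\Atoms$ of cardinality at most $N+1=M(\phi_0)+\regindex(\Scal)+1$, noting that $\Hcal$ has $N+2$ orbits (one per cardinality) and register index $N+1$.

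For part 1, \cref{lem:freshClosProps}(\ref{lem:freshClosProps:2}) ensures that every formula appearing in $\Gcal_N$ lies in $\Ccal$, so $\pos(\Gcal_N)$ embeds into $\Scal\times\Hcal\times\Ccal$. From \cref{lem:freshClosProps}(\ref{lem:freshClosProps:3}) and \cref{lem:freshClosProps}(\ref{lem:freshClosProps:4}) I obtain $|\orb(\Ccal)|\le|\phi_0|\cdot M(\phi_0)!/|\supp(\phi_0)|!$ and $\regindex(\Ccal)\le M(\phi_0)$. I then apply \cref{lem:nomlemma}(\ref{nomlemma:prod}) twice, grouping the multiplication so that the largest register index, $N+1$ (from $\Hcal$), is raised only to the exponent corresponding to the combined ``non-state'' content, namely the $M(\phi_0)$ slots contributed by the formula and one extra history slot on top of $\regindex(\Scal)$. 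Collecting the $(1+\epsilon)$ correction factors into a single $(1+o(1))$ (valid since $\epsilon$ vanishes once $M(\phi_0)+\regindex(\Scal)+1\ge 4$) yields the stated bound.

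For part 2, the orbits game is constructed by (i) enumerating orbit representatives of $\pos(\Gcal_N)$, supplied by part 1; (ii) listing the outgoing moves of each representative; and (iii) assigning each target to its orbit. The branching factor per representative is bounded by $\max(M(\phi_0)+\regindex(\Scal)+2,|\Scal|)$: modal moves contribute at most $|\Scal|$ successors, while $\bigvee_x$, $\bigwedge_x$ and $\fresh{x}$ need at most $N+2$ essentially distinct children (one per name in $H\cup\supp(\phi)$ plus a single fresh representative). Orbit-equality of two positions is tested, as outlined in \cref{app:orbitGameSize}, by first producing a permutation aligning the two formulas at cost $O(|\phi_0|)$ and then extending it through the oracle $\Psi_\Lcal$ at cost $\Phi_\Lcal(M(\phi_0))$, with an $O(\regindex(\Scal))$ overhead to reconcile the states' supports. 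Iterating over all ordered pairs of representatives yields the quadratic factor in the orbit count from part 1.

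The main obstacle will be pinning down the exponent $M(\phi_0)+1$ in $(M(\phi_0)+\regindex(\Scal)+1)^{M(\phi_0)+1}$: a naive double application of \cref{lem:nomlemma}(\ref{nomlemma:prod}) yields exponents that sum the intermediate register indices and overshoot. The correct argument exploits the constraint $\supp(s)\subseteq H$, so that the $\regindex(\Scal)$ name slots of the state are already ``paid for'' by the history and only $M(\phi_0)+1$ additional slots need to be mapped into the pool of $N+1$ available names. The analogous subtlety for part 2 is to bound uniformly the branching at $\bigvee_x$ (infinite in $\Gcal_N$ itself) by collapsing it to orbit representatives before enumeration.
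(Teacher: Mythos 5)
Your part~2 matches the paper's argument in all essentials: enumerate orbit representatives, bound the branching at each representative by $\max(M(\phi_0)+\regindex(\Scal)+2,|\Scal|)$, and test nominal equivalence of two positions by first aligning the formulas in $O(|\phi_0|)$ time, then invoking the oracle $\Psi_\Lcal$ and checking the count of redundant history names, with the quadratic factor coming from comparing against all stored representatives. That part is fine.

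Part~1, however, has a real gap, and you half-admit it yourself. Your stated plan is to embed positions into the triple product $\Scal\times\Hcal\times\Ccal$ and apply \cref{lem:nomlemma}(\ref{nomlemma:prod}) twice; as you note, any such double application accumulates an exponent of roughly $\regindex(\Scal)+M(\phi_0)$ rather than the required $M(\phi_0)+1$, because the product lemma has no way to know that the state's names are constrained to lie inside the history. Your proposed repair --- ``the correct argument exploits $\supp(s)\subseteq H$'' --- is exactly the right idea, but it is the entire content of the proof and you never carry it out. The step you are missing is to abandon the triple product and work directly with the configuration set $\hat{\Scal}$ (pairs $(s,H)$ with $\supp(s)\subseteq H$ and $|H|\le N+1$): one shows $|\orb(\hat{\Scal})|\le|\orb(\Scal)|\cdot(N+1)$, since the orbit of a configuration is determined by the orbit of $s$ together with the cardinality $|H\setminus\supp(s)|$, and $\regindex(\hat\Scal)\le N+1$. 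With that in hand a \emph{single} application of \cref{lem:nomlemma}(\ref{nomlemma:prod}) to $\hat{\Scal}\times\cls{}(\clos^*(\phi_0))$, with $n_1=N+1$ and $n_2=M(\phi_0)$ (justified by \cref{lem:freshClosProps}(\ref{lem:freshClosProps:4})), produces the factor $(N+1)^{M(\phi_0)}$, and combining with the extra $(N+1)$ from the orbit count of $\hat\Scal$ gives the exponent $M(\phi_0)+1$. Until you prove that orbit count for $\hat{\Scal}$, the claimed bound does not follow from the lemmas you cite. (Also, minor: the $(1+\epsilon)$ correction in \cref{lem:nomlemma}(\ref{nomlemma:prod}) vanishes when the \emph{smaller} register index $n_2=M(\phi_0)$ is at least $4$, not when $N+1\ge 4$ as you wrote.)
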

\begin{proof}  We show~1, while~2 is proven in \cref{app:complexityboundedgame}.
    By definition and \cref{lem:freshClosProps}(1), we know that $\pos(\Gcal_N) \subseteq \fstates \times \clos^*(\phi_0)$.
        Then, by \cref{lem:nomlemma}(1,2) we have that $\pos(\gorb(\Gcal_\S)) \subseteq \orb(\cls{}(\fstates \times \clos^*(\phi_0))) = \orb(\fstates \times \cls{}(\clos^*(\phi_0)))$.
    For every $(s, H, \phi) \in \Gcal_\S$, we have $\supp(s) \subseteq H$ and $|H|\leq N+1$, and so we have $|\orb(\fstates)| \le |\orb(\Scal)| \cdot (\S + 1)$ (what matters of $H$ is the size of $|H\setminus\supp(\phi)|$).
    Using \cref{lem:nomlemma}(3) and the fact that the largest support sizes in $\fstates$ and $\clos^*(\phi_0)$ are bounded by $N+1$ and $|\supp(\phi_0)| + \|\phi_0\|$ (by \cref{lem:freshClosProps}(3)) respectively:
    \begin{align*}
        |\orb(\cls{}(\fstates \times \clos^*(\phi_0))| &\le |\orb(\Scal)| \cdot (\S + 1) \cdot|\orb(\cls{}(\clos^*(\phi_0)))|\cdot (\S + 1)^{|\supp(\phi_0)| + \|\phi_0\|}\cdot(1 + \epsilon) 
        \\ &\le |\orb(\Scal)| \cdot|\orb(\clos^*(\phi_0))|\cdot (\S + 1)^{|\supp(\phi_0)| + \|\phi_0\| + 1}\cdot(1+\epsilon)
\\
        &\le |\orb(\Scal)| \cdot|\phi_0|\cdot \frac{(|\supp(\phi_0)|+\|\phi_0\|)!}{|\supp(\phi_0)|!}\cdot (\S + 1)^{|\supp(\phi_0)| + \|\phi_0\| + 1}\cdot(1+\epsilon)
    \end{align*}
    with the last two lines using \cref{lem:nomlemma}(1, second part) and \cref{lem:freshClosProps}(2).
\end{proof}


\begin{restatable}{theorem}{MCinFHML}\label{thm:MCinFHML}
  Let $\Lcal$ be a fresh-orbit-finite FLTS with $n$ orbits and set of states $\Scal$, with register index $r$,
  and let $\phi_0$ be a \fhml\ closed formula with maximum alternation depth $d$. 
    Then, the time complexity of model checking $\phi_0$ on some configuration in $\Lcal$ is in
	\begin{align*}
		&O(\ (n \cdot|\phi_0|\cdot \frac{M(\phi_0)!}{|\supp(\phi_0)|!}\cdot (M(\phi_0)+r + 1)^{M(\phi_0) + 1})^{\log(d) + 6}
		\\ &\ + 
		(|\phi_0| + \Phi_\Lcal(M(\phi_0)) +r)
		\cdot
		\max(M(\phi_0)+r + 2, n
		)\\&\quad \cdot 
		(n \cdot|\phi_0|\cdot \frac{M(\phi_0)!}{|\supp(\phi_0)|!}\cdot (M(\phi_0)+r + 1)^{M(\phi_0) + 1})
		^2\
		)
	\end{align*}
\end{restatable}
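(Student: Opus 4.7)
The plan is to combine the reduction from model checking to a finite parity game (\cref{cor:winningNom}) with the size and construction-time bounds of \cref{lem:fnomparsize}, and then invoke an off-the-shelf parity game algorithm of quasi-polynomial complexity to solve the finite game $\gorb(\Gcal_N(\Lcal, \phi_0, s_0, H_0))$.

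Concretely, given the input configuration $(s_0, H_0)$ in $\Lcal$, the tuple $(\Lcal, \phi_0, s_0, H_0)$ is a setup of grade $N = |\supp(\phi_0)| + \|\phi_0\| + r = M(\phi_0) + r$. By \cref{cor:winningNom}, deciding $(s_0, H_0) \in \sem[]{\phi_0}$ reduces to deciding whether Defender wins from $\orbs(s_0, H_0', \phi_0)$ in the finite parity game $\gorb(\Gcal_N(\Lcal, \phi_0, s_0, H_0))$. So the overall running time is at most the time needed to (i) construct this orbits game, plus (ii) solve it.

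For (i), \cref{lem:fnomparsize}(2) yields exactly the second summand in the stated bound. For (ii), it suffices to observe that the maximum rank occurring in the game is $2\lfloor d/2\rfloor + 1 \le d + 1$, since ranks are determined by the alternation depth of the outermost fixpoint variable (\cref{def:parityG}). I will then invoke a quasi-polynomial parity game algorithm\,---\,e.g.\ the one of Calude et al.\,---\,whose running time on a game with $V$ positions and $d$ priorities is bounded by $|V|^{\log(d)+O(1)}$, with constant at most $6$. Substituting the bound on $|\pos(\gorb(\Gcal_N))|$ from \cref{lem:fnomparsize}(1) gives the first summand in the stated expression. Adding the two complexities yields the claimed bound.

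The only non-routine step is justifying that the generic quasi-polynomial parity game algorithm can be applied to $\gorb(\Gcal_N)$ as a plain finite parity game: this is immediate since, once constructed, $\gorb(\Gcal_N)$ is a standard finite directed graph with an Attacker/Defender partitioning and a ranking function bounded by $d+1$, and the orbit representation has already absorbed the nominal quotient. The remaining work is arithmetic bookkeeping: combining the factor $(\log d + 6)$ on the position bound from \cref{lem:fnomparsize}(1) with the construction-time bound of \cref{lem:fnomparsize}(2), which together give the two summands of the theorem.
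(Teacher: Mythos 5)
Your proposal is correct and follows essentially the same route as the paper: reduce to the finite orbits game via \cref{cor:winningNom}, use \cref{lem:fnomparsize} for the size and construction-time bounds, and solve the resulting finite parity game with the quasi-polynomial algorithm of Calude et al. The only detail you gloss over is that a setup requires $\phi_0$ to be negation-free, so the paper first replaces $\phi_0$ by $!(\phi_0)$ and invokes \cref{lem:negfreesize} to ensure the size does not increase\,---\,a trivial but necessary preprocessing step.
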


\begin{remark}
In the case of model checking formulas on an $r$-FRA $\Acal=\langle Q,\mu,\Sigma,\delta\rangle$, the calculations above are adapted by setting:
\begin{itemize}
\item $|\orb(\Scal)|=|Q|=n$, as two state-assignment pairs are equal up to permutation iff they share the same state;
\item $\regindex(\Lcal)\leq r$, which is equality if there is a state using all $r$  registers;
  \item we can take $\Phi_\Lcal(k)=k+r$, as it suffices to find a matching between the two register assignments (of $s_1$ and $s_2$) and check it complies with mapping $\vec a$ to $\vec b$.
  \end{itemize}
\end{remark}

\section{Conclusion}

We introduced a logic to capture properties of nominal transition systems with history dependence and fresh-name generation. We showed model checking is decidable and has exponential complexity. A next step would be to parameterise this complexity, e.g.\ by bounding the number of registers in the FLTS or the arities of recursion variables. We also feel that some of our calculations are overly generous, for example in using the bound on orbits of product nominal sets to calculate the number of orbits in the bounded parity game. Our overarching aim is to implement the model checking routine and use it in applications.

\bibliography{main}
\newpage
\appendix
\section{Proofs from \cref{sec:logic}}
\nomlemma*

\begin{proof}
	\begin{enumerate}
	\item For the former, we have that 
	$$\orb(\cls{}(X)) = \{\orbs(\pi\cdot x) \mid x \in X, \pi \in \Perm\} = \{\orbs(x) \mid x \in X\}$$
	For the latter, we define the function $f: \orb(\cls{}(X)) \to \orb[\supp(X)](X)$ as follows:
	\[
	f(O) = \orbs[\supp(X)](y) \text{ for some } y \in O \cap X
	\]
	Let us choose $O_1, O_2 \in \orb(\cls{}(X))$ such that $O_1 \neq O_2$ and $\hat x_1 \in O_1 \cap X, \hat x_2 \in O_2 \cap X$.
	Suppose $f(O_1) = \orbs[\supp(X)](\hat x_1) = \orbs[\supp(X)](\hat x_2) = f(O_2)$. 
	This means that $\hat x_1, \hat x_2 \in \orbs[\supp(X)](\hat x_1)$,
	and so there is some $\pi \in \fix(\supp(X))$ such that $\pi \cdot \hat x_1 = \hat x_2$. 
	As $\fix(\supp(X)) \subseteq \Perm$, this implies $\hat x_2 \in O_1$, which is a contradiction. 
	Thus, $f$ is a 1-1 function and $|\orb(\cls{}(X))| \le |\orb[\supp(X)](X)|$ as required.	
	
	\item We have that 
	\begin{align*}
		\cls{}(X \times Y) &= \{\pi \cdot (x, y) \mid \pi \in \Perm, (x,y) \in X \times Y\}
		\\ &= \{(\pi \cdot x, \pi \cdot y) \mid \pi \in \Perm, (x, y) \in X \times Y\}
	\end{align*} 
	As $\supp(X) = \emptyset$, this means that $X$ is closed under permutation, and so we obtain:
	\begin{align*}
		\{(\pi \cdot x, \pi \cdot y) \mid \pi \in \Perm, (x, y) \in X \times Y\} &= \{(x, \pi \cdot y) \mid \pi \in \Perm, (x, y) \in X \times Y\}\\
		&= X \times \{\pi \cdot y \mid \pi \in \Perm, y \in Y\} \\ &= X\times \clos(Y)
	\end{align*}
	\item Let $x_{O}$ range over orbit representatives in $\orb(\Xcal), \orb(\Ycal)$ such that for any $O_\Xcal \in \orb(\Xcal)$, $x_{O_\Xcal} \in O_\Xcal$, and similarly for $\Ycal$.
	We define the following function:
	\[
	F(O_\Xcal, O_\Ycal) = \bigcup_{\pi' \in \Perm}\{\pi \cdot (x_{O_\Xcal}, \pi'\cdot x_{O_\Ycal}) \mid \pi \in \Perm\}
	\]
	Where $O_\Xcal \in \orb(\Xcal)$ and $O_\Ycal \in \orb(\Ycal)$. 
	Without loss of generality, let us assume that $\supp(x_{O_\Xcal}) \cap \supp(x_{O_\Ycal}) = \emptyset$. 
	Then, we can say that $\{\pi \circ \pi' \mid \pi, \pi' \in \Perm\} = \{\pi \circ \pi' \mid \pi \in \Perm, \pi' \in \fix(\Atoms \setminus \supp(x_{O_\Xcal}, x_{O_\Ycal})\}$ as both sets are closed under permutation.  
	Let us fix $\vec b_0$ to be some sequencing of $\supp(x_{O_\Ycal})$ and let us write $\vec b' \lesssim \vec b_0$ to indicate $\vec b'$ is a sub-sequence of $\vec b_0$.
	From this, we can say:
	\begin{align*}
		F(O_\Xcal, O_\Ycal) &= \bigcup_{\pi' \in \fix(\Atoms \setminus \supp(x_{O_\Xcal}, x_{O_\Ycal}))}\{\pi \cdot (x_{O_\Xcal}, \pi'\cdot x_{O_\Ycal}) \mid \pi \in \Perm\} \\
		&= \bigcup_{\substack{\vec b_0 \lesssim \vec b,\\ \{\vec a\} \subseteq \supp(x_{O_\Xcal})}} \{\pi \cdot (x_{O_\Xcal}, \move{\vec b}{\vec a}\cdot x_{O_\Ycal}) \mid \pi \in \Perm\} \\
		&= \bigcup_{\substack{\vec b_0 \lesssim \vec b,\\ \{\vec a\} \subseteq \supp(x_{O_\Xcal})}} \orbs((x_{O_\Xcal}, \move{\vec b}{\vec a}\cdot x_{O_\Ycal}))
	\end{align*}
	Next we can write $\orb(\Xcal \times \Ycal)$ in terms of $F$ as follows:
	\begin{align*}
		\orb(\Xcal\times \Ycal) &= \{\{(\pi \cdot x_{O_\Xcal},\pi'\cdot x_{O_\Ycal}) \mid \pi, \pi' \in \Perm\} \mid O_\Xcal \in \orb(\Xcal), O_\Ycal \in \orb(\Ycal)\}\\
		&= \{\{\pi\cdot (x_{O_\Xcal},\pi^{-1}\cdot \pi'\cdot x_{O_\Ycal}) \mid \pi, \pi' \in \Perm\} \mid O_\Xcal \in \orb(\Xcal), O_\Ycal \in \orb(\Ycal)\}\\
		&= \{\{\pi\cdot (x_{O_\Xcal},\pi''\cdot x_{O_\Ycal}) \mid \pi, \pi'' \in \Perm\} \mid O_\Xcal \in \orb(\Xcal), O_\Ycal \in \orb(\Ycal)\}\\
		&= \bigcup_{\substack{O_\Xcal \in \orb(\Xcal),\\ O_\Ycal \in \orb(\Ycal)}}F(O_\Xcal, O_\Ycal)\\
		\therefore\ |\orb(\Xcal \times \Ycal)| &\le \sum_{\substack{O_\Xcal \in \orb(\Xcal),\\ O_\Ycal \in \orb(\Ycal)}}|F(O_\Xcal, O_\Ycal)| 
	\end{align*}
	It remains to calculate $|F(O_\Xcal, O_\Ycal)|$ for any $O_\Xcal \in \orb(\Xcal), O_\Ycal \in \orb(\Ycal)$. 
	Let us write $m_1 = |\supp(x_{O_\Xcal})|$ and $m_2 = |\supp(x_{O_\Ycal})|$,  assuming WLOG that $m_1 \ge m_2$.
	\begin{align*}
		|F(O_\Xcal, O_\Ycal)| &= |\bigcup_{\substack{\vec b \lesssim \vec b_0,\\ \{\vec a\} \subseteq \supp(x_{O_\Xcal})}}\orbs((x_{O_\Xcal}, \move{\vec b}{\vec a}\cdot x_{O_\Ycal}))|\\
		&=
		|\{\move{\vec b}{\vec a} \mid \{\vec a\} \subseteq \supp(x_{O_\Xcal}), \vec b \lesssim \vec b_0\}|
		\\&= |\{\sigma: \supp(x_{O_\Ycal}) \rightharpoonup \supp(x_{O_\Xcal}) \mid \sigma\text{ is a 1-1 function}\}|
		\\ &= m_1(m_1 - 1)\dots(m_1 - m_2 + 1) 
		\\&+ (m_2)(m_1)(m_1 - 1)\dots(m_1 - m_2 + 2) + \dots + 1
		\\ &=\sum^{m_2}_{i = 0} \choice{m_2}{i} m_1(m_1 - 1)\dots(m_1 - i + 1)  
	\end{align*}
        Let us call the above expression $A(m_1,m_2)$.
        Suppose $\{n_1, n_2\} = \{\regindex(\Xcal), \regindex(\Ycal)\}$ such that $n_1 \ge n_2$.
        Then we have:
        \[
        |\orb(\Xcal \times \Ycal)| \le |\orb(\Xcal)|\cdot|\orb(\Ycal)|\cdot A(n_1,n_2)
      \]
      Moreover:
      \begin{align*}
A(n_1,n_2) &\le \sum^{n_2}_{i = 0} 
		\choice{n_2}{i}
		(n_1-1)^i \cdot \frac{n_1}{n_1 - 1}
        \\ &
=  \frac{n_1}{n_1 - 1}\cdot\sum^{n_2}_{i = 0} 
		\choice{n_2}{i}
		(n_1-1)^i 
=  \frac{n_1}{n_1 - 1}\cdot {n_1}^{n_2}=  {n_1}^{n_2}\cdot (1 + \epsilon) 
      \end{align*}
      with $\epsilon=1/(n_1-1)$. We finally show that a bound without $\epsilon$ can be obtained when $n_2\ge 4$. Let us assume that $n_2\ge 4$ and set
      $B(n_1,n_2)=\sum^{n_2}_{i = 0} \choice{n_2}{i} (n_1-1)^i$. We break the sums of $A$ and $B$ as follows:
      \begin{align*}
        A(n_1,n_2) = A'(n_1,n_2)+A''(n_1,n_2) &=
\sum^{4}_{i = 0} \choice{n_2}{i} n_1(n_1 - 1)\dots(n_1 - i + 1)\\ & +
\sum^{n_2}_{i = 5} \choice{n_2}{i} n_1(n_1 - 1)\dots(n_1 - i + 1)\\
B(n_1,n_2) = B'(n_1,n_2)+B''(n_1,n_2) &=
\sum^{4}_{i = 0} \choice{n_2}{i} (n_1 - 1)^i+
\sum^{n_2}_{i = 5} \choice{n_2}{i} (n_1 - 1)^i
\end{align*}                                                
Note first that $A''(n_1,n_2)\leq B''(n_1,n_2)$ as:
\begin{align*}
& n_1(n_1-1)(n_1-2)(n_1-3)(n_1-4)\leq (n_1-1)^5\\
&\iff n_1(n_1-2)(n_1-3)(n_1-4)\leq (n_1-1)^4 \\
&\iff n_1^4-9n_1^3+26n_1^2-24n_1\leq n_1^4-4n_1^3+6n_1^2-4n_1+1\\
&\iff -5n_1^3+20n_1^2-20n_1-1\leq 0
\end{align*}
Next, writing $\kappa_i=\choice{n_2}{i}$, we have:
\begin{align*}
  A'(n_1,n_2) &= 1+n_2n_1+\kappa_2n_1(n_1-1)+\kappa_3n_1(n_1-1)(n_1-2)+\kappa_4n_1(n_1-1)(n_1-2)(n_1-3)\\
 B'(n_1,n_2) &= 1+n_2(n_1-1)+\kappa_2(n_1-1)^2+\kappa_3(n_1-1)^3+\kappa_4(n_1-1)^4
\end{align*}
and, hence:
\begin{align*}
  A'(n_1,n_2)-B'(n_1,n_2) &= n_2+\kappa_2(n_1-1)+\kappa_3(-n_1+1)+\kappa_4(-2n_1^3+5n_1^2-2n_1-1)\\
  &= -2\kappa_4n_1^3+5\kappa_4n_1^2+(\kappa_2-\kappa_3-2\kappa_4)n_1+(n_2-\kappa_2+\kappa_3-\kappa_4)
\end{align*}
so, writing $\kappa'_i=\choice{n_2}{i}/n_2$, we have $A'(n_1,n_2)-B'(n_1,n_2)\leq 0$ iff
\begin{align*}
D(n_1,n_2) = -2\kappa_4'n_1^3+5\kappa_4'n_1^2+(\kappa_2'-\kappa_3'-2\kappa_4')n_1+(1-\kappa_2'+\kappa_3'-\kappa_4')\leq0
\end{align*}
Differentiating $D(n_1,n_2)$ for $n_1$ we get:
\begin{align*}
\frac{\partial D}{\partial n_1}= -6\kappa_4'n_1^2+10\kappa_4'n_1+(\kappa_2'-\kappa_3'-2\kappa_4')
\end{align*}
By some elementary analysis we can see that the above is less or equal to 0 when $n_1,n_2\geq 4$. Given that $n_1\geq n_2$, it suffices to show that $D(n_2,n_2)\leq 0$ for $n_2\geq 4$. For the latter, we note that $D(4,4)< 0$ and that $D(n_2,n_2)$ is decreasing for $n\leq4$.
\end{enumerate}
\end{proof}

\completelattice*
\begin{proof}
	First, we show that $\Ucal_n$ is a complete lattice. 
	It suffices to show that for any $S\subseteq \Ucal_n$, $\bigsqcup S$ is the least lower bound of $S$, and $\bigsqcap S$ is the greatest upper bound of $S$. In order to show that $\bigsqcup S$ is the least upper bound of $S$, we must show that:
 \begin{enumerate}
		\item $\forall f\in S$. $f\sqsubseteq \bigsqcup S$
		\item $\forall g\in \Ucal_n.\ (\forall f\in S. f\sqsubseteq g)\implies\bigsqcup S \sqsubseteq g$
	\end{enumerate}
 	For case 1 with $f\in S$, we have that $\forall \vec a\in\Atoms. f(\vec a) \subseteq (\bigsqcup S)(\vec a)$
	and hence $f\sqsubseteq \bigsqcup S$.\\
  	For case 2, let us pick such a $g$. It follows that for all $f\in S$ and $\vec a$ we have that $f(\vec a) \subseteq g(\vec a)$,
	meaning we have $g(\vec a) \supseteq \bigcup \{f(\vec a) \mid f \in S\}$.
	Therefore, $\bigsqcup S \sqsubseteq g$.
	To show that $\bigsqcap S$ is the greatest lower bound, we must show that:
	\begin{enumerate}
		\item $\forall f\in S$. $f\sqsupseteq \bigsqcap S$
		\item $\forall g\in \Ucal_n.\ (\forall f\in S. f\sqsupseteq g)\implies\bigsqcap S \sqsupseteq g$
	\end{enumerate}
	For case 1 with $f\in S$, we have that $\forall \vec a\in\Atoms. f(\vec a)\supseteq (\bigsqcap S)(\vec a)$ and hence $f\sqsupseteq \bigsqcap S$.\\
	For case 2, let us pick such a $g$. It follows that for all $f \in S$ and $\vec a$ we have that $f(\vec a)\supseteq g(\vec a)$, meaning we have $g(\vec a) \subseteq \bigcap\{(f(\vec a)\mid f \in S\}$. Therefore, $\bigsqcap S \sqsupseteq g$.     
	With a least upper bound (the join) and a greatest lower bound (the meet), we conclude that $\Ucal_n$ is a complete lattice. 
	
	Next, we show that the function $\lambda f.\lambda \vec x.\sem[{\xi[X\mapsto f]}]{\phi}$ is monotone.
	For any $\phi',\xi', \vec y, Y$, we define:
	\[\F[\xi']{\phi'}{\vec y}{Y} = \lambda f.\lambda \vec y.\sem[{\xi'[Y\mapsto f]}]{\phi}\]
	 We need to show that for all $\vec c \in \Atoms^{|\vec x|}$
	and $f, f'\in \Atoms^{|\vec x|} \rightarrow \Ucal$ such that $f \sqsubseteq f'$: 
	\begin{align*}
		\F{\phi}{\vec x}{X}(f)(\vec c) \subseteq \F{\phi}{\vec x}{X}(f')(\vec c) &\text{ when $X$ has an even number of negations}\\
		\F{\phi}{\vec x}{X}(f)(\vec c) \supseteq \F{\phi}{\vec x}{X}(f')(\vec c) &\text{ when $X$ has an odd number of negations}
	\end{align*}
	We perform an induction on $|\phi|$
	and state our inductive hypotheses (denoted \textit{IH}) for every $\phi'$ with $|\phi'| < |\phi|$ as follows:
	\begin{enumerate}
		\item $\F{\phi'}{\vec x}{X}(f) \sqsubseteq \F{\phi'}{\vec x}{X}(f')$ ($\phi$ has an even
		number of negations)
		\item $\F{\phi'}{\vec x}{X}(f) \sqsupseteq \F{\phi'}{\vec x}{X}(f')$ ($\phi$ has a odd number of negations)
	\end{enumerate}
	First, we examine the case where $X$ is not present in $\phi$. For any $\vec c \in \Atoms^{|\vec x|}$:
	\[
	\F{\phi}{\vec x}{X}(f)(\vec c) = \F{\phi}{\vec x}{X}(f')(\vec c) = \sem{\phi\{\vec c / \vec x\}}
	\]
	We now examine all cases where $X$ is present in $\phi$, with focus on the cases where every occurrence of $X$ is within an even number of negations. 
	The base case is when $\phi = X(\vec a)$:
	\begin{align*}
		\F{\phi}{\vec x}{X}(f)(\vec a) &= \xi(X)(\vec a) = f(\vec a)\\
		\F{\phi}{\vec x}{X}(f')(\vec a) &= \xi(X)(\vec a) = f'(\vec a)
	\end{align*}
	As $f\sqsubseteq f'$, we have that $f(\vec a) \subseteq f'(\vec a)$ for all $\vec a \in \Atoms^{|\vec x|}$, 
	meaning $\F{\phi}{\vec x}{X}(f) \sqsubseteq \F{\phi}{\vec x}{X}(f')$.\\
	Next we examine inductive steps for all $\vec c \in \Atoms^{|\vec x|}$:
	\begin{itemize}
		\item $\phi = \lnot \phi'$: for $\hat f \in \{f, f'\}$, we have
		\begin{align*}
			&\F{\phi}{\vec x}{X}(\hat  f)(\vec c)  = \fstates \setminus \sem[{\xi[X \mapsto \hat f]}]{\phi'(\vec c)} =  \fstates \setminus \F{\phi'}{\vec x}{X}(\hat f)(\vec c)
		\end{align*}
		By the inductive hypothesis, we have that $\F{\phi'}{\vec x}{X}(f) \sqsupseteq \F{\phi'}{\vec x}{X}(f')$, meaning for all $\vec c \in \Atoms^{|\vec x|}$, we have that
		\begin{align*}
			\gap\F{\phi'}{\vec x}{X}(f)(\vec c) \supseteq \F{\phi'}{\vec x}{X}(f')(\vec c) \\
			\therefore\gap\F{\phi}{\vec x}{X}(f)(\vec c) \subseteq \F{\phi}{\vec x}{X}(f')(\vec c)
		\end{align*}
		as required.
		Note we use the second inductive hypothesis as $\lnot \phi'$ has one more negation than $\phi'$.
		\item $\phi = \phi_1 \lor \phi_2$: for $\hat f \in \{f, f'\}$, we have
		\[
		\F{\phi}{\vec x}{X}(\hat f)(\vec c) = \sem[{\xi[X\mapsto \hat f]}]{\phi_1\{\vec c / \vec x\})} \cup \sem[{\xi[X\mapsto \hat f]}]{\phi_2\{\vec c / \vec x\}} = \F{\phi_1}{\vec x}{X}(\hat f)(\vec c) \cup \F{\phi_2}{\vec x}{X}(\hat f)(\vec c)
		\]
		By the inductive hypothesis, we have that $\F{\phi_i}{\vec x}{X}(f)(\vec c) \subseteq \F{\phi_i}{\vec x}{X}(f')(\vec c)$ for $i \in \{1,2\}$,
		therefore
		$\F{\phi}{\vec x}{X}(f)(\vec c)\subseteq \F{\phi}{\vec x}{X}(f')(\vec c)$.
		\item $\phi = \bigvee_{y} \phi'$: for $\hat f \in \{f,f'\}$, we have
		\[
		\F{\phi}{\vec x}{X}(\hat f)(\vec c) = \bigcup_{a\in \Atoms}\sem[{\xi[X\mapsto \hat f]}]{\phi'\{a/y\}\{\vec c / \vec x\}} = \bigcup_{a \in \Atoms}\F{\phi'\{a/y\}}{\vec x}{X}(\hat f)(\vec c)
		\]
		As for all $a$, $|\phi| > |\phi'\{a/y\}|$, we can use the inductive hypothesis, therefore:    \[
		\bigcup_{a \in \Atoms}\F{\phi'\{a/y\}}{\vec x}{X}(f)(\vec c) \subseteq  \bigcup_{a \in \Atoms}\F{\phi'\{a/y\}}{\vec x}{X}(f')(\vec c)
		\]
		as required.
	    \item $\phi = \fresh{y} \phi'$: for $\hat f \in \{f,f'\}$, we have
		\begin{align*}
			\F{\phi}{\vec x}{X}(\hat f)(\vec c) &= \bigcup_{a\in \Atoms}\{(s, H) \in \sem[{\xi[X\mapsto \hat f]}]{\phi'\{a/y\}\{\vec c/\vec x\}} \mid a \notin H\} 
			\\ &= \bigcup_{a \in \Atoms}\{(s, H) \in \F{\phi'\{a/y\}}{\vec x}{X}(\hat f)(\vec c) \mid a \notin H\}
		\end{align*}
		As for all $a$, $|\phi| > |\phi'\{a/y\}|$, we can use the inductive hypothesis.
		Using the inductive hypothesis, if $(s, H) \in \F{\phi'\{a/y\}}{\vec x}{X}(f)(\vec c)$ then $(s, H) \in \F{\phi'\{a/y\}}{\vec x}{X}(f')(\vec c)$, thus we can conclude.
		\item $\phi = \diam{\ell} \phi'$: for $\hat f \in \{f,f'\}$, we have:
		\begin{align*}
		&\F{\phi}{\vec x}{X}(\hat f)(\vec c) = \{(s, H) \in \fstates \mid \exists s\xrightarrow{\ell}(s', H').(s', H')\in \sem[{\xi[X\mapsto \hat f]}]{\phi'\{\vec c / \vec x\}} \} 
		\\=\ & \{(s', H') \in \fstates \mid \exists s\xrightarrow{\ell}(s', H').(s', H') \in \F{\phi'}{\vec x}{X}(\hat f)(\vec c)\}
		\end{align*}
		Using the inductive hypothesis, if $(s', H')\in \F{\phi'}{\vec x}{X}(f)(\vec c)$ then $(s', H')\in \F{\phi'}{\vec x}{X}(f')(\vec c)$,
		thus, we can conclude.
		\item $\phi = (\mu Y(\vec y).\phi')(\vec a)$: for $\hat f \in \{f,f'\}$, we have:
    	\begin{align*}
			\F{\phi}{\vec x}{X}(\hat f)(\vec c) =\gap (\lfp(\lambda g. \lambda \vec d.\sem[{\xi[X\mapsto \hat f][Y\mapsto g]}]{\phi'(\vec c)\{\vec d/\vec y\}}))(\vec a) \\
			=\gap \bigsqcap \{g\in \Ucal_n\mid 
			\lambda \vec d. \sem[{\xi[X\mapsto \hat f][Y\mapsto g]}]{\phi'(\vec c)\{\vec d/\vec y\}}\sqsubseteq g\}(\vec a)\\
			=\gap\bigsqcap \{g\in \Ucal_n\mid 
			\lambda \vec d. \sem[{\xi[X\mapsto \hat f][Y\mapsto g]}]{\phi'\{\vec d/\vec y\}(\vec c)}\sqsubseteq g\}(\vec a)\\
			=\gap \bigsqcap\{g \in \Ucal_n\mid \lambda \vec d.\F[{\xi[Y\mapsto g]}]{\phi'\{\vec d/\vec y\}}{\vec x}{X}(\hat f)(\vec c) \sqsubseteq g\}(\vec a)\\
			=\gap \bigsqcap G_{\hat f, \vec c}(\vec a) &[\text{for economy}]
		\end{align*}
		Let us take some $ g' \in G_{f', \vec c}$.
		We have that $\lambda \vec y.\F[{\xi[Y\mapsto g']}]{\phi'}{\vec x}{X}(f')(\vec c) \sqsubseteq g'$ by definition.
		Furthermore, by the inductive hypothesis,
		we have that 
		$$\lambda \vec y.\F[{\xi[Y\mapsto g']}]{\phi'}{\vec x}{X}(f)(c)\sqsubseteq\lambda \vec y.\F[{\xi[Y\mapsto g']}]{\phi'}{\vec x}{X}(f')(c) \sqsubseteq g'$$
		Hence, $g' \in G_{f, \vec c}$ and so: 
		\[
			G_{f', \vec c} \subseteq G_{f, \vec c} \text{ and therefore } \bigsqcap G_{f, \vec c} (\vec a) \subseteq \bigsqcap G_{f', \vec c} (\vec a)
		\]
		as required.
	\end{itemize}

\end{proof}
	
\subsection{Proof of \cref{lem:semanticsequivariant}}
\semanticsequivariant*
\begin{proof}
	We perform proof by induction on $\phi$.
	Let $\pi$ be some permutation, we show that
	$\semp{\phi} = \pi\cdot(\sem{\phi})$.
	The bases cases are as follows:
	\begin{itemize}
		\item $a=b: \semp{(a=b)} = \sem[\pi\cdot \xi]{\pi(a) = \pi(b)} = \emptyset = \pi\cdot \emptyset = \pi\cdot\sem{a=b}$.
		\item $a=a: \semp{(a=a)} = \sem[\pi\cdot \xi]{\pi(a) = \pi(a)} = \fstates = \pi\cdot \fstates = \pi\cdot\sem{a=a}$.
		\item $X(\vec a): \semp{(X(\vec a))} = \sem[\pi\cdot\xi]{X(\pi\cdot \vec a)} = (\pi\cdot\xi)(X)(\pi\cdot \vec a)$.
		We have that $\pi\cdot\xi = \{\pi\cdot (Y, U)\mid (Y,U)\in \xi\} = \{(Y, \pi \cdot U)\mid (Y,U)\in \xi\}$. 
		Let us say $\xi(X) = U_x$, then $(\pi\cdot\xi)(X) = \pi\cdot U_x$. 
		From this we can say that $\pi\cdot U_x = \{\pi\cdot (\vec b, S)\mid (\vec b, S) \in U_x\}=\{(\pi\cdot \vec b, \pi\cdot S)\mid (\vec b, S) \in U_x\}$.
		It follows that $(\pi\cdot U_x)(\pi\cdot \vec a) = \pi\cdot S$
		which is equal to 
		$\pi\cdot (\xi(X)(\vec a)) = \pi\cdot \sem{X(\vec a)}$.
	\end{itemize}
	We now examine inductive steps.\
	\begin{itemize}
		\item $\phi_1 \vee \phi_2$:
		we have that $\sem[\pi\cdot \xi]{\pi\cdot \phi_1 \vee \pi\cdot \phi_2} = \semp{\phi_1} \cup \semp{\phi_2}$, by the inductive hypothesis, the latter is $\pi\cdot \sem{\phi_1} \cup \pi\cdot \sem{\phi_2}=\pi\cdot (\sem{\phi_1}\cup\sem{\phi_2})$ as required. 
		\item $\lnot \phi_1$: 
		By definition, $\semp{(\lnot\phi_1)} = \fstates\setminus\semp{\phi_1}$, which by applying the inductive hypothesis gives $\fstates\setminus\pi\cdot (\sem{\phi_1})$.
		As $\fstates = \pi\cdot \fstates$, this is equal to $\pi\cdot \fstates\setminus\pi\cdot (\sem{\phi_1})=\pi\cdot(\fstates\setminus(\sem{\phi_1}))=\pi\cdot(\sem{\lnot\phi_1})$. 
		\item $\phi=\bigvee_{x}\phi_1$:
		In this case, $\semp{(\bigvee_{x}\phi_1)} = \bigcup_{a\in\Atoms}\semp{(\phi_1\{a/x\})}$. By inductive hypothesis, the latter is $\bigcup_{a\in\Atoms}\pi\cdot (\sem{(\phi_1\{a/x\})})=\pi\cdot (\bigcup_{a\in\Atoms}\sem{(\phi_1\{a/x\}})=\pi\cdot (\sem{\bigvee_{x}\phi_1})$.
		\item $\phi=\fresh{x}\phi'$:
		In this case, $\semp{(\fresh{x}\phi')} = \bigcup_{a\in\Atoms}\{(s, H) \in \semp{(\phi'\{a/x\})} \mid a \notin H\}$. 
		By inductive hypothesis, the latter is $\bigcup_{a\in\Atoms}\{(s, H) \in \pi \cdot \sem{(\phi'\{a/x\})} \mid a \notin H\}
		=\pi \cdot \bigcup_{a\in\Atoms}\{(s, H) \in \sem{(\phi'\{a/x\})} \mid a \notin H\}$.
		\item $\diam{\ell}\phi_1:$ we have that
		\begin{align*}
			\semp{(\diam{\ell}\phi_1)} &= \{s\in\fstates\mid\exists s\xrightarrow{\pi\cdot\ell} s' . s' \in \semp{\phi_1}\}
			= \{s\in\fstates\mid\exists s\xrightarrow{\pi\cdot\ell} s' . s' \in \pi\cdot(\sem{\phi_1})\}\\
			&= \{s\in\fstates\mid\exists s\xrightarrow{\pi\cdot\ell} \pi\cdot s' . \pi\cdot s' \in \pi\cdot(\sem{\phi_1})\}\ (\text{Rewrite}\ [s' = \pi\cdot s'])\\ 
			&= \{s\in\fstates\mid\exists s\xrightarrow{\pi\cdot\ell} \pi\cdot s' . s' \in \sem{\phi_1}\}
			= \{s\in\fstates\mid\exists \pi^{-1}\cdot s\xrightarrow{\ell} s' . s' \in \sem{\phi_1}\}\\
			&= \{\pi\cdot s\in\fstates\mid\exists s\xrightarrow{\ell} s' . s' \in \sem{\phi_1}\}
			= \pi\cdot\{s\in\fstates\mid\exists s\xrightarrow{\ell} s' . s' \in \sem{\phi_1}\}\\
			&= \pi\cdot\sem{(\diam{\ell}\phi_1)}
		\end{align*}
		\item $(\mu X(\vec x).\phi)(\vec a):$ we have that
    \begin{align*}
	\pi\cdot \sem{(\mu X(\vec x).\phi)(\vec a)} &= \pi\cdot(\lfp(\lambda f.\lambda\vec b. \sem[{\xi[X\mapsto f]}]{\phi\{\vec b/\vec x\}}))(\vec a)\\
	&= (\lfp(\lambda f.\lambda\vec b. \pi\cdot\sem[{\xi[X\mapsto (\pi^{-1}\cdot f)]}]{\phi\{\pi^{-1} \cdot \vec b/\vec x\}}))(\pi\cdot \vec a)\\
	&= (\lfp(\lambda f.\lambda\vec b. \semp[\pi\cdot(\xi{[X\mapsto \pi^{-1}\cdot f]})]{\phi\{\pi^{-1}\cdot \vec b / \vec x\}}))(\pi\cdot \vec a)\ (IH)\\
	&= (\lfp(\lambda f.\lambda\vec b. \sem[(\pi\cdot\xi){[X\mapsto f]})]{(\pi \cdot\phi)\{\vec b / \vec x\}}))(\pi\cdot \vec a)\\
	&= \sem[\pi\cdot\xi]{(\mu X(\vec x).(\pi\cdot\phi)(\vec a)}\\
	&= \semp{(\mu X(\vec x).\phi(\vec a)}
\end{align*}
		
	\end{itemize}
\end{proof}
%
\subsection{Proof of \cref{thm:semanticswelldefined}}
\semanticswelldefined*
\begin{proof}
For the former, we must show that least fixpoints exist for any $\lfp(\lambda f.\lambda \vec b.\sem[{\xi[X \mapsto f]}]{\phi\{\vec b/\vec x\}})$. 
By definition, the function $F = \lambda f.\lambda \vec b.\sem[{\xi[X \mapsto f]}]{\phi\{\vec b/\vec x\}}$ is in $\Ucal_{\ar(X)}$ and by Lemma~\ref{lem:completelattice}, $\Ucal_{\ar(X)}$ is a complete lattice and $F$ is monotone. 
Thus, by the Knaster-Tarski theorem~\cite{Tarski}, the least fixpoint of any such $F$ exists.
\\
The proof of the latter follows from \cref{lem:semanticsequivariant}.
\end{proof}

\section{Proofs from \cref{sec:modelchecking}}
We shall be using the following generalised notion of orbits for a subset of a nominal set.

\begin{definition}
  Let $\Xcal$ be a nominal set and $S\subset\Xcal$ have finite support. For each finite $N\subseteq\Atoms$ we set:\
$
\orb[N](S) = \{ \orbs[N](x)\mid x\in S\} 
$.
\end{definition}

\subsection{Proof of \cref{lem:negfreesize}}
\negfreesize*

\begin{proof}
	We perform an induction on the measure $(m, n)$ of $\phi$, where $m$ is the number of non-negation connectives and constructs that appear in $\phi$, 
	with the following base cases:
 	\begin{itemize}
		\item $|!(u = v)| = |u = v| = 2$,
		\item $|!(u \neq v)| = |u \neq v| = 2$,
		\item $X(\vec u) = {!(X(\vec u))}$, hence they have the same size.
	\end{itemize}
	Next, we perform inductive steps.
	The cases where $\phi$ is not of the form $\lnot \psi$ follow from the inductive hypothesis. 
	For instance, if $\phi = \phi_1 \lor \phi_2$, then 
	\[
		|\phi| = |\phi_1| + |\phi_2| + 1 \overset{\text{IH}}{\le} |!(\phi_1)| + |!(\phi_2)| + 1 = |!(\phi)|
	\]
	Thus, we only examine the more interesting cases where $\phi$ is of the form $\lnot \psi$. 
	We perform a case analysis on $\phi$:
	    \begin{itemize}
		\item $\lnot (u = v)$: this formula has $n = 1$, and $|\lnot (u = v)| = |\{u,v\}|$. 
		Then, we have
		$|!(\lnot(u = v))| = |u \neq v| = |\lnot (u = v)| - n = |\{u,v\}|$.
		Similar can be shown for $\lnot (u \neq v)$.
		\item $\lnot (\phi_1 \lor \phi_2)$: this formula has size $2 + |\phi_1| + |\phi_2|$. 
		By definition, we have that $|!(\lnot (\phi_1 \lor \phi_2))| = |!(\lnot \phi_1) \land !(\lnot \phi_2)| = |!(\lnot \phi_1)| + |!(\lnot \phi_2)| + 1$. 
		Let us write $n_1$ for the number of negations in $\phi_1$, and $n_2$ for the number of negations in $\phi_2$. 
		Thus, the number of negations in $\phi$ would be $n = n_1 + n_2 + 1$.
		The inductive hypothesis can be used as each $\lnot \phi_i$ will have one less connective than $\phi$, that being the $\lor$, meaning
		we have that $|!(\lnot \phi_i)| \le |\lnot \phi_i| - n_i - 1$ for $i \in \{1,2\}$.
		Hence, we have:
		\begin{align*}
			|!(\lnot \phi_1)| +| !(\lnot \phi_2)| + 1 & \le (|\lnot\phi_1| -n_1-1) + (|\lnot\phi_2| -n_2- 1) + 1\\
			&= |\phi_1| + |\phi_2| -(n_1+n_2+1)+ 2\\
			&= (|\phi_1| + |\phi_2| + 2) - n
		\end{align*}
		Similar can be shown for $\lnot (\phi_1 \land \phi_2)$.
		\item $\lnot \bigvee\nolimits_x \phi'$: this formula has size $2 + |\phi'|$. 
		By definition, we have that $!(\lnot \bigvee\nolimits_x \phi') = \bigwedge\nolimits_x !(\lnot \phi')$. 
		The inductive hypothesis can be used as $\lnot \phi'$ has one less connective than $\phi$, that being $\bigvee\nolimits_x$, meaning
		we have that $|!(\lnot \phi')| \le |\lnot \phi'| - n$.
		Hence, we have:
		\begin{align*}
			|!(\lnot \phi')| + 1 &\le (|\lnot \phi'| - n) + 1= |\phi'| - n + 2
		\end{align*}    
		A similar argument can be shown for $\lnot\bigwedge\nolimits_x \phi'$.
        \item $\lnot \fresh{x}\phi'$ this formula has size $2 + |\phi'|$. 
		By definition, we have that $!(\lnot \fresh{x} \phi') = \fresh{x}!(\lnot \phi')$. 
		The inductive hypothesis can be used as $\lnot \phi'$ has one less connective than $\phi$, that being $\fresh{x}$, meaning
		we have that $|!(\lnot \phi')| \le |\lnot \phi'| - n$.
		Hence, we have:
		\begin{align*}
			|!(\lnot \phi')| + 1 &\le (|\lnot \phi'| - n) + 1= |\phi'| - n + 2
		\end{align*}
		\item $\lnot \diam{t, \vec u} \phi'$: 
		this formula has size $2 + |\vec u| + |\phi'|$. 
		By definition, we have that $!(\lnot\diam{t, \vec u}\phi') = \sq{t, \vec u}!(\lnot \phi')$.
		The inductive hypothesis can be used as $\lnot \phi'$ has one less connective than $\phi$, that being $\diam{t, \vec u}$, meaning we have that $|!(\lnot \phi')| \le |\lnot \phi'| - n$. 
		Hence, we have:
		\[
		|!(\lnot \phi')| + 1 + |\vec u| \le (|\lnot \phi'| - n) + 1 + |\vec u| = 2 + |\phi'| + |\vec u| - n 
		\]
		A similar argument can be shown for $\lnot\sq{t, \vec u}\phi'$.
		\item $\lnot (\mu X(\vec x).\phi')(\vec u)$: this formula has size $2 + |\phi'| + |\vec u|$. 
		By definition, we have that $!(\lnot (\mu X(\vec x).\phi')(\vec u)) = (\nu X(\vec x).!(\lnot \phi'_X))(\vec u)$.
		The inductive hypothesis can be used as $\lnot \phi'_X$ has one less construct than $\phi$, that being $\mu X$, meaning
		we have that $|!(\lnot\phi'_X)| \leq |\lnot \phi'_X| - n'$ where $n'$ is the number of negations in $\lnot \phi'_X$.
		This means that $|\lnot \phi'_X| - n' = |\phi'| - n$ as both have the number of negations removed,
		hence, we have:
		\begin{align*}
			1 + |!(\lnot \phi'_X)| + |\vec u| &\le 1 + |\lnot \phi'_X| - n' + |\vec u| \\
			&= 1 + |\phi'| - n + |\vec u|\\
			&\le 2 + |\phi'| + |\vec u| - n
		\end{align*}
		A similar argument can be shown for $\lnot (\nu X(\vec x).\phi')(\vec u)$.
		\item $\lnot \lnot \phi'$: this formula has size $2 + |\phi'|$. 
		By definition, we have that $!(\lnot \lnot \phi') = !(\phi')$. 
		The inductive hypothesis can be used as the number of connectives for $\lnot \lnot \phi'$ and $\phi'$ are the same, however, there are two less negations. 
		Hence, by the inductive hypothesis we have $|!(\phi')| \le |\phi'| - (n - 2)$, 
		meaning $|!(\phi)| = |!(\phi')| \le |\phi'| + 2 - n $.
	\end{itemize}
%
\end{proof}

\subsection{Proof of  \cref{lem:freshClosProps}}

\begin{lemma}\label{lem:unionOrb2}
	Given two sets $X_1,X_2\subseteq \Xcal$ such that $\supp(X_i) \subseteq \S \subseteq \Atoms$ for $i \in \{1, 2\}$,
	then $\orb[\S](X_1 \cup X_2) = \orb[\S](X_1) \cup \orb[\S](X_2)$.
\end{lemma}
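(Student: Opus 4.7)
The plan is to prove the equality by direct unfolding of the definition of $\orb[\S]$ given just above the lemma. Recall that for any finitely supported $Y \subseteq \Xcal$, we have $\orb[\S](Y) = \{\orbs[\S](y) \mid y \in Y\}$. Thus, starting from $\orb[\S](X_1 \cup X_2)$, the LHS expands as $\{\orbs[\S](x) \mid x \in X_1 \cup X_2\}$, and one only has to apply the elementary set-comprehension identity
\[
\{f(x) \mid x \in A \cup B\} = \{f(x) \mid x \in A\} \cup \{f(x) \mid x \in B\}
\]
with $f = \orbs[\S](-)$, $A = X_1$, $B = X_2$, to obtain $\{\orbs[\S](x) \mid x \in X_1\} \cup \{\orbs[\S](x) \mid x \in X_2\}$, which by definition is $\orb[\S](X_1) \cup \orb[\S](X_2)$.

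So I would present the proof as the single chain of equalities
\[
\orb[\S](X_1 \cup X_2) = \{\orbs[\S](x) \mid x \in X_1 \cup X_2\} = \{\orbs[\S](x) \mid x \in X_1\} \cup \{\orbs[\S](x) \mid x \in X_2\} = \orb[\S](X_1) \cup \orb[\S](X_2).
\]
The hypothesis $\supp(X_i) \subseteq \S$ is not strictly needed for this algebraic identity, but it ensures that each $X_i$ is closed under permutations fixing $\S$, so $\orbs[\S](x) \subseteq X_i$ for every $x \in X_i$, and therefore each $\orb[\S](X_i)$ is genuinely a partition of $X_i$ into $\S$-orbits. This is what gives the lemma its intended meaning when it is later applied (e.g.\ in complexity estimates of orbit-counts) where $\S$ plays the role of a common supporting set.

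There is no real obstacle here: the statement is an immediate corollary of the set-comprehension definition of $\orb[\S]$, and the only reason it is worth recording as a lemma is its repeated use in later counting arguments, where having the common parameter $\S$ (rather than the instance-dependent $\supp(X_i)$) makes the orbit decomposition additive over unions.
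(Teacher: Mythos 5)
Your proof is correct and is essentially identical to the paper's own argument: both unfold the generalised definition $\orb[\S](Y)=\{\orbs[\S](y)\mid y\in Y\}$ and apply the elementary identity that a set-comprehension over a union splits as a union of set-comprehensions. Your additional remark that the hypothesis $\supp(X_i)\subseteq\S$ is not needed for the identity itself (only for the later interpretation of $\orb[\S](X_i)$ as a genuine orbit decomposition) is accurate and a nice observation, but does not change the substance of the proof.
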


\begin{proof}
	By definition, we have that:
	\begin{align*}
		\orb[\S](X_1 \cup X_2) &= \{\orbs[\S](x) \mid x \in X_1 \cup X_2\} 
		\\ &= \{\orbs[\S](x) \mid x \in X_1\} \cup \{\orbs[\S](x) \mid x \in X_2\} 
		\\ &= \orb[\S](X_1) \cup \orb[\S](X_2)    
	\end{align*}
\end{proof}

\begin{lemma}\label{lem:unionOrb3} 
	Let $\S \subseteq \Atoms$ such that $\S$ is finite and $\Xcal$ be a nominal set.
	Then for any $X \subseteq \Xcal$ and $b \notin \S$ such that $\supp(X) \subseteq \S \cup \{b\}$, 
	there exists a 1-1 function 
	$f: \orb[\S](\bigcup_{a \notin \S} (a\ b)\cdot X) \rightarrow \orb[\S \cup \{b\}](X)$.
\end{lemma}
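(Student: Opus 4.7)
The plan is to construct $f$ indirectly: define a surjection in the reverse direction and take a right inverse, which is automatically injective. Writing $X' = \bigcup_{a\notin\S}(a\ b)\cdot X$ for brevity, I define
\[
  g : \orb[\S\cup\{b\}](X) \to \orb[\S](X'),\qquad g(\orbs[\S\cup\{b\}](x)) = \orbs[\S](x),
\]
then verify that $g$ is a well-defined surjection and let $f$ be any right inverse of $g$.

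Well-definedness of $g$ is routine: if $x_1, x_2 \in X$ satisfy $\pi \cdot x_1 = x_2$ for some $\pi$ fixing $\S\cup\{b\}$, then $\pi$ also fixes $\S$, so $\orbs[\S](x_1) = \orbs[\S](x_2)$. The codomain is correct because $X \subseteq X'$: the case $a = b$ is permitted since $b \notin \S$, and contributes $(b\ b)\cdot X = X$ to the union.

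The key step is surjectivity of $g$. For any $y \in X'$, by construction $y = (a\ b)\cdot x$ for some $a \notin \S$ and $x \in X$. Since both $a$ and $b$ lie outside $\S$, the transposition $(a\ b)$ fixes $\S$ pointwise, so $x = (a\ b)\cdot y$ lies in the $\S$-orbit of $y$. Hence $\orbs[\S](y) = \orbs[\S](x) = g(\orbs[\S\cup\{b\}](x))$, and $g$ is surjective.

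The main subtlety, and the reason for going via $g$ rather than defining $f$ directly, is that the tempting construction—choosing for each $\orbs[\S](y)$ a witness $(a,x)$ with $y = (a\ b)\cdot x$ and returning $\orbs[\S\cup\{b\}](x)$—is not well-defined. If two witnesses $y_i = (a_i\ b)\cdot x_i$ ($i=1,2$) are $\S$-equivalent via some $\pi$, the conjugate $(a_2\ b)\,\pi\,(a_1\ b)$ carrying $x_1$ to $x_2$ fixes $\S$ but generally not $b$, so the outputs need not coincide. Routing through $g$ avoids the issue entirely: well-definedness now only requires passage from a finer fixing set to a coarser one, and surjectivity is witnessed uniformly by the very transpositions used to build $X'$.
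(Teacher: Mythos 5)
Your proof is correct, and it takes a genuinely different route from the paper's. The paper constructs the injection $f$ directly: for each orbit $O\in\orb[\S](\bigcup_{a\notin\S}\swp{a}{b}\cdot X)$ it picks a witness $\swp{a}{b}\cdot x$ with $x\in X$, sets $f(O)=\orbs[\S\cup\{b\}](x)$, and proves injectivity by a conjugation argument: from $\pi'\in\fix(\S\cup\{b\})$ with $\pi'\cdot x_1=x_2$ it builds $\hat\pi=\swp{a}{b}\circ\pi'\circ\swp{a}{b}\in\fix(\S)$ carrying $\swp{a}{b}\cdot x_1$ to $\swp{a}{b}\cdot x_2$. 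You instead exhibit the reverse map $g$ as a well-defined surjection and take a right inverse. The resulting functions are essentially the same (a right inverse of your $g$ is precisely a choice of witness $x\in X$ per $\S$-orbit of $X'$), but your route buys two things: surjectivity of $g$ is immediate from the very transpositions used to build $X'$ (using, as you note, that $a=b$ is permitted so $X\subseteq X'$), and you sidestep the well-definedness question that the paper dismisses with a one-line remark. Your observation that the naive direct definition is not independent of the witness is accurate --- the conjugate $\swp{a_2}{b}\circ\pi\circ\swp{a_1}{b}$ fixes $\S$ but not necessarily $b$ --- though it is worth noting that the paper only needs $f$ to be a function-by-choice followed by an injectivity proof, and that injectivity does go through via the conjugation above even for distinct witnesses $a_1\neq a_2$ (a case the paper's write-up elides by reusing a single $a$ for both elements). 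Both arguments ultimately rest on the same two facts: transpositions $\swp{a}{b}$ with $a,b\notin\S$ fix $\S$ pointwise, and $\fix(\S\cup\{b\})\subseteq\fix(\S)$.
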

\begin{proof}
	We define the function $f$ as follows: for each $O \in \orb[\S](\bigcup_{a \notin \S} (a\ b)\cdot X)$, such that $O = \orbs[\S]((a\ b) \cdot x)$ for some $x \in X$ and some $a \notin \S$, then $f(O) = \orbs[\S \cup \{b\}](x)$.
	The fact that $f$ is a function follows from the fact that the orbits in the codomain are taken over a larger set than the orbits in the domain, it remains to show $f$ is $1-1$. 
	Let us choose some $x_1, x_2 \in X$, so $f(\orbs[\S]((a\ b) \cdot x_i)) = \orbs[\S \cup \{b\}](x_i)$. 
	Assume $\orbs[\S]((a\ b) \cdot x_1) \neq \orbs[\S]((a\ b)\cdot x_2)$, meaning there is no $\pi \in \fix(\S)$ such that $\pi \cdot ((a\ b)\cdot x_1) = (a\ b)\cdot x_2$.     
	Let us also assume $f(\orbs[\S]((a\ b)\cdot x_1)) = f(\orbs[\S]((a\ b)\cdot x_2)) = \orbs[\S \cup \{b\}](x_1)$. 
	This means there must be some $\pi' \in \fix(\S \cup \{b\})$ such that $\pi' \cdot x_1 = x_2$, and so we also have $((a\ b)\circ \pi') \cdot x_1 = (a\ b) \cdot x_2$.
	Let us write $\hat \pi = (a\ b)\circ \pi' \circ (a\ b)$, so $((a\ b)\circ \pi') \cdot x_1 = \hat \pi \cdot ((a\ b)\cdot x_1) = (a\ b)\cdot x_2$. We have that $\hat \pi \in \fix(\S)$, however, this is a contradiction as this would mean $\orbs[\S]((a\ b) \cdot x_1) = \orbs[\S]((a\ b)\cdot x_2)$. 
	Hence, $f$ is $1-1$.
\end{proof}

\begin{lemma}\label{lem:freshClosProps1}
		Let $\Lcal$ be a nominal LTS.
	Given a negation-free formula $\phi_0$ with set of free recursion variables $\cal Z$ and a configuration $(s_0,H_0)$, thenfor all $(\phi,\gamma,\theta)\in\clos(\phi_0)$, the free variables of $\phi$ are included in $\dom(\theta)\uplus \cal Z$ and all formulas in $\rng(\theta)$ are fixpoints and have free variables in $\dom(\theta)\uplus\Zcal$.
\end{lemma}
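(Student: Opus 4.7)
The plan is a routine induction on the construction of $\clos(\phi_0)$, following the generating rules in \cref{def:closFresh}. The key invariant to maintain is the conjunction of the two stated properties: for every $(\phi,\gamma,\theta)\in\clos(\phi_0)$, (i) the free recursion variables of $\phi$ lie in $\dom(\theta)\uplus\Zcal$, and (ii) every formula in $\rng(\theta)$ is a fixpoint (i.e.\ of the form $\sigma X(\vec x).\psi$) whose free recursion variables lie in $\dom(\theta)\uplus\Zcal$. The disjointness of $\dom(\theta)$ and $\Zcal$ is justified by the standing assumption that the bound and free recursion variables of $\phi_0$ are disjoint and that distinct binders use distinct variables.

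The base case $(\phi_0,\varepsilon,\varepsilon)\in\clos(\phi_0)$ is immediate: the free recursion variables of $\phi_0$ are by definition $\Zcal=\emptyset\uplus\Zcal=\dom(\varepsilon)\uplus\Zcal$, and $\rng(\varepsilon)=\emptyset$, so (ii) is vacuous. For the inductive step, I would treat the rules one by one. The boolean, modal, value-variable-quantification and fresh-quantification rules preserve both invariants trivially, since (i) the free recursion variables of a sub-formula are included in those of its parent, and (ii) $\theta$ is unchanged so (ii) carries over verbatim. Note that the value substitution $\{a/x\}$ or $\{\vec c/\vec x\}$ does not touch recursion variables, so it does not interact with the invariant.

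The only substantive case is the fixpoint rule, where $((\sigma X(\vec x).\phi)(\vec u),\gamma,\theta)\in\clos(\phi_0)$ generates $(\phi,(\gamma,\{\vec c/\vec x\}),\theta')$ with $\theta'=\theta,\{\sigma X(\vec x).\phi/X\}$. By the induction hypothesis applied to the parent, the free recursion variables of $(\sigma X(\vec x).\phi)(\vec u)$ lie in $\dom(\theta)\uplus\Zcal$; those of $\phi$ are the same set together with possibly $X$, and since $\dom(\theta')=\dom(\theta)\cup\{X\}$, invariant (i) is preserved. For (ii), $\rng(\theta')=\rng(\theta)\cup\{\sigma X(\vec x).\phi\}$; the first piece satisfies (ii) by induction, and the new formula is a fixpoint by construction, with free recursion variables equal to those of $(\sigma X(\vec x).\phi)(\vec u)$, hence in $\dom(\theta)\uplus\Zcal\subseteq\dom(\theta')\uplus\Zcal$.

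The only point requiring a little care is the disjointness part of $\uplus$ when extending $\theta$ to $\theta'$: I would appeal to the unique-binder stipulation on $\phi_0$ (stated before \cref{def:altDepthNom}) to guarantee that the newly-introduced $X$ is not already in $\dom(\theta)$ and not in $\Zcal$. Apart from this bookkeeping, there is no real obstacle; the proof is a straightforward structural induction, and I would present it concisely as a case analysis on the last closure rule applied.
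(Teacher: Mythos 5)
Your proposal is correct and follows essentially the same route as the paper's own proof: a structural induction on the derivation of $(\phi,\gamma,\theta)$ within $\clos(\phi_0)$, with the boolean/modal/quantifier rules preserving the invariant trivially and the fixpoint rule handled by noting that $X$ enters both the free variables of the body and $\dom(\theta)$ simultaneously. Your explicit appeal to the unique-binder stipulation for the disjointness in $\uplus$ is a welcome bit of care that the paper compresses into ``by definition, $X\notin\Zcal$''.
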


\begin{proof}
	We denote $\fv(\psi)$ for the free variables of a formula $\psi$, 
	and we do induction on the derivation of $(\phi,\gamma,\theta)$, starting from $(\phi_0,\{\}, \{\})$. 
	The base cases are clear; let $(\phi',\gamma',\theta')$ be the parent of $(\phi, \gamma, \theta)$. 
	We do a case analysis on $\phi'$ as follows:
	\begin{itemize}
		\item $\phi_1 \lor \phi_2$: By the inductive hypothesis,  
		we have that $\fv(\phi_1), \fv(\phi_2) \subseteq \fv(\phi_1 \lor \phi_2) \subseteq \dom(\theta)\uplus \Zcal$.
		A similar argument can be shown for $(\phi_1 \land \phi_2)$.
		\item $\diam{\ell}\phi$: By the inductive hypothesis, 
		we have that $\fv(\phi') = \fv(\phi) \subseteq \dom(\theta) \uplus \Zcal$. 
		A similar argument can be shown for $\sq{\ell}\phi, \bigvee_x \phi'',  \bigwedge_x \phi''$ and $\fresh{x}\phi''$.
		\item $\fresh{x}\phi''$: there exists an $a \in \Atoms$ such that $\phi = \phi''\{a/x\}$. By the inductive hypothesis, we have that $\fv(\phi) = \fv(\phi'') = \fv(\phi') \subseteq \dom(\theta) \uplus \Zcal$.			
		\item $(\mu X(\vec x).\phi'')(\vec a)$: $X$ will become free in $\phi=\phi''\{\vec c/\vec x\}$, for some $\vec c$, and $\theta=\theta'[(\mu X(\vec x).\phi'')/X]$.
		By definition, $X\notin\Zcal$.
		By the inductive hypothesis, we have that 
		$\fv(\phi) =\fv(\phi'') \cup \{X\} \subseteq (\dom(\theta') \uplus \Zcal)\cup \{X\}=\dom(\theta) \uplus \Zcal$. 
		By the inductive hypothesis, the new elements in the range of $\theta$ have free variables in $\Zcal$
		and are fixpoints.
		A similar argument can be shown for $(\nu X(\vec x).\phi'')(\vec a)$.
	\end{itemize}
\end{proof}

\freshClosProps*

\begin{proof}
	\begin{enumerate}
		\item 
		For $\phi$ to be in $\clos^*(\phi_0)$, it suffices to show that there exists some $(\phi', \gamma', \theta') \in \clos(\phi_0)$ such that $\phi = \phi'\{\theta'\}\{\gamma'\}$. 
		We perform induction on the path leading to $(\phi', \gamma', \theta')$ starting from $(\phi_0, \{\}, \{\})$ with the following base case:
		\begin{itemize}
			\item $(s, H, \xi, \phi) \in \Gcal(\Lcal, \phi_0,\xi_0, s_0)$ implies $\phi \in \clos(\phi_0)$ and so $(\phi', \gamma', \theta') \in \clos(\phi_0)$.
		\end{itemize}
		Next, we perform inductive steps. Suppose the parent position of $(s, H,\xi,\phi)$ has formula:
		\begin{itemize}
			\item $\phi_1 \lor \phi_2$: by the inductive hypothesis, there is a triple $(\psi,\gamma,\theta)\in\clos(\phi_0)$ such that $\phi_1\lor\phi_2=\psi\{\theta\}\{\gamma\}$. From Lemma~\ref{lem:freshClosProps1}, there is $\psi=\psi_1\lor\psi_2$ and therefore $\phi=\psi_1\{\theta\}\{\gamma\}\lor\psi_2\{\theta\}\{\gamma\}$, and $(\psi_1,\gamma,\theta),(\psi_2,\gamma,\theta)\in\clos(\phi)$, as required. The case for  $\phi_1 \land \phi_2$ is similar.
			\item $\diam{\ell}\phi''$: 
			by the inductive hypothesis, there is a triple $(\psi,\gamma, \theta)\in\clos(\phi_0)$ such that $\diam{\ell}\phi''=\psi\{\theta\}\{\gamma\}$.
			From Lemma~\ref{lem:freshClosProps1}, there is $\psi=\diam{\ell}\psi''$, and $(\psi'', \gamma,\theta)\in \clos(\phi_0)$, as required. 
			The case for $\sq{\ell}\phi''$ is similar.
			\item $\fresh{x}\phi''$:
			by the inductive hypothesis, there is a triple $(\psi, \gamma, \theta) \in \clos(\phi_0)$ such that $\fresh{x}\phi''=\psi\{\theta\}\{\gamma\}$.
			From Lemma~\ref{lem:freshClosProps1}, there is $\psi=\fresh{x}\psi''$, and $(\psi'', (\gamma, \{a/x\}), \theta) \in \clos(\phi_0)$ as required.
			The cases for $\bigvee_x \phi''$ and $\bigwedge_x \phi''$ are similar. 
			\item $(\mu X(\vec x).\phi'')(\vec a)$: by the inductive hypothesis, there is a triple $(\psi, \gamma,\theta) \in \clos(\phi_0)$ such that $\psi\{\theta\}\{\gamma\} = (\mu X(\vec x).\phi'')(\vec a)$. 
			If $\psi$ is a variable, then $\psi = X(\vec a)$ and $(\mu X(\vec x).\phi'')(\vec a) = \theta(X(\vec a))\{\gamma\}$. 
			By Definition~\ref{def:closFresh}, there must be (for some $\vec b$) $((\mu X(\vec x).\phi')(\vec b),\gamma, \theta) \in \clos(\phi_0)$, meaning $(\phi'',(\gamma, \{\vec a / \vec x\}), \theta) \in \clos(\phi_0)$ and $\phi''\{\theta\}\{\gamma\}\{\vec a / \vec x\} = (\phi''\{\mu X(\vec x) / X\})\{\gamma\}\{\vec a / \vec x\}$ as required.
			\\
			If $\psi$ is not a variable, then $\psi = (\mu X(\vec x).\hat \phi)(\vec a))$ such that $ (\mu X(\vec x).\hat \phi)(\vec a))\{\theta\}\{\gamma\} =  (\mu X(\vec x).\phi'')(\vec a))$. 
			Let $\theta'' = \theta[\mu X(\vec x).\phi' / X]$.Then, from $(\psi,\gamma, \theta) \in \clos(\phi_0)$ we have $(\phi''\{\vec a/ \vec x\}, \gamma,\theta') \in \clos(\phi_0)$ and $(\phi'\{\vec a/ \vec x\})\{\theta'\}\{\gamma\} = (\phi'\{\theta'\})\{\gamma\}\{\vec a/ \vec x\} = (\phi'\{\mu X(\vec x).\phi' / X\})\{\gamma\}\{\vec a / \vec x\}$ as required.
			
			The case for $(\nu X(\vec x).\phi')(\vec a)$ is similar.
		\end{itemize}		
		\item 
		Let $(\psi, \gamma, \theta) \in \clos(\phi)$. 
		We write $\clos_\phi(\psi, \gamma, \theta)$ for the subset of $\clos(\phi)$ derived from and including $(\psi,\gamma, \theta)$, and let $\Ncal$ be the support of all formulas and name substitutions in the path from $(\phi,\{\}, \{\})$ to $(\psi,\gamma, \theta)$,
		therefore $\supp(\phi, \psi, \gamma, \theta) \subseteq \Ncal$.
		Without loss of generality, for the fixpoint case we consider a more general $\clos$ rule for this case, which goes from $((\mu X(\vec x).\psi')(\vec a),\gamma, \theta)$ to $(\bigvee\nolimits_{x_1}\dots\bigvee\nolimits_{x_n}\psi',\gamma',\theta')$ where $\theta' = \theta\{\mu X(\vec x).\psi' / X\}$ and $\gamma' = (\gamma, \{\vec a / \vec x\})$. 
		This change allows for each $x \in \vec x$ to become free sequentially, in contrast to making all $x \in \vec x$ free at once. 
		The impact on result is unaffected as this will add more intermediate formulas (i.e., each $\bigvee_{x_i}$).
		Next we define the following function:
		\[
		m'(\phi', \gamma', \theta') = \frac{(\|\phi'\| + |\Ncal|)!}{|\Ncal|!}
		\]
		noting that $m(\phi) = m'(\phi, [])$ (where it does not cause ambiguity, the $\theta'$ and $\gamma'$ parameters in $m'$ may be dropped for economy).
		We show that $|\orb[\Ncal](\clos_\phi(\psi, \gamma, \theta))| \le |\psi|\cdot m'(\psi, \gamma, \theta)$
		by performing induction on $\psi$:
		\begin{itemize}
			\item $|\orb[\Ncal](\clos_\phi(a = a, \gamma, \theta))|= 1 \le 2 = |\psi|\cdot m'(\psi)$.
			Similar can be shown for $a \neq a$.
			\item $|\orb[\Ncal](\clos_\phi(a = b, \gamma, \theta))| = 1 \le 2 = |\psi|\cdot m'(\psi)$.
			Similar can be shown for $a \neq b$.
			\item $|\orb[\Ncal](\clos(X(\vec a), \gamma, \theta))| = 1 \le 1 + |\vec a| = |\psi|\cdot m'(\psi)$.
		\end{itemize}
		Next we perform inductive steps:
		\begin{itemize}
			\item $\psi_1 \lor \psi_2$: by the inductive hypothesis, $|\orb[\Ncal](\clos_\phi(\psi_i,\gamma, \theta))| \le |\psi_i|\cdot m'(\psi_i)$ for $i \in \{1, 2\}$.
			By Lemma~\ref{lem:unionOrb2},
			we have that $|\orb[\Ncal](\clos_\phi(\psi_1 \lor \psi_2,\gamma, \theta))| \le |\orb[\Ncal](\clos_\phi(\psi_1, \gamma, \theta))| + |\orb[\Ncal](\clos_\phi(\psi_2,\gamma, \theta))|$.
			As, for $i \in \{1,2\}$, $\|\psi_i\| \le \|\psi\|$ we have $m'(\psi_i) \le m'(\psi)$. 
			Thus we have:
			\begin{align*}
				|\orb[\Ncal](\clos_\phi(\psi_1 \lor \psi_2,\gamma, \theta))| &\le |\orb[\Ncal](\clos_\phi(\psi_1, \gamma, \theta))| + |\orb[\Ncal](\clos_\phi(\psi_2,\gamma, \theta))| 
				\\&\le (|\psi_1|\cdot m'(\psi_1)) + (|\psi_2|\cdot m'(\psi_2)) \le |\psi|\cdot m(\psi)
			\end{align*}
			A similar argument can be shown for $\psi_1 \land \psi_2$.
			\item $\diam{\ell}\psi'$: by the inductive hypothesis, $|\orb[\Ncal](\clos_\phi(\psi', \gamma, \theta))| \le |\psi'|\cdot m'(\psi')$.
			By definition
			we have that $|\orb[\Ncal](\clos_\phi(\psi, \gamma, \theta))| \le |\orb[\Ncal](\clos_\phi(\psi',\gamma,\theta))| + 1$ and $m'(\psi') = m'(\psi)$.
			Thus we have:
			\begin{align*}
				|\orb[\Ncal](\clos_\phi(\psi,\gamma,\theta))| &\le |\orb[\Ncal](\clos_\phi(\psi',\gamma,\theta))| + 1
				\\&\le |\psi'|\cdot m'(\psi') + 1 
				\\&\le |\psi|\cdot m'(\psi)
			\end{align*}
			A similar argument can be shown for $\sq{\ell}\psi'$.
			\item $\bigvee_x \psi'$: by the inductive hypothesis, $|\orb[\Ncal \cup \{a\}](\clos_\phi(\psi',(\gamma, \{a/x\}), \theta))| \le |\psi'|\cdot m'(\psi',(\gamma, \{a/x\}),\theta)$ for all $a \in \Atoms$.
			By definition, we have that 
			\begin{align*}
				\clos_\phi(\psi,\gamma, \theta) &= (\bigcup\nolimits_{a \in \Ncal} \clos_\phi(\psi',(\gamma, \{a/x\}), \theta))\\ &\cup (\bigcup\nolimits_{a \notin \Ncal} \clos_\phi(\psi',(\gamma, \{a/x\}), \theta)) \cup \{(\psi,\gamma, \theta)\}
			\end{align*}
			we know, for any $b \notin \Ncal$:
			\[
			\orb[\Ncal](\bigcup\nolimits_{a \notin \Ncal} \clos_\phi(\psi',(\gamma, \{a/x\}), \theta) = \orb[\Ncal](\bigcup\nolimits_{a \notin \Ncal} (a\ b) \cdot \clos_\phi(\psi',(\gamma, \{\vec a / \vec x\}), \theta))
			\]
			and then by Lemma~\ref{lem:unionOrb3} we obtain 
			\[
			|\orb[\Ncal](\bigcup\nolimits_{a \notin \Ncal} (a\ b) \cdot \clos_\phi(\psi',(\gamma, \{\vec a / \vec x\}), \theta))| \le |\orb[\Ncal \cup \{b\}](\clos_\phi(\psi',(\gamma, \{\vec a / \vec x\}), \theta))|
			\]
			As each such, $\clos_\phi(\psi',(\gamma, \{a/x\}), \theta)$ and $\bigcup\nolimits_{a \notin \Ncal} (a\ b) \cdot\clos_\phi(\psi',(\gamma, \{ a /  x\}), \theta)$ are closed under permutations fixing $\supp(\psi,\theta)$, by Lemma~\ref{lem:unionOrb2} for $b \notin \Ncal$ we have that 
			\begin{align*}
				\orb[\Ncal](\clos_\phi(\psi, \theta)) 
				&= \bigcup\nolimits_{a \in \Ncal} \orb[\Ncal](\clos_\phi(\psi',(\gamma, \{a/x\}), \theta)) 
				\\
				&\quad\cup \orb[\Ncal]\left(\bigcup\nolimits_{a \notin \Ncal} (a\ b) \cdot\clos_\phi(\psi',(\gamma, \{a/x\}), \theta)\right)
				\cup\{(\psi, \gamma,\theta)\} 
			\end{align*}
			and hence
			\begin{align*}
				|\orb[\Ncal](\clos_\phi(\psi, \gamma,\theta)) |
				&\leq \sum\nolimits_{a \in \Ncal} |\orb[\Ncal](\clos_\phi(\psi',(\gamma, \{a/x\}), \theta))| \\
				&\quad + |\orb[\Ncal \cup \{b\}](\clos_\phi(\psi',(\gamma, \{ a /  x\}), \theta))|+1\\
				&\overset{\text{IH}}{\leq}
				\sum\nolimits_{a \in \Ncal} |\psi'|\cdot m'(\psi',(\gamma, \{a/x\}),\theta) \\
				&\quad + |\psi'|\cdot m'(\psi',(\gamma, \{ a /  x\})\theta)+1
			\end{align*}
			For $a \in \Ncal$, then $m'(\psi',(\gamma, \{a/x\}),\theta) = (|\Ncal| + \|\psi\| - 1)!/|\Ncal|! = m'(\psi,\gamma,\theta)/(|\Ncal|+\|\psi\|)$. 
			For $b \notin \supp(\psi',\gamma,\theta)$, then $m'(\psi',(\gamma, \{\vec a / \vec x\}),\theta) = (|\Ncal| + \|\psi\|) / (|\Ncal| + 1)! = m'(\psi)/(|\Ncal| + 1)$.
			Thus, we obtain:
			\begin{align*}
				|\orb[\Ncal](\clos_\phi(\psi,\gamma, \theta)| 
				&\le 
				\sum\nolimits_{a \in \Ncal} |\psi'|\cdot m'(\psi,(\gamma, \{a/x\}),\theta)/(|\Ncal|+1)\\
				&\quad + |\psi'|\cdot m'(\psi,\gamma,\theta)/(|\Ncal|+1)+1
				\\
				&=|\psi'|\cdot m'(\psi,\gamma,\theta)/(|\Ncal|+1)\cdot (|\Ncal|+1)+1 
				\\
				&=|\psi'|\cdot m'(\psi,\gamma,\theta)+1
				\leq |\psi|\cdot m'(\psi,\gamma,\theta)
			\end{align*}
			A similar argument can be shown for $\bigwedge\nolimits_x \psi'$ and $\fresh{x} \psi'$.
			\item $(\mu X(\vec x).\psi')(\vec a)$: 
			Let $\theta' = \theta[\mu X(\vec x).\psi' / X]$.
			%
			Let us write $\hat \psi = \bigvee\nolimits_{x_1}\dots\bigvee\nolimits_{x_n}\psi'$ where $x_i$ is the $i$th element of $\vec x$.
			By definition, we have that
			\[
			\clos_\phi((\mu X(\vec x).\psi')(\vec a), \gamma, \theta) = \{(\mu X(\vec x).\psi')(\vec a), \gamma, \theta)\} \cup \clos_\phi(\hat \psi, (\gamma, \{\vec a / \vec x\}), \theta')
			\]
			and by Lemma~\ref{lem:unionOrb2}:
			\[
			\orb[\Ncal](\clos_\phi((\mu X(\vec x).\psi')(\vec a),\gamma, \theta)) \subseteq \{((\mu X(\vec x).\psi')(\vec a),\gamma, \theta)\} \cup \orb[\Ncal](\clos_\phi(\hat \psi,(\gamma, \{\vec a/\vec x\}), \theta'))
			\]
			By the inductive hypothesis, we have:
			\[
			|\orb[\Ncal](\clos_\phi(\hat \psi,(\gamma, \{\vec a/\vec x\}) \theta'))| \le |\hat \psi|\cdot m'   (\hat \psi,(\gamma, \{\vec a/\vec x\}),\theta')
			\]
			We note that
			$m'(\hat \psi,(\gamma, \{\vec a/\vec x\}),\theta') = m'(\psi,\gamma,\theta)$, thus obtaining:
			\begin{align*}
				|\orb[\Ncal](\clos_\phi(\psi,\gamma, \theta))| &\le |\orb[\Ncal](\clos_\phi(\hat \psi,(\gamma, \{\vec a/\vec x\})\theta'))| + 1
				\\ &\overset{\text{IH}}{\le} |\hat \psi|\cdot m'(\hat \psi) + 1
				\\ &\leq (|\vec x| + |\psi'|)\cdot m'(\psi,\gamma,\theta) + 1
				\leq |\psi|\cdot m'(\psi,\gamma,\theta)
			\end{align*}
			A similar argument can be shown for $(\nu X(\vec x).\psi')(\vec a)$.
		\end{itemize}
		\item    We show that for any $(\phi',\gamma, \theta) \in \clos(\phi)$, $|\supp(\phi')| + \|\phi'\| \le |\supp(\phi)| + \|\phi\|$.
		Let $(\phi'',\gamma'\theta')$ be the parent node of $(\phi',\gamma' \theta)$.
		We do induction on the derivation of $(\phi',\gamma, \theta)$, starting from $(\phi,\{\}, \{\})$; the base cases are clear.
		We perform a case analysis on $\phi''$ as follows:
		\begin{itemize}
			\item $\phi_1 \lor \phi_2$: 
			By the inductive hypothesis, $|\supp(\phi_1 \lor \phi_2)| + \|\phi_1 \lor \phi_2\| \le |\supp(\phi)| + \|\phi\|$. 
			We have that $\supp(\phi_1,\phi_2) \subseteq \supp(\phi_1 \lor \phi_2)$ and $\|\phi_i\| \le \|\phi_1 \lor \phi_2\|$ for $i \in \{1,2\}$, and so we obtain:
			\[
			|\supp(\phi_i)| \le |\supp(\phi_1 \lor \phi_2)| \le  |\supp(\phi)| + \|\phi\|
			\]
			Similar can be shown for $\phi_1 \land \phi_2$.
			\item $\diam{\ell} \phi'$: By the inductive hypothesis, $|\supp(\diam{\ell}\phi')| + \|\phi'\| \le |\supp(\phi)| + \|\phi\|$, and by definition, $\supp(\phi')\subseteq \supp(\diam{\ell}\phi')$ and $\|\phi'\| = \|\diam{\ell}\phi'\|$, therefore:
			\[
			|\supp(\phi')| \le |\supp(\diam{\ell}\phi')| \le |\supp(\phi)| + \|\phi\|
			\]
			Similar can be shown for $\sq{\ell} \phi'$.
			\item $\fresh{x} \phi'$: By the inductive hypothesis, we have that $|\supp(\fresh{x} \phi')| + \|\fresh{x} \phi'\| \le |\supp(\phi)| + \|\phi\|$. 
			We have that $\supp(\phi') \subseteq \supp(\fresh{x} \phi')$ and $\|\phi'\| + 1= \|\fresh{x}\phi' \|$, and so:
			\[
				|\supp(\phi')| + \|\phi'\| \le |\supp(\fresh{x}\phi')| + \|\fresh{x}\phi'\|
			\]
			Similar can be shown for $\bigvee_x \hat \phi$ and $\bigwedge_x \hat \phi$.
			\item $(\mu X(\vec x).\phi')(\vec u)$: By the inductive hypothesis, we have that $|\supp((\mu X(\vec x).\phi')(\vec u))| + \|(\mu X(\vec x).\phi')(\vec u)\| \le |\supp(\phi)| + \|\phi\|$. 
			We know that $\supp(\phi') \subseteq \supp(\phi)$ and $\|\phi'\| + |\vec x| = \|\phi\|$, and so:
			\[
				|\supp(\phi')| + \|\phi'\| \le |\supp(\phi)| + \|\phi\|
			\]
			Similar can be shown for $(\nu X(\vec x).\hat \phi)(\vec u)$.
		\end{itemize}
	\end{enumerate}
\end{proof}

\subsection{Proof of \cref{thm:FnomSatPar}}

\begin{lemma}\label{lem:parityNom}
	In a game $\Gcal(\Lcal, \psi, \xi, s, H)$, if defender follows a winning strategy from some position $(s, \xi, (\mu X(\vec x).\phi)(\vec a))$,
	then there must be finitely many unfoldings of the form $(\mu X(\vec x).\phi)(\vec k)$ for $\vec k \in \Atoms^{\ar(X)}$. 
\end{lemma}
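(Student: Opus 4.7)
The plan is to proceed by contradiction, using the standard $\mu$-calculus parity-game argument adapted to our ranking. Suppose Defender follows a winning strategy $\Theta$ from $P_0 = (s, H, \xi, (\mu X(\vec x).\phi)(\vec a))$ but, for contradiction, some $\Theta$-consistent play $\Pi$ starting at $P_0$ visits positions with formula of the form $(\mu X(\vec x).\phi)(\vec k)$ infinitely often. Then $\Pi$ is infinite and, because $\Theta$ is winning, the largest rank $r^*$ appearing infinitely often in $\Pi$ must be even. Every $(\mu X(\vec x).\phi)(\vec k)$-position carries the odd rank $r_X = 2\lfloor adepth(X)/2\rfloor + 1$, so $r_X$ is itself seen infinitely often and $r^* > r_X$. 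Since non-fixpoint positions have rank $0$, the rank $r^*$ must be witnessed by positions of the form $(\nu Y(\vec y).\chi)(\vec l)$ for some $\nu$-variable $Y$ with $r^* = 2\lfloor adepth(Y)/2\rfloor$; fix such a $Y$ that appears infinitely often in $\Pi$.

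Next I establish that $Y$ is bound syntactically inside $\phi$ and that $X \leq_{\phi_0} Y$. By \cref{lem:freshClosProps}(1) and the shape of the closure rules, every position's formula in $\Pi$ has the form $\psi\{\theta\}\{\gamma\}$ where the syntactic tracker $\psi$ is a subformula of $\mu X(\vec x).\phi$: starting from $P_0$, the tracker can only descend into $\phi$ via the game rules for $\lor, \land, \diam{\ell}, \sq{\ell}, \bigvee_x, \bigwedge_x, \fresh{x}$ and inner-fixpoint unfoldings, or reappear as $\mu X.\phi$ through the substitution $\{\mu X.\phi/X\}$ that is recorded in $\theta$ at each $\mu X$-unfolding; free recursion variables of the starting formula remain as leaves. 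Hence the only fixpoints ever unfolded in $\Pi$ are $\mu X$ itself and those bound syntactically inside $\phi$, and since $Y\neq X$ (they have different types), $Y$ is bound inside $\phi$. Moreover, once Defender has unfolded $\nu Y$ in $\Pi$, every later position has $\psi$ a subformula of $\chi$ (plus accumulated substitutions), so for $(\mu X(\vec x).\phi)(\vec k)$ to reappear afterwards the substitution $\{\mu X.\phi/X\}$ must be triggered by a free occurrence of $X$ inside $\chi$; by the definition of the dependency order this is exactly $X \leq_{\phi_0} Y$.

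To close the contradiction, I extend any maximum-length alternating chain $Y = Y_1 \leq_{\phi_0} \dots \leq_{\phi_0} Y_n$ of length $adepth(Y)$ by prepending $X$ (valid because $X$ is $\mu$ and $Y$ is $\nu$), obtaining an alternating chain from $X$ of length $adepth(Y)+1$. Hence $adepth(X) \geq adepth(Y) + 1$, and a short floor-computation gives $r_X \geq 2\lfloor (adepth(Y)+1)/2\rfloor + 1 > 2\lfloor adepth(Y)/2\rfloor = r^*$, contradicting $r^* > r_X$ and closing the argument. The most delicate step is the syntactic-reachability analysis in the second paragraph: one must trace the closure rules to certify that $(\mu X(\vec x).\phi)$ can appear in a position's formula only as a descendant of the initial $\mu X$-subformula or through the $\{\mu X.\phi/X\}$ substitution recorded in $\theta$, which depends crucially on the uniqueness of bound-variable naming assumed just before \cref{def:altDepthNom}.
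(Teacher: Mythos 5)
Your proof is correct and follows essentially the same route as the paper: argue that an infinite play with infinitely many $\mu X$-unfoldings forces the maximal infinitely-recurring rank to be an even rank of some $\nu Y$, use the reachability pattern $\mu X \to \nu Y \to \mu X$ to conclude $X \leq Y$ in the dependency order, and derive the rank contradiction. The only difference is presentational: your second paragraph re-derives inline the closure-tracking fact that the paper factors out as Lemma~\ref{lem:parityNom2}, and you spell out the floor computation that the paper leaves implicit.
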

\begin{proof}
	If defender has a winning strategy from $(s, \xi, (\mu X(\vec x).\phi)(\vec a))$, then this means for every play one of the following possibilities must be true:
	\begin{enumerate}
		\item the play is finite,
		\item the play is infinite with some position with a higher even rank.
	\end{enumerate}
	For point 1, it follows that in a finite play, there will be finitely many unfoldings of $(\mu X(\vec x).\phi)(\vec a)$ as required.
	For point 2, suppose that there is an infinite play that defender wins.
	Let us say that variable $Y$ has the maximum infinitely recurring rank $d$, and as it is an even rank, $d$ is owned by defender.
	By the previous lemma, if a $(\mu X(\vec x).\phi)(\vec a)$-position reaches a $(\nu Y(\vec y).\phi')(\vec b)$-position that then reaches some $(\mu X(\vec x).\phi)(\vec k)$-position (for any $\vec k \in \Atoms^{\ar(X)}$), then $X \le_\psi Y$ and so $X$ has a higher rank than $d$.
	If this rank recurs infinitely often, then this means the play is losing for defender, and so $\mu X$ must unfold finitely many times.
\end{proof}

\begin{lemma}\label{lem:parityNom2}
	In a game $\Gcal(\Lcal, \psi, \xi, \hat s, \hat H)$, 
	if we have a $(\sigma X(\vec x).\phi')(\vec a)$-position  that reaches a $(\sigma' Y(\vec y).\phi'')(\vec b)$-position which then reaches a $(\sigma X(\vec x).\phi')(\vec c)$-position, then $X \le_\psi Y$.  
\end{lemma}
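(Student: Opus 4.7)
My plan is to prove the lemma by following the game play from the first $X$-position $P_n$ through the intermediate $Y$-position $P_k$ to the second $X$-position $P_m$, tracking the closure triples underlying each game formula via \cref{lem:freshClosProps}(\ref{lem:freshClosProps:2}) rather than working only with the rendered formulas.

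First, I would observe that, since both $X$-positions have the same body $\phi'$ (as dictated by the lemma's hypothesis), the uniqueness of recursion binders in $\psi$ forces both underlying closure triples at $P_n$ and $P_m$ to involve the unique $X$-binder subformula $\sigma X(\vec x).\hat\phi$ of $\psi$, where $\phi'=\hat\phi\{\theta\}\{\gamma\}$ for the relevant substitutions. In particular, once $P_n$ is unfolded, the substitution $\theta$ carries the entry $\{\sigma X(\vec x).\hat\phi/X\}$ for the remainder of the play, so any later $X$-fixpoint game formula arises because the closure navigation reached a subformula that renders back to $(\sigma X(\vec x).\hat\phi)(\vec c)$.

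Next, I would analyse how such a subformula can arise on the subpath after $P_k$. Unfolding $Y$ at $P_k$ moves the closure triple into the body $\hat\phi''$ of $\sigma' Y(\vec y).\hat\phi''$, and all further closure transitions before $P_m$ keep the closure formula within the AST of $\hat\phi''$ (descending into subformulas and possibly unfolding fixpoints whose binders live inside $\hat\phi''$). The $X$-fixpoint at $P_m$ can therefore be matched either: (i) to a free recursion-variable occurrence $X(\vec u)$ discovered during this descent, rendered to $\theta_m(X)(\vec u)$; or (ii) to the $X$-binder subformula $\sigma X(\vec x).\hat\phi$ itself appearing literally inside $\hat\phi''$. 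Case (i) says that $X$ occurs free in $\hat\phi''$ at some unbound position, which is exactly the dependency-order condition for $X\leq_\psi Y$. For case (ii) I would argue by induction on the number of fixpoint unfoldings between $P_k$ and $P_m$: each such intervening unfolding of some $\sigma Z.\hat\phi_Z$ sits inside the body of the previous one (hence inside $\hat\phi''$), and reaching the $X$-binder eventually exposes either a free $X$-occurrence in some $\hat\phi_Z$ or forces a chain $X\leq_\psi Z_1\leq_\psi\dots\leq_\psi Y$ whose composition, under the transitivity of $\leq_\psi$, gives $X\leq_\psi Y$.

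The main obstacle will be case (ii): the structural bookkeeping needed to relate each intermediate binder back to $Y$ is delicate, because the same $Y$-binder may have been unfolded several times earlier in the play and the uniqueness-of-binders assumption is a property of $\psi$, not of the rendered game formulas; getting the induction through cleanly requires stating the invariant in closure-triple form (where uniqueness does give a handle) rather than in terms of rendered formulas.
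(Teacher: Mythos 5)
Your overall route is the same as the paper's: trace game positions back to closure triples via \lemmapoint{lem:freshClosProps}{2} and ask where the second $X$-fixpoint position can come from. Your case (i) (the fixpoint is the rendering of a free occurrence $X(\vec u)$ through $\theta$) is essentially the whole of the paper's argument, which simply asserts that this is the only possibility. You are right to single out case (ii) as the real obstacle, but your proposed treatment of it does not go through, for a concrete reason: it is not true that ``each intervening unfolding of some $\sigma Z.\hat\phi_Z$ sits inside the body of the previous one (hence inside $\hat\phi''$)''. If, between the $Y$-position and the second $X$-position, the play regenerates a variable $Z$ whose binder strictly \emph{encloses} the $Y$-binder (i.e.\ the closure formula $Z(\vec w)$ renders through $\theta$ to a binder lying outside $\hat\phi''$), the descent escapes $\hat\phi''$ entirely and can re-enter the literal $X$-binder from above, without ever exposing a free occurrence of $X$ in any body and without producing any chain $X\leq_\psi Z_1\leq_\psi\dots\leq_\psi Y$.

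Concretely, take $\psi=\mu Z.\,\nu Y.\,\bigl((\mu X.(\diam{\ell}Z\lor\diam{\ell}X))\lor\diam{\ell}Y\bigr)$ with all arities $0$. The generating pairs of $\leq_\psi$ are only $Z\leq_\psi X$ and $Z\leq_\psi Y$, so $X\not\leq_\psi Y$; yet the play that unfolds $X$, takes the disjunct $\diam{\ell}Z$, regenerates $Z$, unfolds $Y$ and then descends to the literal subformula $\mu X.(\dots)$ exhibits an $X$-position reaching a $Y$-position reaching an $X$-position. So in your case (ii) the desired conclusion can actually fail under the literal reading of ``reaches'', and no amount of closure-triple bookkeeping will close the induction as you have set it up: a correct argument must restrict the sense in which the $Y$-position is reached (roughly, no variable whose binder lies outside $Y$'s is regenerated in between, which is all that the surrounding lemmas need, since there one reasons about the $\leq_\psi$-outermost variable unfolded infinitely often). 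Be aware that the paper's own proof makes the same unjustified step, claiming $\hat\phi$ ``can only be $X(\vec c)$''; your write-up at least isolates the problematic case, but the fix you sketch does not repair it.
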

\begin{proof}
	We let $\theta = \{\sigma X(\vec x).\phi'/ X\}$ and $\gamma = \{\vec a/ \vec x\}$ be substitutions and $\phi=(\sigma X(\vec x).\phi')(\vec a)$.
	In the game, the position, say, $(s, H, \xi, (\sigma X(\vec x).\phi')(\vec a))$ can make a move to $(s, H, \xi, \phi'\{\theta\}\{\gamma\})$, in particular this will include $(X(\vec c))\{\theta\}\{\gamma\} = (\sigma X(\vec x).\phi')(\vec c)$. 
	From this point, a (first) position containing $(\sigma' Y(\vec y).\phi'')(\vec b)$ can be reached.
	Then, there will be substitutions $\theta', \gamma'$ and subformula $\hat\phi''$ of $\phi$ such that $\theta \subseteq \theta'$, $\gamma \subseteq \gamma'$,
	$(\sigma'Y(\vec y).\hat \phi'')(\vec b)\{\theta'\}\{\gamma'\} = (\sigma' Y(\vec y).\phi'')(\vec b)$
	and 
	$((\sigma' Y(\vec y).\hat\phi'')(\vec b), \gamma', \theta') \in \clos(\phi)$
	(cf.~\lemmapoint{lem:freshClosProps}{2}).
	From this position, a position containing some $(\sigma X(\vec x).\phi')(\vec c)$ can be reached.
	As before,
	there must exist some $\theta' \subseteq \theta''$, $\gamma' \subseteq \gamma''$ 
	such that $\hat \phi\{\theta''\}\{\gamma''\} = (\sigma X(\vec x).\phi')(\vec c)$ and $(\hat \phi, \gamma'', \theta'') \in \clos(\phi)$. 
	Since $\hat \phi$ must be a subformula of $(\sigma X(\vec x).\phi')(\vec a)$, we can only have that $\hat \phi = X(\vec c)$, 
	meaning $X(\vec c)$ is free in $(\sigma Y(\vec y).\phi'')(\vec b)$ and so $X \le_\psi Y$.
\end{proof}

\FnomSatPar*

\begin{proof}
	Let us choose some \fhml\ formula $\phi$ with a $\Ucal$-variable assignment $\xi$ and fresh-orbit-finite F-LTS $\Lcal$
	We perform 
	induction on $\phi$ with the following base cases:
	\begin{itemize}
		\item $(s, H, \xi, a = a)$: $(s, H)\in\sem{\phi} = \fstates$, defender wins (similarly for $a \neq b$),
		\item $(s, H, \xi, a = b)$: $(s, H)\in\sem{\phi} = \emptyset$, attacker wins (similarly for $a \neq a$),
		\item $(s, H, \xi, X(\vec a))$: if $(s, H)\in \xi(X)(\vec a)$, defender wins and $(s, H) \in \sem{\phi}$, else attacker wins and $(s, H)\notin \sem{\phi}$
	\end{itemize}
	Next we perform inductive steps :
	\begin{itemize}
		\item $(s, H, \xi, \phi_1 \lor \phi_2)$:
		Let us assume $(s, H) \in \sem{\phi_1 \lor \phi_2}$.
		By the inductive hypothesis, if $(s, H) \in \sem{\phi_i}$, then defender has a winning strategy from $(s, H, \phi_i)$ for some $i \in \{1, 2\}$. 
		For any such $i$, if $s \in \sem{\phi_i}$ then $(s, H)\in \sem{\phi_1 \lor \phi_2}$, and defender has a winning strategy by the play $
		(s, H, \xi, \phi_1 \lor \phi_2) \rightarrow (s, H, \xi, \phi_i)$.
		
		Next, let us assume that defender has a winning strategy from $(s, H, \xi, \phi_1 \lor \phi_2)$. 
		By the inductive hypothesis, if defender wins from $(s, H, \xi, \phi_i)$ then $(s, H)\in \sem{\phi_i}$ for some $i \in \{1, 2\}$.
		As defender wins from $(s, H, \xi, \phi_1 \lor \phi_2)$, defender wins from $(s, H, \xi, \phi_i)$ for some $i \in \{1, 2\}$, then $(s, H) \in \sem{\phi_i} \subseteq \sem{\phi_1 \lor \phi_2}$ as required.
		
		The proof for $\phi_1 \land \phi_2$ is similar.
		\item $(s, \xi, \bigvee_x \phi')$: 
		By the inductive hypothesis, if $(s, H)\in \sem{\phi'\{a/x\}}$, then defender has a winning strategy from $(s, H, \xi, \phi\{a/x\})$ for some $a \in \Atoms$.
		For any such $a$, if $(s, H)\in \sem{\phi'\{a/x\}}$ then $(s, H) \in \sem{\bigvee_x \phi'}$, and defender has a winning strategy by the move $(s, H, \xi, \bigvee_x \phi') \rightarrow (s, H, \xi, \phi'\{a/x\})$.
		
		Next, let us assume that defender has a winning strategy from $(s, H, \xi, \bigvee_x \phi')$. 
		By the inductive hypothesis, if defender wins from $(s, H, \xi, \phi'\{a/x\})$ then $(s, H) \in \sem{\phi'\{a/x\}}$ for some $a \in \Atoms$.
		As defender wins from $(s, H, \xi, \bigvee_x \phi')$, defender wins from $(s, H, \xi, \phi\{a/x\})$ for some $a \in \Atoms$, then $(s, H) \in \sem{\phi\{a/x\}} \subseteq \sem{\bigvee_x \phi'}$ as required.
		
		The proof for $\bigwedge_x \phi'$ and $\fresh{x} \phi'$ is similar.
		\item $(s, H,\diam{\ell} \phi')$: If $(s, H)\in \sem{\diam{\ell} \phi'}$, then there exists some $(s', H')$ such that $(s, H) \xrightarrow{\ell} (s', H')$ and $(s', H')\in \sem{\phi'}$. 
		By the inductive hypothesis, if $(s', H')\in \sem{\phi'}$ then defender has a winning strategy from $(s', H', \xi, \phi')$.
		As this position is played by defender, it suffices to choose $(s', H', \xi, \phi')$, meaning defender has a winning strategy.  
		
		Next, let us assume that defender has a winning strategy from $(s, H, \xi, \diam{\ell} \phi')$. 
		This means there is some move $(s, H, \xi, \diam{\ell} \phi') \to (s', H', \xi, \phi')$ with $(s, H) \xrightarrow{\ell} (s', H')$. 
		By the inductive hypothesis, if defender wins from $(s', H', \xi, \phi')$ then $(s', H') \in \sem{\phi'}$, thus:
		\[
			(s, H) \in \{(s_1, H_1) \in \fstates \mid \exists (s_1, H_1) \xrightarrow{\ell} (s_2, H_2).(s_2, H_2) \in \sem{\phi'}\} = \sem{\phi}
		\]	
%
		The proof for $\sq{\ell} \phi'$ is similar.
		
	\end{itemize}
	The more interesting cases are the fixpoint operators.
	Starting with $\mu$, we want to show that, for any $(s', H')$:
	\[
	(s', H')\in \sem{(\mu X(\vec x).\phi')(\vec a)} \text{ iff defender wins from } (s', H', \xi, (\mu X(\vec x).\phi')(\vec a)) \text{ in } \Gcal(\Lcal, \phi, \xi, s, H)
	\]
	We begin with left-to-right implication, assuming that $(s', H') \in \sem{(\mu X(\vec x).\phi')(\vec a)}$. 
	We define the following function:
	\[
	F = \lambda f.\lambda \vec b.\sem[{\xi[X \mapsto f]}]{\phi'\{\vec b / \vec x\}}
	\]
	We denote that for any $k$ that $\xi^k = \xi[X \mapsto F^k(\emptyset)]$.
	Furthermore, we denote $\Gcal_{k}$ to be the game $\Gcal(\Lcal, \phi, \xi^k, s, H)$ for all $k$, and denote simply $\Gcal$ for $\Gcal(\Lcal, \phi, \xi, s, H)$.
	By definition, we have that:
	\[
	\sem{(\mu X(\vec x).\phi')(\vec a)} = \lfp(F)(\vec a) = (\bigcup_{k \in \omega} F^k(\emptyset))(\vec a)
	\]
	We show that for all $k$ and $\vec c$, if defender wins from position $(s', H', \xi^k, \phi'\{\vec c/\vec x\})$ in $\Gcal_k$,
	then defender wins from $(s', \xi, (\mu X(\vec x).\phi')(\vec c))$.
	Note that the latter has unique next position $(s', H', \xi, (\phi'\{\mu X(\vec x).\phi' / X\}\{\vec c / \vec x\}))$. 
	We perform induction on $k$. 
	If $k = 0$, then moves from $(s', \xi^0, \phi'\{\vec c / \vec x\})$ in $\Gcal_0$ never reach a position $(s'',H'', \xi^0, X(\vec d))$ for some $\vec d$. 
	This is because $F^0(\emptyset)(\vec d) = \emptyset$, and so it is not possible to win from $(s'', H'', \xi^0, X(\vec d))$.
	Thus, the same strategy used to win from $(s', H', \xi^0, \phi'\{\vec c / \vec x\})$ can be used to win from $(s', H', \xi, (\mu X(\vec x).\phi')(\vec c))$ in $\Gcal$.
	Suppose that defender wins from $(s', H', \xi^{k + 1}, \phi'\{\vec c / \vec x\})$ in $\Gcal_{k + 1}$, then we obtain the following strategy for $(s', H', \xi, (\phi'\{\mu X(\vec x).\phi' / X\})\{\vec c / \vec x\})$:
	play as normal in $\Gcal$ from $(s', H', \xi, (\phi'\{\mu X(\vec x).\phi' / X\})\{\vec c / \vec x\})$.
	If some $(s'', H'', \xi, (\mu X(\vec x).\phi')(\vec d))$ is reached, then $(s'', H'', \xi^{k + 1}, X(\vec d))$ is reached in $\Gcal_{k + 1}$ from $(s', H', \xi^{k+1}, \phi'\{\vec c / \vec x\})$. 
	By definition, defender is winning from $(s'', H'', \xi^{k + 1}, X(\vec d))$ in $\Gcal_{k+1}$ iff $(s'', H'')\in \sem[\xi^{k + 1}]{X(\vec d)}$.
	Then $(s'', H'') \in \xi^{k + 1}(X)(\vec d) = (F^{k + 1}(\emptyset))(\vec d) = \sem[\xi^k]{\phi'\{\vec d / \vec x\}}$.
	Hence, using the outer induction hypothesis we deduce that defender wins from $(s'', H'',\xi^k, \phi'\{\vec d / \vec x\})$ in $\Gcal_k$, and so by the inner inductive hypothesis, defender has a winning strategy from $(s'', H'', \xi, (\mu X(\vec x).\phi')(\vec d))$ in $\Gcal$, thus this completes the proof of the inner claim.
	\\
	As $(s', H') \in \sem{(\mu X(\vec x).\phi')(\vec a)}$, then $(s', H') \in \bigcup_{j \in \omega} (F^j(\emptyset))(\vec a)$ so $(s', H') \in (F^k(\emptyset))(\vec a)$ for some $k \in \omega$. 
	Therefore, we have that $k > 0$ and $(s', H') \in (F(F^{k - 1}(\emptyset)))(\vec a) = \sem[\xi^k]{\phi'\{\vec a / \vec x\}}$.
	Thus, using the outer inductive hypothesis, defender wins from $(s, H, \xi^k, \phi'\{\vec a / \vec x\})$ in $\Gcal_k$ and so defender wins from $(s', H', \xi, (\mu X(\vec x).\phi')(\vec a))$ in $\Gcal$.
	
	Next, suppose defender wins from $(s', H', \xi, (\mu X(\vec x).\phi')(\vec a))$ in $\Gcal$. 
	By Lemma~\ref{lem:parityNom}, and as defender is winning, every play in this game must have finitely many unfoldings of the form $(\mu X(\vec x).\phi')(\vec c)$ for $\vec c \in \Atoms^{\ar(X)}$.
	Let us consider the winning strategy for defender from $(s', H', \xi, (\mu X(\vec x).\phi')(\vec a))$.
	This can be interpreted as a sub-graph of the full game where, in defender positions, there is a unique successor position respecting the winning strategy (in attacker positions, all successor positions are present). 
	We call positions of the form $(s'', H'', \xi, (\mu X(\vec x).\phi')(\vec c))$ \emph{critical}. 
	By the previous argument, each critical position in the graph, in any play, can each reach finitely many other critical positions. 
	We define the height of every critical position $c$ (denoted $h(c)$) as the maximum number of critical positions reached during any play from $c$. 
	We show that for each $c = (s'', H'', \xi, (\mu X(\vec x).\phi')(\vec c))$, then $(s'', H'')\in \sem[\xi^{h-1}]{\phi'\{\vec c / \vec x\}}$ with $h = h(c)$. 
	This then implies that $(s', H') \in \sem[\xi^{h - 1}]{\{\phi'\{\vec c / \vec x\}} \subseteq \sem{(\mu X(\vec x).\phi')(\vec a)}$.
	We perform induction on $h$ with $h \ge 1$ (a critical position can always reach itself). 
	For $h = 1$, the defender wins from $(s'',H'', \xi, (\mu X(\vec x).\phi')(\vec c))$ without reaching any other critical position, therefore defender also wins from $(s'', \xi^1, \phi'\{\vec c / \vec x\})$ in $\Gcal_1$.
	Thus by the outer inductive hypothesis, $(s'', H'') \in \sem{\phi'\{\vec c / \vec x\}}$.
	Suppose that $c$ has height $h + 1$. 
	Then the defender wins from $(s'', H'', \xi^h, \phi\{\vec c / \vec x\})$ in $\Gcal_h$ as they can play as for $(s'',H'', \xi, (\mu X(\vec x).\phi')(\vec c))$ in $\Gcal$ and if they reach some $(s''', H''', \xi^h, X(\vec d))$ then, in the game with a winning strategy $\Gcal$, this reaches the winning position $(s''', H''', \xi, (\mu X(\vec x).\phi')(\vec d))$ with height less than or equal to $h$. 
	By the inner inductive hypothesis, we have that $(s''', H''')\in \sem[\xi^{h-1}]{\phi'\{\vec d / \vec x\}} = (F^h(\emptyset))(\vec d) = \xi^h(X)(\vec d)$, thus defender wins from $(s''', \xi^h, X(\vec d))$ in $\Gcal_h$ and so defender wins from $(s', H', \xi^h, \phi'\{\vec c / \vec x\})$. 
	By applying the outer inductive hypothesis we obtain $(s'', H'')\in \sem[\xi^h]{\phi'\{\vec c / \vec x\}}$.  

	Next, we examine the case of $\nu$. 
	Similarly to $\mu$, we want to show that:
	\[
	(s', H')\in \sem{(\nu X(\vec x).\phi')(\vec a)} \text{ iff defender wins from }(s', H', \xi, (\nu X(\vec x).\phi')(\vec a))\text{ in }\Gcal(\Lcal, \phi, \xi, s, H)
	\]
	The latter position has the following unique successor position
	$(s', H', \xi, (\phi'\{\nu X(\vec x).\phi'/X\})\{\vec a/\vec x\})$.
	We begin with left-to-right implication, assuming that $(s', H')\in \sem{(\nu X(\vec x).\phi')(\vec a)}$.
	We note that:
	\begin{align*}
		&\sem{(\nu X(\vec x).\phi')(\vec a)} = \sem[\xi']{\phi'\{\vec a / \vec x\}}\\
		&\text{ with } \xi' = \xi[X \mapsto \sem{(\nu X(\vec x).\phi')}]
	\end{align*}
	By the inductive hypothesis, $(s', H') \in \sem[\xi']{\phi'\{\vec a / \vec x\}}$ implies that defender wins from $(s', H', \xi', \phi'\{\vec a / \vec x\})$.
	We build a winning strategy 
	from $(s', H',\xi,(\nu X(\vec x).\phi')(\vec a))$
	for the defender as follows: the defender follows the strategy for $(s', H', \xi', \phi'\{\vec a / \vec x\})$.
	If some position $(s'', H'', \xi, (\nu X(\vec x).\phi')(\vec b))$ is reached, this would correspond to reaching $(s'', H'', \xi', X(\vec b))$ in the game the defender wins. 
	As the defender is winning, this means $(s'', H'')\in \xi'(X)(\vec b) = \sem{(\nu X(\vec x).\phi')(\vec b)}$, hence $(s'', H'')\in \sem[\xi']{\phi'\{\vec b / \vec x\}} = \sem{(\nu X(\vec x).\phi')(\vec b)}$.
	Then, by the inductive hypothesis, defender wins from $(s'', H'', \xi, \phi'\{\vec b / \vec x\})$. 
	We continue in our game to $(s'', H'',\xi, (\phi'\{\nu X(\vec x).\phi'/X\})\{\vec b / \vec x\})$ following the winning strategy for $(s'', H'', \xi, \phi\{\vec b / \vec x\})$, and so on.
	This strategy is winning as:
	\begin{enumerate}
		\item if there are finitely many unfoldings of the form $X(\vec k)$, then the last $(s'', H'', \xi, \phi\{\vec k / \vec x\})$ is winning for the defender,
		\item otherwise, it must be that 
		some $X(\vec k)$ has infinitely many unfoldings, so defender wins as $\Omega((\nu X(\vec x).\phi')(\vec k))$ is even and must be the highest infinitely repeating rank (c.f Lemma~\ref{lem:parityNom2}).
	\end{enumerate}
	Lastly, we assume defender has a winning strategy from $(s', H', \xi, (\nu X(\vec x).\phi')(\vec a))$. 
	We redefine for any $j$ that $\xi^j = \xi[X \mapsto F^j(\fstates)]$.
	By definition we have that:
	\[
		\sem{(\nu X(\vec x).\phi')(\vec a)} = \gfp(F)(\vec a) = (\bigcap_{j \in \omega} F^j(\Scal))(\vec a)
	\]
	We claim that the defender wins from $(s', H', \xi^j,\phi\{\vec a / \vec x\})$ for all $j$. 
	For $j = 0$, we just play as in $(s', H', \xi, (\phi\{\nu X(\vec x).\phi' / X\})\{\vec a / \vec x\})$, and if we reach some $(s'', H', \xi^0, X(\vec c))$, then defender wins. 
	This is because $(s'', H'') \in \xi^0(X)(\vec c) = (F^0(\Scal))(\vec c) = \Scal$. 
	If some $(s'', H'', \xi^{j + 1}, \phi\{\vec c / \vec x\})$ is reached, then we need to show that $(s'', H'')\in \xi^{j+1}(X)(\vec c) = (F^{j + 1}(\fstates))(\vec c)$. 
	In the game from $(s', H', \xi, (\phi'\{\nu X(\vec x) / X\})\{\vec a / \vec x\})$, we would reach $(s'', H'', \xi, (\nu X(\vec x).\phi')(\vec c))$. 
	By the inductive hypothesis, defender wins from $(s'', H'',\xi, \phi\{\vec c / \vec x\})$, and by the outer inductive hypothesis, $(s'', H'')\in \sem[\xi^j]{\phi\{\vec c / \vec x\}} = \xi^{j + 1}(X)(\vec c)$ as required.
	Thus, by the outer inductive hypothesis, $(s', H')\in \sem[\xi^j]{\phi\{\vec c / \vec x\}}$ for all j, and therefore 
	\[
	(s', H')\in (\bigcap_{j \in \omega}F^j(\Scal))(\vec a) = \sem{(\nu X(\vec x).\phi')(\vec a)}
	\]
\end{proof}

%

\subsection{Proof of \cref{lem:boundingisnombisim}}

\boundingisnombisim
\begin{proof}
  Let us write $\Rcal$ for $\Rcal_{(\Lcal, \phi_0, s_0, H_0)}$.
Let us choose some $((s, H_1, \phi),(s, H_2, \phi)) \in \Rcal_{(\Lcal, \phi_0, s_0, H_0)}$. By definition, the rank and polarity of $(s, H_1, \phi)$ and $(s, H_2, \phi)$ are the same.
Let us introduce the following notation for economy (where $ H \subseteq \Atoms$) :\
        $ H_{(-b+a)} = ( H \setminus \{b\}) \cup \{a\}$.\\
        We next look at the possible moves.
We perform a case analysis on $\phi$:
    \begin{itemize}
        \item $u = v$: neither $(s, H_1, \phi)$ nor $(s, H_2, \phi)$ can make a move.
        %
        \item $\phi_1 \lor \phi_2$: we have that $(s, H_1, \phi) \rightarrow (s, H_1, \phi_i)$ and $(s, H_2, \phi) \rightarrow (s, H_2, \phi_i)$ for $i \in \{1, 2\}$. 
        For any such $i$, as the history does not change, if $|H_2| \le \S $ then $((s, H_1, \phi_i), (s, H_2, \phi_i)) \in \Rcal$ for $i \in \{1,2\}$. 
        Now suppose $|H_2| = \S + 1$. 
        As $\supp(\phi_i) \subseteq \supp(\phi)$, it follows that $\supp(\phi_i, s) \cap H_1 \subseteq \supp(\phi, s) \cap H_1 \subseteq H_2 \subseteq H_1$, meaning $((s, H_1, \phi_i),(s, H_2, \phi_i)) \in \Rcal$ for $i \in \{1,2\}$ as required.
        Similar can be shown for $\phi_1 \land \phi_2$.
        \item $\fresh{x} \phi'$ the possible moves in $\fgame$ are $(s, H_1, \phi)\rightarrow (s, H_1, \phi'\{a/x\})$ for all $a \# \phi,H_1$, hence each such move can be matched by some $(s, H_2, \phi) \rightarrow (s, H_2, \phi'\{a/x\})$ and we have that $a \notin H_2 \subseteq H_1$ hence $((s, H_1, \phi'\{a/x\}),(s, H_2, \phi'\{a/x\})) \in \Rcal$. 
        \\
        Conversely, the possible moves in $\Gcal_\S$ are $(s, H_2, \phi) \rightarrow (s, H_2, \phi'\{a/x\})$ for $a \# \phi,H_2$. 
        If $a \notin H_1$, then as before this can be matched by $(s, H_1, \phi)\rightarrow (s, H_1, \phi'\{a/x\})$ with $((s, H_1, \phi'\{a/x\})$, $(s, H_2, \phi'\{a/x\})) \in \Rcal$.
        Otherwise, {and this is the more interesting case}, suppose $a \in H_1 \setminus H_2\cup\supp(\phi)$.
Then, we can select some $b \notin H_1$, meaning $b \notin H_2$ and $(a\ b)\cdot (s, H_2, \phi'\{a/x\}) = (s, H_2, \phi'\{b/x\})$. This can be matched by $(s, H_1, \phi) \rightarrow (s, H_1, \phi'\{b/x\})$ with $((s, H_1, \phi'\{b/x\}), (s, H_2, \phi'\{b/x\})) \in \Rcal$ as required. 
        \item $\bigvee_x \phi'$: 
        if $|H_2| \le \S$, then $(s, H_1, \bigvee_x \phi') \rightarrow (s, H_1, \phi'\{a/x\})$ can be matched by the move $(s, H_2, \bigvee_x \phi') \rightarrow (s, H_2, \phi'\{a/x\})$ with $((s, H_1, \phi'\{a/x\}), (s, H_2, \phi'\{a/x\})) \in \Rcal$ as required. And similarly for each $(s, H_2, \bigvee_x \phi') \rightarrow (s, H_2, \phi'\{a/x\})$.
        \\
        Suppose $|H_2| = \S+1$, and
        let $(s, H_1, \bigvee_x \phi') \rightarrow (s, H_1, \phi'\{a/x\})$ for some $a \in \Atoms$.
Note that $\supp(\phi', s) \cap H_1 \subseteq H_2 \subseteq H_1$. We do a case analysis on $a$.
        \begin{itemize}
            \item 
        If $a \in H_2$ or $a\notin H_1$, then this means $a$ is in both histories, or neither, so it suffices to again match with the move $(s, H_2, \bigvee_x \phi') \rightarrow (s, H_2, \phi'\{a/x\})$ with $((s, H_1, \phi'\{a/x\})$, $(s, H_2, \phi'\{a/x\})) \in \Rcal$. 
        \item
        If $a \in H_1 \setminus H_2$, then we no longer have 
        $((s, H_1, \phi'\{a/x\}),(s, H_2, \phi'\{a/x\})) \in \Rcal$. 
        Note that then $a\notin\supp(s,\phi')$ by definition of $\Rcal$. 
        We pick $b\in H_2\setminus\supp(s,\phi')$ and make the move
        $(s, H_2, \bigvee_x \phi') \rightarrow (s, H_2, \phi'\{b/x\})$. We have $(a\ b)\cdot(s, H_2, \phi'\{b/x\})=(s, H_{2(-b+a)}, \phi'\{a/x\})$ and
        $((s, H_1, \phi'\{a/x\}),(s, H_{2(-b+a)}, \phi'\{a/x\})) \in \Rcal$, as required.
                \end{itemize}
        Conversely, suppose $(s, H_2, \bigvee_x \phi')\to(s, H_2, \phi'\{a/x\})$. We do case analysis on $a$.
        \begin{itemize}
        \item If $a \in H_2$ or $a \notin H_1$, then
          $(s, H_1, \bigvee_x \phi')$ $ \rightarrow (s, H_1, \phi'\{a/x\})\Rcal(s, H_2, \phi'\{a/x\})$.
            \item If $a \in H_1 \setminus H_2$, then let us choose the transition $(s, H_1, \bigvee_x \phi') \to (s, H_1, \phi'\{b/x\})$ for some $b \notin H_1 \cup \supp(\phi', s)$.
            Then, we have that $(s, H_1, \phi'\{b/x\}) \nomeq (s, H_{1(-a+b)}, \phi'\{a/x\})$ and $((s, H_{1(-a+b)}, \phi'\{a/x\}),(s, H_2, \phi'\{a/x\})) \in \Rcal$ as required.
        \end{itemize}
        %
        \item $\diam{t, \vec a} \phi'$: in $\fgame$, the possible moves are $(s, H_1, \phi) \rightarrow (s', H_1', \phi')$ for all $(s', H_1')$ such that $(s, H_1) \xrightarrow{t, \vec a} (s', H_1')$.
        By definition, we know $\supp(s') \subseteq \supp(s) \cup \{\vec a\}$ and $H_1' = H_1 \cup \{\vec a\}$. 
        If $|H_1'| \le \S$, then this is matched by some move $(s,H_2,\phi) \rightarrow (s', H_2', \phi')$ with $H_2' = H_2 \cup \{\vec a\}$ and $((s', H_1', \phi'),(s',H_1',\phi')) \in \Rcal$. 
        Otherwise, we can see that $|H_2 \cup \{\vec a\}| > \S$. This can be matched by some move $(s,H_2,\phi) \rightarrow (s', H_2', \phi')$ with $H_1' \cap \supp(\phi', s') \subseteq H_2' \subseteq H_1'$ and $|H_2'| = \S + 1$. 
        As $|\supp(\phi',s')| \le \S$, we have that such a $H_2'$ exists and $((s', H_1', \phi'),(s',H_2',\phi')) \in \Rcal$ as required.
        \\
        Conversely, let $(s, H_2, \phi) \rightarrow (s', H_2', \phi')$ with $s\xrightarrow{t, \vec a} s'$. One of the following is the case:
        \begin{itemize}
            \item $|H_2 \cup \{\vec a\}| \le \S$, then $H_2' = H_1 \cup \{\vec a\}$. 
            This can be matched by the move $(s, H_1, \phi) \rightarrow (s', H_1', \phi')$ with $H_1' = H_1 \cup \{\vec a\}$ and $((s', H_1', \phi'),(s',H_2',\phi')) \in \Rcal$.
            \item $|H_2 \cup \{\vec a\}| > \S$ and $(H_2 \cup \{\vec a\}) \cap \supp(s', \phi') \subseteq H_2' \subseteq H_2 \cup \{\vec a\}$ with $|H_2'| = \S + 1$. 
            As $\supp(s') \subseteq \supp(s) \cup \{\vec a\}$ and $\supp(\phi) = \supp(\phi') \cup \{\vec a\}$, we have:
            \begin{align*}
                (H_1 \cup \{\vec a\}) \cap \supp(s', \phi') &= ((H_1\setminus\{\vec a\}) \cap \supp(s', \phi')) \cup (\{\vec a\} \cap \supp(s', \phi'))
                \\
                &\subseteq (H_2 \cap \supp(s', \phi')) \cup (\{\vec a\} \cap \supp(s', \phi'))\subseteq H_2' 
            \end{align*}
            Thus, the move can be matched by $(s, H_1, \phi) \rightarrow (s', H_1', \phi')$ with $H_1' = H_1 \cup \{\vec a\}$ and $((s', H_1', \phi'),(s',H_2',\phi')) \in \Rcal$.
        \end{itemize}
        %
        \item $(\mu X(\vec x).\phi')(\vec a)$: in $\fgame$, the only move is $(s, H_1, \phi) \rightarrow (s, H_1, (\phi'\{\mu X(\vec x).\phi'/X\})\{\vec a/\vec x\})$. As $H_1$ does not change and $\supp(\phi, s) = \supp((\phi'\{\mu X(\vec x).\phi'/X\})\{\vec a/\vec x\})$, then this can be matched by the only move in $\Gcal_\S$, i.e.\ $(s, H_2, \phi) \rightarrow (s, H_2, (\phi'\{\mu X(\vec x).\phi'/X\})\{\vec a/\vec x\})$ with $((s, H_1, (\phi'\{\mu X(\vec x).\phi'/X\})\{\vec a/\vec x\}), (s, H_2, (\phi'\{\mu X(\vec x).\phi'/X\})\{\vec a/\vec x\})) \in \Rcal$. 
    \end{itemize} 
The remaining cases can be shown similarly to the ones above.
  \end{proof}

\subsection{Proof of \cref{lem:nombisimequiv}}
\nombisimequiv*
\begin{proof}
  By symmetry, it suffices to show the left-to-right direction. Suppose $(P_1,P_2)\in R$ for some bisimulation $R$ and let $\Theta_1$ be a winning strategy for Defender from $P_1$. We assume the existence of a choice function $\iota$ such that, for each $(P_1',P_2')\in R$:
  \begin{itemize}
\item    if $P_1',P_2'$ are Attacker positions then, for each move $P_2'\to P_2''$, $\iota(P_1',P_2',P_2'')=P_1''$ such that $P_1'\to P_1''$ and $(P_1'',P_2'')\in R$; 
\item    if $P_1',P_2'$ are Defender positions then, for each move $P_1'\to P_1''$, $\iota(P_1',P_2',P_1'')=P_2''$ such that $P_2'\to P_2''$ and $(P_1'',P_2'')\in R$.
    \end{itemize}
While our strategies are positional (i.e.\ in $V_D\rightharpoonup E$), we next define a strategy $\Theta_2:V^*\times V_D\to E$, i.e.\ one that looks at the whole play. It is known that if there is such a winning strategy for Defender from a root position then there is also a positional one~\cite{EmersonJutla91}. Let us call a play $\Pi_2=P_{2,0},P_{2,1},\dots,P_{2,n}$ in $\Gcal_2$ with $P_{2,0}=P_2$ to be \emph{$(\Theta_1,R)$-compatible} if there is a unique play 
$\Pi_1=P_{1,0},P_{1,1},\dots,P_{1,n}$ in $\Gcal_1$, called \emph{the $\Gcal_1$-match of $\Pi_2$}, such that $P_{1,0}=P_1$ and, for each $i<n$, $(P_{1,i+1},P_{2,i+1})\in R$ and:
\begin{itemize}
\item if $P_{1,i},P_{2,i}$ are Attacker positions then $P_{1,i+1}=\iota(P_{1,i},P_{2,i},P_{2,i+1})$;
\item if $P_{1,i},P_{2,i}$ are Defender positions then $\Theta_1(P_{1,i})=P_{1,i}\to P_{1,i+1}$ and $P_{2,i+1}=\iota(P_{1,i},P_{2,i},P_{1,i+1})$.
\end{itemize}
We let the strategy $\Theta_2$ be defined on \emph{$(\Theta_1,R)$-compatible} plays ending in Defender positions, by setting
$\Theta_2(P_{2,0},P_{2,1},\dots,P_{2,n})=\iota(P_{1,n},P_{2,n},\Theta_1(P_{1,n}))$, where $P_{1,0},P_{1,1},\dots,P_{1,n}$ the $\Gcal_1$-match of $P_{2,0},P_{2,1},\dots,P_{2,n}$. Note that, for all $i$, $(P_{1,i},P_{2,i})\in R$.
\\
Let now $\Pi_2=P_{2,0},P_{2,1},\dots$ be a complete or infinite play in $\Gcal_2$ starting from $P_2$ and following $\Theta_2$ for Defender. We claim that every finite prefix $P_{2,0},P_{2,1},\dots,P_{2,n}$ is $(\Theta_1,R)$-compatible. Indeed, if $P_{2,0},P_{2,1},\dots,P_{2,i}$ is $(\Theta_1,R)$-compatible with $\Gcal_1$-match $P_{1,0},P_{1,1},\dots,P_{1,i}$ then, depending on the polarity of $P_{1,i},P_{2,i}$, we get a unique  $P_{1,0},P_{1,1},\dots,P_{1,i+1}$ such that $P_{2,0},P_{2,1},\dots,P_{2,i+1}$ is $(\Theta_1,R)$-compatible. Hence, if $\Pi_2$ is complete then Defender wins as the corresponding play $\Pi_1$ must be winning by hypothesis. If, on the other hand, $\Pi_2$ is infinite then we can define the infinite play $\Pi_1$ by setting, for each $n$, $P_{1,0},P_{1,1},\dots,P_{1,n}$ to be the $\Gcal_1$-match of $P_{2,0},P_{2,1},\dots,P_{2,n}$. As Defender must win $\Pi_1$, they also win $\Pi_2$.
\end{proof}

\subsection{Proof of \cref{lem:orbitGameSize}}\label{app:orbitGameSize}

In the following proofs, when discussing any sequence of names $\vec a$, it is assumed that all names in $\vec a$ are distinct.

\begin{lemma}\label{lem:sizebound}
	For every formula $\phi$ such that for every subformula $(\mu X(\vec x).\phi')(\vec u)$ of $\phi$ we have $X$ is free in $\phi'$, then $|\supp(\phi)|+2\|\phi\|+\zeta(\phi)\leq|\phi|$ where $\zeta(\phi)$ is the number of free name variables (of the form $x, y, \dots$) in $\phi$.
\end{lemma}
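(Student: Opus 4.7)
The strategy is structural induction on $\phi$, organised via the slack function $f(\phi) := |\phi| - |\supp(\phi)| - 2\|\phi\| - \zeta(\phi)$: the goal becomes $f(\phi)\geq 0$. The base cases are immediate. If $\phi = (u=v)$, each of $u,v$ is either a name (contributing $1$ to $|\supp|$) or a value variable (contributing $1$ to $\zeta$), so $|\supp(\phi)| + \zeta(\phi) \leq 2 = |\phi|$ while $\|\phi\|=0$. If $\phi = X(\vec u)$, the same accounting gives $|\supp(\phi)| + \zeta(\phi) \leq |\vec u| < 1 + |\vec u| = |\phi|$, so $f(\phi) \geq 1$.

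For the boolean, negation, and modal inductive cases, the size grows by at least $1$ more than the combined sizes of the children and any values $\vec u$ occurring in a modality (the $|\vec u|$ increment in $|\phi|$ absorbs the at-most-$|\vec u|$ increase in $|\supp|+\zeta$); since $\|\phi\|$ is the maximum of the children's bounding depths, one obtains $f(\phi) \geq 1 + \sum_i f(\text{children}_i)$ and the IH propagates routinely. The delicate cases are the value-quantifier constructs $\bigvee_x\psi$, $\bigwedge_x\psi$, $\fresh{x}\psi$ and the fixpoint constructs $(\sigma X(\vec x).\phi')(\vec u)$ for $\sigma\in\{\mu,\nu\}$, each of which increases $\|\phi\|$ (contributing $+2$ to $2\|\phi\|$ per quantifier, $+2|\vec x|$ per fixpoint) while $|\phi|$ grows by only $1$ or by $1+2|\vec u|$ respectively. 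For a quantifier $\bigvee_x\psi$ one computes $f(\phi) = f(\psi) - 1 + [x\in\fv(\psi)]$, which is non-negative precisely when the bound $x$ actually appears free in $\psi$: this is the standing well-formedness convention that the lemma implicitly invokes, parallel to its explicit hypothesis for recursion variables. For the fixpoint case with $X$ free in $\phi'$, the balance computes to $f(\phi) - f(\phi') \geq 1 - |\vec u| + k$, where $k$ is the number of $x_i\in\vec x$ that are free in $\phi'$ and where items of $\vec u$ are distinct by convention; under the analogous assumption $k = |\vec x|$, one obtains $f(\phi) \geq f(\phi') + 1$ and the induction closes.

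The main obstacle I expect is precisely the accounting at the fixpoint step, together with a careful reading of the hypothesis. Taken literally, the assumption ``$X$ is free in $\phi'$'' alone is not enough: the formula $(\mu X(x_1,x_2,x_3).X(a,b,c))(d,e,f)$ satisfies it but yields $f = -1$, because $\vec x$ never appears in $\phi'$ while $\vec u$ introduces three entirely new names. The intended reading, consistent with the paper's earlier remark that each recursion variable must have at least one appearance in its binder's scope, is the parallel well-formedness convention that every bound variable (value or recursion) actually occurs in its body; with that convention in force the calculation above goes through cleanly. An alternative would be to strengthen the inductive statement so as to bookkeep additional slack contributed by each recursive occurrence $X(\vec y)$ in $\phi'$, but such slack alone does not suffice to absorb the $|\vec u|$ fresh names that $\vec u$ may bring in, so the natural well-formedness reading seems to be the right one.
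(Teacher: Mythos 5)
Your proof follows essentially the same route as the paper's: a structural induction on $\phi$ with exactly this accounting (the paper's fixpoint case computes $(|\supp(\phi')|+n)+2(\|\phi'\|+n)+(\zeta(\phi')-n)\le|\phi'|+2n+1$, which is your slack calculation in disguise). Your observation about the hypothesis is also accurate rather than a defect of your argument: the paper's proof simply asserts ``as $x$ is free in $\phi'$'' in the quantifier case and ``as each $x\in\{\vec x\}$ is free in $\phi'$'' in the fixpoint case, i.e.\ it tacitly assumes the very well-formedness convention you identify as necessary, and your counterexample (or more simply $\bigvee_x(a=b)$, where the left-hand side is $4$ but $|\phi|=3$) confirms that the literally stated hypothesis alone does not suffice.
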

\begin{proof}
	We perform an induction on $\phi$ with the following base cases:
	\begin{itemize}
		\item $u=v$: we have that $|\supp(\phi)| + \zeta(\phi) = 2$, $2\|\phi\| = 0$  and $|\phi| = 2$. 
		\item $X(\vec u)$: we have that $|\supp(X(\vec u))| + \zeta(\phi) = |\vec u|$, $2\|X(\vec u)\| = 0$, $\zeta(\phi)=0$ and $|X(\vec u)| = 1 + |\vec u|$.
	\end{itemize}
	Next we perform inductive steps:
	\begin{itemize}
		\item $\phi_1 \lor \phi_2$: 
		we have that 
		$\|\phi\| = \max(\|\phi_1\|, \|\phi_2\|)$,
		and by the inductive hypothesis we have $|\supp(\phi_i)| + 2\|\phi_i\| + \zeta(\phi_i) \le |\phi_i|$.
		Additionally, we know $\supp(\phi_i) \subseteq \supp(\phi)$ and $\zeta(\phi) \le \zeta(\phi_1) + \zeta(\phi_2)$. 
		By definition, we know $|\phi| = 1 + |\phi_1| + |\phi_2|$, 
		thus, we have:
		\begin{align*}
			|\supp(\phi)| + 2\|\phi\| + \zeta(\phi)&\le (|\supp(\phi_1)| + |\supp(\phi_2)|) + 2(\|\phi_1\| + \|\phi_2\|) + \zeta(\phi_1) + \zeta(\phi_2) 
			\\&\le 1 + |\phi_1| + |\phi_2| = |\phi|
		\end{align*}
		\item $\lnot \phi'$: we have that $\|\lnot \phi'\| = \|\phi'\|$, and by the inductive hypothesis we have $|\supp(\phi')| + 2\|\phi'\| + \zeta(\phi') \le |\phi'|$. 
		Additionally, we have that $\supp(\phi') = \supp(\lnot \phi')$ and $\zeta(\phi) = \zeta(\phi')$.
		By definition, we have that $|\lnot \phi'| = 1 + |\phi'|$, thus we obtain:
		\[
		|\supp(\lnot \phi')| + 2\|\lnot \phi'\| + \zeta(\phi)= |\supp(\phi')| + 2\|\phi'\| + \zeta(\phi')\le 1 + |\phi'|
		\]
		\item $\diam{\ell} \phi'$: 
		we have that $\|\diam{\ell}\phi'\| = \|\phi'\|$, and by the inductive hypothesis we have $|\supp(\phi')| + 2\|\phi'\| + \zeta(\phi') \le |\phi'|$. 
		Additionally, as $\phi$ could have more names than $\phi'$ by up to $|\vec u|$, we have that $|\supp(\phi)| \le |\supp(\phi')| + |\vec u|$, and as no variables become free we have $\zeta(\phi) = \zeta(\phi')$.
		By definition, we have that $|\diam{t, \vec u} \phi'| = 1 + |\vec u| + |\phi'|$, thus we obtain:
		\begin{align*}
			|\supp(\diam{t, \vec u} \phi')| + 2\|\diam{t, \vec u} \phi'\| + \zeta(\phi) 
			&= |\supp(\phi')| + 2\|\phi'\| + \zeta(\phi') + |\vec u|\\
			&\le 1 + |\phi'| + |\vec u| = |\phi|
		\end{align*}
		Similar can be shown for the case of $\fresh{x}\phi'$.
		\item $\bigvee\nolimits_x \phi'$: we have that $\|\bigvee\nolimits_x \phi'\| = 1 + \|\phi'\|$, and by the inductive hypothesis we have $|\supp(\phi')| + 2\|\phi'\| + \zeta(\phi') \le |\phi'|$.
		Additionally, we have that $\supp(\bigvee\nolimits_x \phi') = \supp(\phi')$ and as $x$ is free in $\phi'$ (but not in $\phi$) we have $\zeta(\phi) = \zeta(\phi') - 1$.
		By definition, we have that $|\bigvee\nolimits_x \phi'| = 2 + |\phi'|$, thus we obtain:
		\begin{align*}
			|\supp(\bigvee\nolimits_x \phi')| + 2\|\bigvee\nolimits_x  \phi'\| + \zeta(\phi) &= |\supp(\phi')| + 2(\|\phi'\| + 1) + (\zeta(\phi') - 1)\\
			&= |\supp(\phi')| + 2\|\phi'\| + \zeta(\phi') + 1\\
			&\le 1 + |\phi'| = |\phi|  
		\end{align*}
		\item $(\mu X(\vec x).\phi')(\vec u)$: we have that $\|(\mu X(\vec x).\phi')(\vec u)\| = \|\phi'\| + n$ where $n = |\vec u| = |\vec x|$, and by the inductive hypothesis we have that $|\supp(\phi')| + 2\|\phi'\| + \zeta(\phi') \le |\phi'|$. 
		Additionally, $\phi$ will have at most $n$ many more names than $\phi'$, and so $|\supp(\phi)| \le |\supp(\phi')| + n$. 
		and as each $x \in \{\vec x\}$ is free in in $\phi'$ (but not in $\phi$), then $\zeta(\phi) = \zeta(\phi') - n$.
		By definition we have that $|(\mu X(\vec x).\phi')(\vec u)| = 1 + |\phi'| + 2n$, thus we obtain:
		\begin{align*}
			|\supp((\mu X(\vec x).\phi')(\vec u))| + 2\|(\mu X(\vec x).\phi')(\vec u)\| + \zeta(\phi)
			&\le (|\supp(\phi')| + n) + 2(\|\phi'\| + n) + \zeta(\phi') - n\\
			&\le |\phi'| + 2n + 1 = |\phi|\ 
		\end{align*}
	\end{itemize}
\end{proof}

\begin{lemma}\label{lem:genPartialPermF}
	Let $\phi_0$ be a \fhml\ formula, $\vec a_0, \vec b_0 \in \Atoms^{n}$ where $|\vec a_0| = |\vec b_0| = n$ and let $(\phi, \gamma_1, \theta_1), (\psi, \gamma_2, \theta_2) \in \clos(\phi_0)$.
	Then, the time complexity of finding $\move{\vec a}{\vec b}$ 
	such that $\move{\vec a}{\vec b} \cdot \phi\{\theta_1\}\{\gamma_1\} = \psi\{\theta_2\}\{\gamma_2\}$, $\vec a_0$ is a subsequence of $\vec a$ and $\vec b_0$ is a subsequence of $\vec b$ is bounded by $O(|\phi_0)|)$.
\end{lemma}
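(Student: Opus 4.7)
The plan is to carry out a parallel recursive descent on the syntax trees of $\phi$ and $\psi$, maintaining a partial bijection $\sigma$ on atoms that is seeded with $\vec a_0 \mapsto \vec b_0$ and extended on the fly. The crucial observation is that although the substituted formulas $\phi\{\theta_1\}\{\gamma_1\}$ and $\psi\{\theta_2\}\{\gamma_2\}$ can be large (and in the infinitary expansion even unbounded), the triples themselves are anchored at the subformulas $\phi,\psi$ of $\phi_0$ and can be compared laterally without ever materialising the full substitutions. First I would do a linear preprocessing pass over $\phi_0$ to assign each subformula position a unique identifier, so that equality tests on subformulas can be performed in $O(1)$.

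The descent then proceeds by case analysis on the outermost constructor of $\phi$ (and expects the same shape in $\psi$, otherwise returning \textsf{NO}). For Boolean, modal, and quantifier operators we recurse on the immediate subformulas; each modal prefix $\diam{t,\vec u}$ or $\sq{t,\vec u}$ requires resolving $\vec u$ via $\gamma_1$ (resp.\ $\gamma_2$ on the other side) and checking that each resolved pair $(a,b)$ is consistent with $\sigma$, either already respecting it or admitting an extension that keeps $\sigma$ a partial injection. Base cases $u=v$ are handled in the same way. When we enter a binder ($\bigvee_x$, $\fresh x$, or a fixpoint $(\sigma X(\vec x).\phi')(\vec u)$), we augment $\gamma_1$ and $\gamma_2$ with matching canonical fresh atoms for the bound value variables, and for fixpoints we also augment $\theta_1$ and $\theta_2$ with the fixpoint itself.

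The delicate case is the recursion variable: when $\phi=X(\vec u)$ with $X\in\dom(\theta_1)$, naive unfolding of $\theta_1(X)$ risks non-termination. Here we exploit the global well-namedness stipulations declared just before Definition~\ref{def:altDepthNom}: each bound recursion variable of $\phi_0$ has a unique binder, so $\theta_1(X)$ is determined by $X$ alone. We therefore do not unfold; instead we require that $\psi$ also be of the form $Y(\vec v)$ with $X=Y$ (checked by the pre-assigned identifiers in $O(1)$), and simply check that $\sigma(\vec u\{\gamma_1\})=\vec v\{\gamma_2\}$. The dual case, where $\phi$ is a fixpoint and $\psi$ is a recursion variable (or vice versa), is handled symmetrically by first performing one unfolding step on the naked variable side so as to align the two constructors. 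The main obstacle I expect is justifying the soundness and completeness of this lazy treatment: one must argue that $\phi\{\theta_1\}\{\gamma_1\}$ and $\psi\{\theta_2\}\{\gamma_2\}$ are equal up to a permutation extending $\vec a_0\mapsto\vec b_0$ iff the descent succeeds, which follows by co-induction on the (possibly infinite) unfolded terms together with the uniqueness of binders in $\phi_0$.

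For the complexity bound, every recursive call corresponds to a position in the syntax tree of $\phi$ (equivalently $\psi$), and these are all subformulas of $\phi_0$ of size at most $|\phi_0|$; by Lemma~\ref{lem:sizebound} the running totals of atoms and bound variables are also bounded by $|\phi_0|$, so $\sigma$, $\gamma_i$ and $\theta_i$ can be represented by hash tables (or de~Bruijn-indexed stacks) supporting $O(1)$ lookup and update. Each visited node performs only a constant number of such lookups, a constant-time constructor match, and possibly one extension of $\sigma$. Summing, the total work is $O(|\phi_0|)$, as required.
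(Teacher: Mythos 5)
Your overall strategy coincides with the paper's: a lockstep descent on $\phi$ and $\psi$ that incrementally extends a partial injection on atoms seeded with $\vec a_0\mapsto\vec b_0$ (this is exactly the paper's function $\Gamma$ with its helper $\update$), with the cost of all extensions amortised against $|\supp(\phi_0)|+\|\phi_0\|$ via Lemma~\ref{lem:sizebound}. The treatment of binders, modalities and the complexity accounting all match.

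There is, however, a genuine gap in your recursion-variable case. When $\phi=X(\vec u)$ you propose not to unfold and to check only that your partial bijection sends $\vec u\{\gamma_1\}$ to $\vec v\{\gamma_2\}$. But the object being compared is $\phi\{\theta_1\}\{\gamma_1\}$, i.e.\ the fixpoint formula $\theta_1(X)$ is substituted in \emph{and then} $\gamma_1$ is applied; its support therefore also contains $\gamma_1(y)$ for every value variable $y$ occurring free in the body of $\theta_1(X)$ (such $y$ are bound by quantifiers enclosing the fixpoint in $\phi_0$). Uniqueness of binders gives $\theta_1(X)=\theta_2(X)$ as raw syntax, but says nothing about whether $\gamma_1$ and $\gamma_2$ assign matching atoms to those free variables. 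Concretely, for $\phi_0=\bigvee_y(\mu X(x).(\diam{y}\psi_0\lor X(x)))(u)$ the closure contains triples $(X(x),\gamma_1,\theta_1)$ and $(X(x),\gamma_2,\theta_2)$ with $\{a/y\}\in\gamma_1$, $\{a'/y\}\in\gamma_2$ and $a\neq a'$; your check succeeds as soon as the arguments of $X$ match, yet the two substituted formulas are $(\mu X(x).(\diam{a}\psi_0\lor X(x)))(c)$ versus $(\mu X(x).(\diam{a'}\psi_0\lor X(x)))(c')$, so the returned map need not satisfy the required equation, and the claimed co-inductive soundness argument breaks at exactly this point. The paper's proof avoids this by feeding $\update$, in the $X(\vec u)$ base case, not only the argument vectors but also $\vec y_1\{\gamma_1\}$ and $\vec y_2\{\gamma_2\}$, where $\vec y_i$ enumerates the free value variables of the fixpoint bodies. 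A secondary point: in the mixed case (bare variable against fixpoint) the paper again resolves the comparison in place by matching arguments plus free variables of the bodies, rather than unfolding and descending into the body as you suggest; your variant re-traverses a subformula that is not below the variable occurrence, so you would need an extra argument (e.g.\ that the non-unfolding side still shrinks at every call) to keep the $O(|\phi_0|)$ bound.
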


\begin{proof}
	Let us define the function $\Gamma$ that takes some $(\phi, \theta_1, \gamma_1, \psi, \theta_2, \gamma_2, \vec a, \vec b)$ as parameters and returns some $(\vec a', \vec b')$ such that:
	\[
	\begin{aligned}
		\vec a' &= \vec a, \vec a'' &[\text{s.t.} \{\vec a\} \cap \{\vec a''\} = \emptyset]\\
		\vec b' &= \vec b, \vec b'' &[\text{s.t.} \{\vec b\} \cap \{\vec b''\} = \emptyset]\\
	\end{aligned}
	\]
	\vspace{-1.5ex}
	\[
	\move{\vec a'}{\vec b'} \cdot \phi'\{\theta_1\}\{\gamma_1\} = \psi\{\theta_2\}\{\gamma_2\}
	\]
	or returns $\NO$ if not possible. 
	Moreover, we define the function $\update$ that takes some $(\vec a, \vec b, \vec c, \vec d)$, such that $|\vec a| = |\vec b|$ and:
	\begin{itemize}
		\item if $|\vec c| \neq |\vec d|$ then return $\NO$,
		\item if $\vec c = \varepsilon$ and $\vec d = \varepsilon$, then return $(\vec a, \vec b)$,
		\item if $c_1 \in \{\vec a\}$ and $d_1 \in \{\vec b\}$, then 
		we
		return the result of $$\update(\vec a, \vec b, \vec{c}_{2:}, \vec{d}_{2:})$$ (where for any $\vec x$, $\vec{x}_{2:}$ is $\vec{x}$ excluding the first term) if $\move{\vec a}{\vec b}(c_1) = d_1$, and returns $\NO$ otherwise,
		\item if $c_1 \notin \{\vec a\}$ and $d_1 \notin \{\vec b\}$, then we 
		return the result of $$\update(\vec a(c_1), \vec b(d_1), \vec{c}_{2:}, \vec{d}_{2:})$$
		\item if $c_1 \in \{\vec a\}$ and $d_1 \notin \{\vec b\}$ (or vice versa), then we return $\NO$.
	\end{itemize}
	We construct $\Gamma(\phi, \theta_1, \gamma_1, \psi, \theta_2, \gamma_2, \vec a, \vec b)$ inductively on $\phi, \psi$ with the following base cases:
	\begin{itemize}
		\item $\phi = (u_1=u_2)$: 
		if $\psi$ is not of the form $(v_1 = v_2)$, then we return $\NO$. 
		Else, we return $\update(\vec a, \vec b, (u_1,u_2)\{\gamma_1\}, (v_1,v_2)\{\gamma_2\})$.
		\item $\phi = X(\vec u)$: there must be a unique $\theta \in \theta_1$  such that $X(\vec c)\in\dom(\theta)$, where $\vec c=\vec u\{\gamma_1\}$.
		There are two valid possibilities:
		\begin{enumerate}
			\item $\psi$ is some $X(\vec v)$.
			\item $\psi$ is some $(\sigma X(\vec x).\psi')(\vec v)$ and $\theta(X(\vec c)) = (\sigma X(\vec x).\phi')(\vec c)$ for $\sigma \in \{\mu, \nu\}$.
		\end{enumerate}
		In either case, 
		let $\vec y_1$ be an ordering of the free variables of $\phi'$, and $\vec y_2$ be an ordering of the free variables of $\psi'$. 
		We return $\update(\vec a, \vec b, (\vec y_1\{\gamma_1\}, \vec c), (\vec y_2, \vec v)\{\gamma_2\})$.
		This base case is similar for when $\psi$ is of the form $X(\vec v)$.
	\end{itemize}
	Next we perform inductive steps:
	\begin{itemize}
		\item if $\phi = \phi_1 \lor \phi_2$ and $\psi \neq \psi_1 \lor \psi_2$, then we return $\NO$. 
		Otherwise, by the inductive hypothesis, we check if $\Gamma(\phi_1, \theta_1, \gamma_1, \psi_2, \theta_2, \gamma_2, \vec a, \vec b)$ is some $(\vec a', \vec b')$. 
		If not, then we return $\NO$.
		If so, by the inductive hypothesis, we return $\Gamma(\phi_2, \theta_1, \gamma_1,\psi_2, \theta_2, \gamma_1, \vec a', \vec b')$.
		Similar can be shown for when $\phi' = \phi_1 \land \phi_2$.
		\item if $\phi = \diam{t, \vec c}\phi'$ and $\psi\neq \diam{t, \vec d}\psi'$ then we return $\NO$. 
		Else, we check the result of $\update(\vec a, \vec b, \vec c\{\gamma_1\}, \vec d\{\gamma_2\})$.
		If the result is $\NO$, then we return $\NO$. 
		Otherwise if the result is some $(\vec a', \vec b')$, then by the inductive hypothesis, we return $ \Gamma(\phi', \theta_1, \gamma_1, \psi', \theta_2,\gamma_2, \vec a', \vec b')$.
		The case where $\phi = \sq{t, \vec c}\phi'$ is similar.
		\item if $\phi = \bigvee_x \phi'$ and $\psi \neq \bigvee_x \psi'$, then we return $\NO$. 
		Else, we select some $c \notin \{\vec a\} \cup \{\vec b\} \cup \supp(\phi, \psi)$ and, by the inductive hypothesis,
		we examine the result of $$\Gamma(\phi', \theta_1, (\gamma_1, \{c/x\}), \psi', \theta_2, (\gamma_2, \{c/x\}), (\vec a, c), (\vec b, c))$$
		If this is $\NO$, then we return $\NO$. 
		Else if the result is some $(\vec a', \vec b')$, then we return $(\vec a'', \vec b'')$, where $\vec a''$ is $\vec a'$ excluding $c$, and similarly for $\vec b''$.
		The case where $\phi = \bigwedge_x \phi'$ and $\phi = \fresh{x}\phi'$ is similar.
		\item if $\phi = (\mu X(\vec x).\phi')(\vec u)$ and $\psi \neq (\mu X(\vec x).\psi')(\vec v)$, then we return $\NO$.
		Otherwise, we check the result of $\update(\vec a, \vec b, \vec u\{\gamma_1\}, \vec v\{\gamma_2\})$. If this is $\NO$, then we return $\NO$. 
		Otherwise if this is some $(\vec a', \vec b')$, then 
		by the inductive hypothesis, we return $$\Gamma(\phi', \theta_1, (\gamma_1, \{\vec u\{\gamma_1\} / \vec x\}), \psi, \theta_2, (\gamma_2, \{\vec v\{\gamma_2\} / \vec x\}), \vec a', \vec b')$$
		Note the case where $\psi$ is of the form $X(\vec v)$ is covered by the base case.
		The case where $\phi = (\nu X(\vec x).\phi')(\vec u)$ is similar.
	\end{itemize}
	For any $(\phi, \gamma_1, \theta_1), (\psi, \gamma_2, \theta_2) \in \clos(\phi_0)$, to determine if there exists a $\vec a, \vec b$ such that $\move{\vec a}{\vec b}\cdot \phi\{\theta_1\}\{\gamma_1\} = \psi\{\theta_2\}\{\gamma_2\}$, it suffices to check if $\Gamma = \Gamma(\phi, \theta_1, \gamma_1, \psi, \theta_2, \gamma_2, \vec a_0, \vec b_0)$ is some $(\vec a', \vec b')$.
	It remains to determine the time complexity of constructing $\Gamma$.  
	The number of inductive calls to $\Gamma$ is bounded by $\min(|\phi|, |\psi|)$, and each such inductive call will internally call $\update$ at most once.
	The complexity of any call $\update(\vec a', \vec b', \vec c, \vec d) = (\vec a'', \vec b'')$ will be $O(|\vec a''| - |\vec a'|)$.  
	Over the course of constructing $\Gamma$, this means the cumulative complexity of all calls to $\update$ will be in $O(\min(|\supp(\phi)| + \|\phi\|, |\supp(\psi)| + \|\psi\|))$
	Thus, the time complexity of constructing $\Gamma$ is:
	\begin{align*}
		&O(\min(|\phi|, |\psi|) + \min(|\supp(\phi)| + \|\phi\|, |\supp(\psi)| + \|\psi\|)) \\
		\le\ &O(|\phi_0| + \supp(\phi_0) + \|\phi_0\|) \\
		\overset{\text{Lemma~\ref{lem:sizebound}}}{\le}\ &O(|\phi_0| + |\phi_0|) \le O(|\phi_0|)
	\end{align*} 
\end{proof}

\subsection*{Proof of \cref{lem:orbitGameSize}(2)}\label{app:complexityboundedgame}
\begin{lemma}
  Given a setup $(\Lcal,\phi_0,s_0,H_0)$ of grade $N$ and $\Gcal=\Gcal_N(\Lcal, \phi_0, s_0, H_0)$,
 the time complexity of constructing $\gorb(\Gcal)$ is in
 	\begin{align*}
	&O(
 	(|\phi_0| + \Phi_\Lcal(|\supp(\phi_0)| + \|\phi_0\|) + \S)
 	\cdot
 	\max(N + 2, |\Scal|
 	) \\&\ \cdot 
 	(|\orb(\Scal)| \cdot|\phi_0|\cdot \frac{(|\supp(\phi_0)|+\|\phi_0\|)!}{|\supp(\phi_0)|!}\cdot (\S + 1)^{|\supp(\phi_0)| + \|\phi_0\| + 1})
 	^2)
 \end{align*}
  \end{lemma}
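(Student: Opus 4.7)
The plan is to construct $\gorb(\Gcal_N)$ by a worklist exploration starting from $\orbs(s_0,H_0',\phi_0)$. We maintain a set $\Ocal$ of orbit representatives discovered so far; at each step we pop one, enumerate a canonical family of candidate successors in $\Gcal_N$, and for every candidate test whether it is nominally equivalent to some member of $\Ocal$, adding it as a new representative otherwise. Letting $B$ denote the orbit bound from part~1 of this lemma, the total work is at most $B^{2}$ multiplied by the maximum branching per position and by the per-comparison cost.

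The per-comparison cost for two positions $(s_1,H_1,\phi_1)$ and $(s_2,H_2,\phi_2)$ is assembled in three stages. First we run the procedure $\Gamma$ of \cref{lem:genPartialPermF} with trivial substitutions (since game formulas are already fully instantiated) to recover, if it exists, a partial permutation $\move{\vec a}{\vec b}$ with $\move{\vec a}{\vec b}\cdot\phi_1=\phi_2$, in time $O(|\phi_0|)$; by \lemmapoint{lem:freshClosProps}{4} the sequences $\vec a,\vec b$ have length at most $M(\phi_0)$. Second, we invoke the oracle $\Psi_\Lcal(\vec a,\vec b,s_1,s_2)$ at cost $\Phi_\Lcal(M(\phi_0))$, which either fails or extends the partial permutation to some $\pi$ with $\pi\cdot s_1=s_2$. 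Third, we verify $\pi\cdot H_1=H_2$: by the fresh-support property of configurations, the names in $H_i\setminus\supp(s_i,\phi_i)$ are mutually indistinguishable, so it suffices to match histories on their at most $M(\phi_0)+\regindex(\Scal)$ named elements and compare the sizes of the anonymous parts, in time $O(N)$. Altogether the per-comparison cost is in $O(|\phi_0|+\Phi_\Lcal(M(\phi_0))+N)$.

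For the branching factor, at a modal position there is at most one successor per orbit of outgoing transitions of $(s,H)$, bounded by $|\Scal|$; at $\bigvee\nolimits_x$, $\bigwedge\nolimits_x$ or $\fresh{x}$ we need only one witness per orbit of names relative to the current position, giving at most $N+2$ candidates (the $N+1$ names present in $H$, plus one fresh witness); the remaining rules produce constant branching. Hence each position spawns at most $\max(N+2,|\Scal|)$ candidates, and combining the three ingredients yields the announced bound. The main obstacle is justifying the completeness of the three-stage equivalence test: any nominal-equivalence witness $\pi$ between two positions must admit a decomposition into a formula match recoverable by $\Gamma$, an oracle-provided extension to states, and a free alignment of the residual history names; the last step is exactly where the fresh-support property of FLTS configurations pays off.
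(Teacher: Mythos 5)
Your overall strategy—worklist construction over orbit representatives, a staged nominal-equivalence test (formula match via $\Gamma$, oracle call for the states, alignment of the residual history), a branching factor of $\max(N+2,|\Scal|)$, and a quadratic dependence on the orbit count—is exactly the argument the paper gives, down to the decomposition of the per-comparison cost into $O(|\phi_0|+\Phi_\Lcal(M(\phi_0))+N)$.

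The one step that would fail as you have written it is the first stage of the comparison: you propose running the procedure $\Gamma$ of \cref{lem:genPartialPermF} \emph{with trivial substitutions, on the fully instantiated game formulas}. The $O(|\phi_0|)$ bound of that lemma comes from an induction on the first components $\phi,\psi$ of closure triples, which are subformulas of $\phi_0$; the fixpoint unfoldings are kept symbolic inside $\theta$ and are matched in the base case for $X(\vec u)$ without ever being expanded. A fully instantiated formula $\phi\{\theta\}\{\gamma\}$ can be substantially larger than $\phi_0$, since each entry of $\theta$ splices a fixpoint subformula into every occurrence of its recursion variable and these substitutions compose, so feeding it to $\Gamma$ with empty substitutions gives a running time bounded only by the size of the instantiated formula, not by $|\phi_0|$—and such a pair is not in $\clos(\phi_0)$ in the first place, so the lemma does not literally apply. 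The repair is exactly what the paper does: store every position of the orbits game as a representative triple $((\phi,\gamma,\theta),s,H)$ with $(\phi,\gamma,\theta)\in\clos(\phi_0)$ and run $\Gamma$ on the triples. With that change, the remainder of your argument (the oracle stage, the $O(N)$ history check exploiting that the names in $H\setminus\supp(s,\phi)$ are mutually interchangeable, and the branching analysis) coincides with the paper's proof and the stated bound follows.
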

\begin{proof}
	We begin by defining the set $T$ of orbit representatives of $\pos(\Gcal)$, that is, a triple $((\phi, \gamma, \theta), s, H)$ represents $\orbs(\phi\{\theta\}\{\gamma\}, s, H)$.
	We build the game inductively on $\pos(\Gcal)$, with the base case being that $\orbs(\phi_0, s_0, H_0') \in \pos(\Gcal)$ for some $H_0'\wellbound{N}(s_0,\phi_0,H_0)$, and initially $T = \{((\phi_0, \varepsilon, \varepsilon), s_0, H_0')\}$.
	
	Suppose we are examining a position $\orbs( \phi,  s,  H)$, 
	with representative $((\phi_i, \gamma_i, \theta_i), s_i, H_i)$, so $(\phi_i, \gamma_i, \theta_i), s_i, H_i) \in T$. 
	For all moves $(\phi_i\{\theta_i\}\{\gamma_i\}, s_i, H_i) \to (\phi'\{\theta'\}\{\gamma'\}, s', H')$ such that $(\phi', \gamma', \theta') \in \clos(\phi_0)$, we need to check if $\orbs(\phi'\{\theta'\}\{\gamma'\}, s', H') \in \pos(\Gcal)$. 
	Thus, we need to check if for any $((\phi_1, \gamma_1, \theta_1), s_1, H_1) \in T$:
	$$\pi \cdot ((\phi', \gamma', \theta'), s', H') = ((\phi_1, \gamma_1, \theta_1), s_1, H_1)$$ 
	for some permutation $\pi$ (i.e., they are nominally equivalent).
	For such a $\pi$ to exist, we must satisfy the following conditions:
	\begin{enumerate}
		\item $|H'| = |H_1|$.
		\item There must exist a minimal $\move{\vec a}{\vec b}$ such that $\move{\vec a}{\vec b} \cdot (\phi', \gamma', \theta') = (\phi_1, \gamma_1, \theta_1)$; 
	 	by Lemma~\ref{lem:genPartialPermF}, this can be constructed in $O(|\phi_0|)$.
		\item From $\vec a, \vec b$ as in the previous condition, the oracle $\Psi_\Lcal(\vec a, \vec b, s_1, s')$ must find a permutation.
		Each such check is in $O(\Phi_\Lcal(|\vec a|)) \le O(\Phi_\Lcal(|\supp(\phi)| + \|\phi\|)) \le O(\Phi_\Lcal(|\supp(\phi_0)| + \|\phi_0\|))$. 
		%
		\item The number of redundant names in the histories must be the same, that is, $|H' \setminus \supp(\phi, s')| = |H_1 \setminus \supp(\phi_1\{\theta\}\{\gamma_1\}, s_1)|$. 
		Each such check is in $O(|\supp(\phi_0)| + \|\phi_0\| + \regindex(\Scal)) = O(N)$.
	\end{enumerate}
	Each such nominal equivalence check is in:
	\[
	O(
	|\phi_0| + \Phi_\Lcal(|\supp(\phi_0)| + \|\phi_0\|) + N
	)
	\]
	If there is no such $((\phi_1, \gamma_1, \theta_1), s_1, H_1) \in T$ that satisfies all above conditions for $((\phi', \gamma', \theta'), s', H')$, then we add $(\phi', \gamma', \theta'), s', H')$ to $T$.
		
	The moves that can be made from each $\orbs(\phi, s, H)$ would need to be checked at most once, thus the maximum number of positions that will need to be checked is bounded by the maximum number of moves that reach unique targets.
	This can be quantified as $\max(\supp(\phi, s) + 2, \ndbb(\Lcal))$, that is:
	\begin{enumerate}
		\item The maximum number of unique names in the support of $\phi, s$, plus one additional name that could be in the history, and one fresh name.
		\item The maximum number of unique moves at any given position.
	\end{enumerate}
	The number of elements that a nominal equivalence check required for each such target is bounded by $|\pos(\Gcal)|$.
	Finally, the above process is done for every position in $\pos(\Gcal)|$, 
	 thus the final complexity is in:
	\begin{align*}
 		&O(
 		(|\phi_0| + \Phi_\Lcal(|\supp(\phi_0)| + \|\phi_0\|) + \S) 			\cdot
 			\max(\supp(\phi, s) + 2, \ndbb(\Lcal)
 		)\cdot |\pos(\Gcal)|^2)
 		\\
 		\le\ &O(
 		(|\phi_0| + \Phi_\Lcal(|\supp(\phi_0)| + \|\phi_0\|) + \S)
 		\cdot
 		\max(N + 2, |\Scal|
 		) \\&\ \cdot 
 		(|\orb(\Scal)| \cdot|\phi_0|\cdot \frac{(|\supp(\phi_0)|+\|\phi_0\|)!}{|\supp(\phi_0)|!}\cdot (\S + 1)^{|\supp(\phi_0)| + \|\phi_0\| + 1}\cdot(1+o(1)))
 		^2)
 	\end{align*}
 \end{proof}

\subsection{Proof of \cref{thm:MCinFHML}}
\MCinFHML*
  \begin{proof}
To verify whether $(s, H) \in \fstates$ satisfies $\phi_0$, we must check whether $(s, H) \in \sem[\xi_0]{\phi_0}$ where $\xi_0$ is an empty $\Ucal$-variable assignment. 
First, we set $\phi = !(\phi_0)$, which (by Lemma~\ref{lem:negfreesize}) will have size bounded by $|\phi_0|$. 
From this, we can construct the orbits parity game $\gorb(\Gcal_\S)$ of $\Gcal_\S = \Gcal_\S(\Lcal, \phi, s, H)$ where $\S = |\phi_0| + \|\phi_0\| + \regindex(\Scal)$, 
By Theorem~\ref{thm:FnomSatPar}, we have that if $(s, H) \in \sem[\xi_0]{\phi_0}$ then defender has a winning strategy from $\fgame = \fgame(\Lcal, \phi, s, H)$.
Using Lemma~\ref{lem:boundingisnombisim}, the relation $\Rcal_{(\Lcal, \phi, s, H)} \subseteq \pos(\fgame) \times \pos(\Gcal_\S)$ is a nom-bisimulation, 
and by Lemma~\ref{lem:nombisimequiv} then defender wins in $\Gcal$ from $(s, H, \xi_0, \phi)$ iff they win in $\Gcal_\S$ from $(s, H, \xi_0, \phi)$.
Similarly, using Lemmaa~\ref{lem:parityBisim} and Corollary~\ref{cor:winningNom} then defender wins from $\Gcal_\S$ from $(s, H, \xi_0, \phi)$ iff defender wins in $\gorb(\Gcal_\S)$ from $\orbs(s, H, \xi_0, \phi)$. 
Then, by Lemma~\ref{lem:fnomparsize}, $\gorb(\Gcal_\S)$ can be constructed in:
    \begin{align*}
	&O(
	(|\phi_0| + \Phi_\Lcal(|\supp(\phi_0)| + \|\phi_0\|) + \S)
	\cdot
	\max(N + 2, |\Scal|
	) \\&\ \cdot 
	(|\orb(\Scal)| \cdot|\phi_0|\cdot \frac{(|\supp(\phi_0)|+\|\phi_0\|)!}{|\supp(\phi_0)|!}\cdot (\S + 1)^{|\supp(\phi_0)| + \|\phi_0\| + 1}\cdot(1+o(1)))
	^2)
\end{align*}
and has size:
\[
n \cdot|\phi_0|\cdot \frac{(|\supp(\phi_0)|+\|\phi_0\|)!}{|\supp(\phi_0)|!}\cdot (\S + 1)^{|\supp(\phi_0)| + \|\phi_0\| + 1}\cdot(1+o(1))
\]
The winning regions of $\gorb(\Gcal_\S)$ can be calculated in time bounded by (cf.
~\cite{CaludeJKLS22}):
\[
O((n \cdot|\phi_0|\cdot \frac{(|\supp(\phi_0)| + \|\phi_0\|)!}{|\supp(\phi_0)|!}\cdot (\S + 1)^{|\supp(\phi_0)| + \|\phi_0\| + 1})^{\log(d) + 6})
\]

And so, the final complexity is in:
\begin{align*}
	&O((n \cdot|\phi_0|\cdot \frac{(|\supp(\phi_0)| + \|\phi_0\|)!}{|\supp(\phi_0)|!}\cdot (\S + 1)^{|\supp(\phi_0)| + \|\phi_0\| + 1})^{\log(d) + 6}
	\\ &\ + (
	(|\phi_0| + \Phi_\Lcal(|\supp(\phi_0)| + \|\phi_0\|) + \S)
	\cdot
	\max(N + 2, |\Scal|
	) \\&\ \cdot 
	(n \cdot|\phi_0|\cdot \frac{(|\supp(\phi_0)|+\|\phi_0\|)!}{|\supp(\phi_0)|!}\cdot (\S + 1)^{|\supp(\phi_0)| + \|\phi_0\| + 1})
	^2)
	)
	\\
	=\ &O((n \cdot|\phi_0|\cdot \frac{(|\phi_0|)!}{|\supp(\phi_0)|!}\cdot (\S + 1)^{|\phi_0| + 1})^{\log(d) + 6}
	\\ &\ + (
	(|\phi_0| + \Phi_\Lcal(|\phi_0|) + \S)
	\cdot
	\max(N + 2, |\Scal|
	) \\&\ \cdot 
	(n \cdot|\phi_0|\cdot \frac{(|\phi_0|)!}{|\supp(\phi_0)|!}\cdot (\S + 1)^{|\phi_0| + 1})
	^2)
	)
\end{align*}
\end{proof}

\end{document}